%% file: main.tex
\documentclass[11pt]{article}
\usepackage{sabee}
\patchcmd{\abstract}{\small}{\normalsize}{}{}

\AtEveryBibitem{%
  \clearlist{language}%
  \clearlist{location}
}
\AtBeginBibliography{%
  \setlength{\emergencystretch}{2em}%
}
\AtEveryBibitem{ %
    \clearfield{day}
    \clearfield{month}
    \clearfield{series}
    \clearfield{venue}
    \clearname{editor}
    \clearlist{publisher}
    \clearlist{location} %
    \clearfield{venue}
    \clearfield{issn}
    \clearfield{isbn}
    \clearfield{urldate}
    \clearfield{eventdate}
    \clearfield{pages}
    \clearfield{number}
    \clearfield{volume}
}

\hypersetup{
  pdftitle={A classical oracle separation between QMA and QMA(2)},
  pdfauthor={Sabee Grewal and Saachi Mutreja and Kabir Tomer}
}
\allowdisplaybreaks
\setlist{
    itemsep=2pt plus .5pt minus .5pt,
    topsep=4pt plus 1pt minus 1pt,
    parsep=0pt,
    partopsep=0pt,
    labelindent=.5em,
    leftmargin=*
  }

\newcommand{\sgn}{\operatorname{sgn}}
\newcommand{\RE}{\mathsf{RE}}
\newcommand{\MIP}{\mathsf{MIP}}
\newcommand{\QMAT}{\mathsf{QMA(2)}}
\newcommand{\NEXP}{\mathsf{NEXP}}

\newcommand{\QMA}{\mathsf{QMA}}

\newcommand{\QCMA}{\mathsf{QCMA}}

\newcommand{\SBQP}{\mathsf{SBQP}}

\begin{document}
\title{A classical oracle separation between $\QMA$ and $\QMAT$}
\author{%
  Sabee Grewal\thanks{IBM Research and Columbia University. \href{mailto:sabee@ibm.com}{\texttt{sabee@ibm.com}}}
  \and
  Saachi Mutreja\thanks{Columbia University. \href{mailto:sm5540@columbia.edu}{\texttt{sm5540@columbia.edu}}}
    \and 
    Kabir Tomer\thanks{UIUC \href{mailto:ktomer2@illinois.edu}{\texttt{ktomer2@illinois.edu}}
  }
}
\date{}
\maketitle

\begin{abstract}
We exhibit a classical oracle relative to which $\QMA \neq \QMAT$. The separation is based on a separable Hamiltonian problem defined by signed edge constraints on a graph. The YES instances have zero-energy product states, while the NO instances, constructed using the antisymmetric subspace, have entangled zero-energy states but no low-energy product states.
For the $\QMA$ lower bound, we use compressed-oracle techniques to track the information learned by a quantum computation
making few queries to a YES instance. 
The key step is to map the resulting superpositions of partial oracle records to the NO family without significantly changing the verifier's behavior.
\end{abstract}

\section{Introduction}\label{sec:intro}

Quantum information can affect proof systems in seemingly unpredictable ways.
In a stunning example, we know that $\MIP^* = \RE$, meaning that entanglement among multiple provers enables a polynomial-time classical verifier to verify undecidable languages~\cite{JNV+20}.
This is all the more surprising given that, in the early study of the model, showing that $\MIP^* \supseteq \MIP$ was itself a breakthrough result~\cite{IV12}.
By contrast, the computational power of quantum interactive proofs and quantum refereed games matches that of their classical counterparts~\cite{Sha92,FK97,GW07,JJuW11}.

In this work, we investigate a resource with no direct classical analogue: \emph{unentanglement}.
Its simplest complexity-theoretic manifestation is the complexity class $\QMAT$, in which a polynomial-time quantum verifier receives two quantum proofs that are promised to be unentangled. 
Compared with $\QMA$, the essential difference is a restriction on Merlin: the prover is not allowed to create entanglement across some fixed bipartition of the proof space.
Whether this restriction increases verification power is closely tied to our understanding of entanglement and separability~\cite{JWL26}, and thus understanding the power of $\QMAT$ is regarded as one of the major open problems in quantum complexity theory.

Despite being introduced around a quarter century ago~\cite{KMY03}, we still cannot locate $\QMAT$ in the complexity landscape beyond the essentially trivial containments $\QMA \subseteq \QMAT \subseteq \NEXP$.
Worse yet, there isn't even a clear consensus as to where its power should lie: some fully believe that $\QMA = \QMAT$, while others suspect that $\QMAT = \NEXP$. 

Let us discuss both possibilities, starting with the possibility that $\QMAT=\NEXP$.
There are nearby settings in which unentanglement provably yields substantial additional verification power~\cite{bassirian2024quantummerlinarthurinternallyseparable,grewal2025unentanglementpostmeasurementbranchingquantum}.
However, these results exploit additional structure or resources beyond unentanglement, and therefore do not say anything about the power of unentanglement on its own.
More broadly, known routes to showing $\QMAT = \NEXP$ appear to require strong structural assumptions about the proofs, and removing these assumptions would seem to require the verifier to certify such structure from otherwise arbitrary unentangled proofs. 
There is reason to be pessimistic about this approach. In particular, even the seemingly basic task of testing whether an unknown quantum state is a low-degree phase state---the quantum analogue of one of the central objects in classical PCP constructions---was recently shown to require many copies of the state~\cite{alrabiah2026lowdegreetestsquantumstates}. 

Turning to the other end of the spectrum,
the evidence distinguishing $\QMAT$ from $\QMA$ has recently become more substantial.
In particular, Bostanci, Grewal, Haferkamp, Huang, Hwang, Natarajan, and Nirkhe~\cite{BGH+26} recently exhibited a \emph{quantum oracle} relative to which $\QMA \neq \QMAT$.
Their separation gives a black-box problem specified by an unknown unitary, for which two unentangled proofs enable efficient verification, whereas no polynomial-query $\QMA$ verifier can succeed. 
Thus, in this quantum black-box setting, the promise of unentanglement is a genuine computational resource. 

This oracle separation can be interpreted as a barrier to proving $\QMA = \QMAT$.\footnote{In addition to presenting barriers, an oracle separation must isolate a setting where two computational models behave differently, which can itself yield insight into the source of their differences. We will discuss this more shortly.} 
The separation of~\cite{BGH+26} rules out any generic transformation of a $\QMAT$ protocol into a $\QMA$ protocol that continues to work in the presence of arbitrary quantum oracles.
More concretely, it rules out a natural ``disentangle-then-verify'' strategy: take a single $\QMA$ witness, efficiently transform it into two unentangled proofs, and then run the $\QMAT$ verifier.

A quantum-oracle separation leaves open whether the two classes can still be separated relative to a \emph{classical} oracle. 
A recent example shows how this distinction can matter a lot for $\QMA$. Aaronson exhibited a quantum oracle relative to which $\QMA \neq \QMA_1$~\cite{Aar09}, whereas the recent proof that $\QMA = \QMA_1$ holds relative to \emph{every} classical oracle~\cite{grewal2026qmaperfectcompleteness,liu2026achievingperfectcompletenessone} (see also~\cite{agarwal2026nonstandardoraclesboundederrorcomplexity} for related examples). 
Indeed, a classical-oracle separation imposes a stronger relativization barrier on any proof that $\QMA=\QMAT$. 
Such separations can also be substantially harder to obtain. For example, the quantum-oracle separation between $\QMA$ and $\QCMA$~\cite{AK07} preceded its classical-oracle separation by nearly two decades~\cite{BHNZ26}.
One reason for this difficulty is that an arbitrary quantum oracle can hide a complicated quantum object behind a single black-box unitary, whereas representing the same phenomenon by a classical oracle necessarily exposes some classical description of that object, which an algorithm may be able to exploit.
This raises the natural question of whether the separation between $\QMA$ and $\QMAT$ survives relative to a classical oracle.
Our main result shows that it does.

\begin{theorem}
\label{thm:main}
There exists a classical oracle $A$ such that
\[
\QMA^A\neq\QMAT^A.
\]
\end{theorem}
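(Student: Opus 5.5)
The plan is to build a unary oracle problem out of the signed-edge separable Hamiltonian sketched in the abstract, and then diagonalize in the usual way. For each input length $n$, the oracle $A$ encodes a graph on vertex set $[N]$ with $N=2^n$, given as a degree-$d$ adjacency oracle. Each edge $\{u,v\}$ carries a sign $s_{uv}\in\{\pm1\}$. The associated local constraint penalizes a two-vertex state unless its amplitudes satisfy $\alpha_u = s_{uv}\alpha_v$.

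The two instance types are as follows.
\begin{itemize}
\item In a \textbf{YES} instance, the signs are \emph{consistent}: $s_{uv}=\sigma_u\sigma_v$ for some hidden labeling $\sigma$. Then the product state $\ket{\psi}\otimes\ket{\psi}$ with $\ket{\psi}=N^{-1/2}\sum_u\sigma_u\ket u$ has zero energy.
\item In a \textbf{NO} instance, the graph is an expander and the signs are chosen so that the only zero-energy states are entangled. Following the abstract, I would take these signs from the antisymmetric subspace, e.g.\ constraints acting on pairs $(u,v)$ whose zero-energy space is spanned by $\ket{u}\ket{v}-\ket{v}\ket{u}$ patterns. Combining expansion with a product-test argument, every product state should then have energy $\ge 1/\mathrm{poly}$.
\end{itemize}

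Define $L_A=\{1^n:\text{the $n$-th instance is YES}\}$.

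\textbf{Upper bound.} The $\QMAT$ verifier receives $k=\mathrm{poly}$ copies of each unentangled proof and does the following.
\begin{enumerate}
\item It uses a swap/product test so that, via the product test of Harrow--Montanaro, the proofs can be assumed close to $\ket\psi^{\otimes}$.
\item It samples a random vertex $u$ and edge $(u,v)$, and measures the Hamiltonian term using one oracle query, in superposition over vertices.
\end{enumerate}
Completeness comes from the consistent labeling. Soundness comes from the product-state energy gap on NO instances together with the soundness of the product test.

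\textbf{Lower bound.} Fix a $\QMA$ verifier $V$ making $T=\mathrm{poly}(n)$ queries, and suppose it accepts some witness $\ket\phi$ with probability $\ge 2/3$ on a random YES instance. I want to show that some NO instance makes $V$ accept $\ket\phi$ with probability $>1/3$. To do this, model the random signs (and labeling $\sigma$) with Zhandry's compressed oracle. After $T$ queries, the joint state is a superposition over partial records containing at most $T$ entries. The key step is a map $\Phi$ from partial YES records to partial NO records that satisfies three conditions:
\begin{enumerate}
\item it is (near-)isometric on the reachable span;
\item it commutes with the compressed query operator, up to error $\mathrm{poly}(T)/\sqrt N$;
\item it preserves the witness register.
\end{enumerate}
The intuition is that a partial record of $T\ll\sqrt N$ edges on an expander is, with high probability, a forest. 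On a forest, consistent signs and antisymmetric-subspace signs are indistinguishable locally, since any sign pattern on a forest is consistent. Consequently, records that contain no cycle can be relabeled. A hybrid argument then shows that the acceptance probabilities on the YES and NO distributions differ by $O(T^2/\sqrt N)$. Because the witness is fixed before the queries, a single $\ket\phi$ cannot distinguish them, which rules out a $\QMA$ verifier. Finally, a standard diagonalization over all polynomial-time verifiers, choosing the instance at each length fresh, yields $A$.

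\textbf{Main obstacle.} The hard step is constructing $\Phi$ so that it commutes with the compressed oracle's decompression step. The difficulty is that the witness may be correlated with the hidden global structure: the YES labeling $\sigma$ is a global object that $\ket\phi$ can encode, including the phase pattern $\ket\psi$ itself. So I would first try to put the witness's dependence into the oracle's purification. That is, I would treat the instance as sampled, let $\ket\phi$ depend on it, and bound the advantage using the fact that $T$ queries reveal only $O(T)$ edges. I would also look for a random-gluing argument in the style of Bostanci--Haferkamp--Natarajan--Zhang that couples each YES instance with a NO instance agreeing on all cycle-free substructures.
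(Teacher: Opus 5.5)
Your overall architecture (a signed-edge separable Hamiltonian, compressed oracles, a map from YES records to NO records that approximately commutes with queries, then diagonalization) matches the paper's. However, two essential ingredients are missing, and as stated the plan does not go through.

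\textbf{The NO family.} Suppose the NO graph is an expander, in particular connected. Then the constraints $\alpha_u=s_{uv}\alpha_v$ determine every amplitude from one vertex along a spanning tree, so the zero-energy space is at most one-dimensional and spectrally gapped. Such a family is in $\QMA$. The verifier takes two witness registers and runs either the energy test on one of them or the Harrow--Montanaro product test on both. Passing the energy test forces the witness close to two copies of the unique ground state. That state has product fidelity at most $3/4$ because $\lambda_{\mathrm{sep}}\ge 1/4$, so the product test rejects with constant probability. What you need instead is a NO graph that splits into many \emph{hidden} blocks, so that its zero-energy space is a large subspace far from every product state. The paper builds this as follows:
\begin{itemize}
\item Vertices are pairs $(x_A,x_B)\in\mathcal X_A\times\mathcal X_B$; this is where the bipartition lives. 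In your version the constraints act on a single $[N]$-register while the YES state is $\ket{\psi}\otimes\ket{\psi}$, so the bipartition is undefined.
\item Each coordinate gets a hidden random label pair, and vertices whose four labels order a $4$-set $Q$ form the block $\mathcal B_Q$.
\item Each $\pi_p$ permutes within blocks with signs $\chi(x)\varepsilon(x)\chi(y)\varepsilon(y)$, and the remainder block gets sign $-1$.
\end{itemize}
The zero-energy space is then spanned by the encoded antisymmetric states $\ket{\psi_Q}$. Combining $h_{\mathrm{Sep}}(P_N)\le 1/3$ with class-size concentration and matrix Hoeffding gives a constant separable gap. So two unentangled proofs and one query suffice, with no product test. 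Because the hidden structure now lives in the permutations, the lower bound must also compress random permutations with inverse access (Carolan--Majenz), not just the signs.

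\textbf{The witness.} Your hybrid argument fixes $\ket{\phi}$ independently of the oracle, which only gives a $\mathsf{BQP}$-type bound. Completeness supplies, for each YES oracle, a witness that may depend on all of it (the honest one is $\ket{h_A}\ket{h_B}$), and no single $\ket{\phi}$ need be accepted on a random YES instance. Folding the witness into the oracle purification does not help either. An oracle-correlated witness has no small-record compressed representation, so neither the $O(T)$ bound nor a transfer map applies.

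The paper instead starts from the maximally mixed witness, which is oracle independent, and applies the postselected acceptance operator $A_+$ (at most $2q$ queries) $r=m+1$ times. Completeness gives $\|A_+^r\phi_0\|^2\ge 2^{-m}(2/3)^{2r}$. The transfer map, together with soundness in operator form ($A(\omega^-)\preceq\id/3$ on good NO parameters), gives $\|A_+\phi\|\le(1/3+o(1))\|\phi\|$ for every $\phi$ supported on small records. These two bounds contradict each other. This argument needs the transfer guarantees with \emph{relative} error on every small-record vector, because the postselected states have exponentially small norm and may concentrate on atypical records.

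So ``the record is a forest with high probability'' is not enough. The completion must also handle records that do not peel; the paper puts their coordinates in $\mathcal B_{\{1,2,3,4\}}$ with labels $(1,2)$ and $(3,4)$. Moreover, the real work is not sign consistency on forests. It is reweighting the hidden labelings so that every recorded entry lies inside one NO block, using amplitudes that are independent of peeling orientation (the $\gamma_F$ sorting-sign factor) and that approximately commute with decompression.
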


Our construction builds on the symmetric--antisymmetric geometry underlying the quantum-oracle separation of~\cite{BGH+26}. Their oracle tests membership in a hidden subspace of a bipartite system. In the YES case, the subspace is spanned by a product state; in the NO case, it is a locally rotated copy of the antisymmetric subspace on four registers, two on each side, which stays a constant distance from every product state. A product witness therefore certifies the YES case, while~\cite{BGH+26} show via the unitary polynomial method~\cite{she-yuen-23} that no $\QMA$ verifier can distinguish the two cases.

The main obstacle to a classical-oracle analogue is to implement this membership test with a classical oracle that does not reveal the subspace. We start from a basic observation: the uniform superposition over a set is invariant under permutations of that set, so an oracle that applies random permutations of a hidden set lets a quantum verifier test for this state without learning the set. This idea appears in Lutomirski's component mixers~\cite{Lut11} and in the graph-oracle problems of Natarajan and Nirkhe~\cite{NN24} and Liu, Mutreja, and Yuen~\cite{LMY25}. Our main new ingredient is to use \emph{signed} superpositions and signed permutations, which let a classical oracle test for a hidden copy of the antisymmetric subspace. Since a classical oracle reveals every sign it returns, we also randomize the signs so that individual signs look the same in both cases.

Instantiating the construction of~\cite{BGH+26} with these tests yields a separable Hamiltonian problem on a signed graph. The vertices are the basis states of the two proof registers, the oracle gives access to the edges through signed permutations, and each edge asks the amplitudes at its endpoints to be equal or opposite. The question is whether the minimum energy over \emph{product} states is small or large. Our hard YES instances are frustration-free and admit a zero-energy product state. Strikingly, our hard NO instances are also frustration-free: they admit zero-energy states, but these states have antisymmetric structure and are necessarily entangled; indeed, we prove that every product state has energy bounded below by a constant. Two unentangled Merlins can therefore convince a verifier that makes a single query, so the problem is in $\QMAT$.

The main technical challenge is proving the $\QMA$ lower bound. The existence of entangled zero-energy states in our NO instances only rules out the natural strategy of estimating the energy; a proof of \cref{thm:main} must rule out an arbitrary $\QMA$ verifier. Our proof is a hybrid argument in the spirit of~\cite{BBBV}, built on the compressed-oracle framework of Zhandry~\cite{Zha19} and its recent extension to random permutations~\cite{Car25,CM26}. This framework represents what a few quantum queries reveal about a random oracle as a superposition of small partial records. We show that the records produced by queries to a YES oracle remain compatible with the NO family, and we handle witnesses that depend on the oracle with the standard technique of guessing the witness~\cite{MW05,Aar12}. We outline the proof in \cref{sec:techoverview}.

We conclude with two open questions suggested by our construction.
First, our signed-graph formulation is closely related to the graph-oracle problems studied by Natarajan and Nirkhe~\cite{NN24} and Liu, Mutreja, and Yuen~\cite{LMY25}.  
It is natural to ask whether the additional signed structure used here can be combined with those constructions to obtain a simple graph-based classical-oracle separation between $\QMA$ and $\QCMA$.  
Such a separation is now known~\cite{BHNZ26}, but the existing construction is based on a spectral variant of Forrelation; a graph-theoretic separation could give a substantially different and potentially simpler explanation for the advantage of quantum witnesses over classical ones.

Second, we highlight a striking difference between our separation and the quantum-oracle separation of~\cite{BGH+26}.  
Their lower bound applies even to $\SBQP$, and hence gives a quantum oracle relative to which $\QMAT \not\subseteq \SBQP$.
By contrast, our separating problem admits an $\SBQP$ algorithm, and therefore cannot yield such a separation relative to a classical oracle. 
Does there exist a classical oracle $A$ for which $\QMAT^A\not\subseteq\SBQP^A$?
Or does $\QMAT^A \subseteq \SBQP^A$ hold relative to every classical oracle $A$?  
In our view, resolving this question is one of the most interesting directions for understanding the power of unentanglement.

\pdfbookmark[1]{Acknowledgments}{acknowledgments}
\paragraph*{Acknowledgments.}
We thank Henry Yuen for many valuable conversations. SM would like to thank Aaron (Louie) Putterman for setting up a harness and running the prompts which eventually generated the initial draft of this oracle separation. 
SG is supported by the Herman Goldstine Memorial Postdoctoral Fellowship.

\pdfbookmark[1]{Tool and computational resource disclosure}{tool-disclosure}
\paragraph*{Tool and computational resource disclosure.}
The original proof of this result was obtained through approximately 10 rounds of interaction with ChatGPT 6 Astra in Work mode. 
Directing the model to the expander 
distinguishing problem of~\cite{NN24} and its simplified formulation, Components, in~\cite{LMY25} appears to have been an important step in this process (and, in fact, the AI-generated construction used perfect matchings like these problems do, instead of cycle covers, which are used in the present paper).
The authors subsequently verified the mathematical arguments independently and revised and reorganized the proofs and exposition.
They take full responsibility for the correctness, attribution, and presentation of this work. For transparency, the original AI-generated manuscript is available \href{https://drive.google.com/file/d/1keazrUzaBOkmGGnFRV_goQm275ReXG3e/view?usp=sharing}{here}.

\section{Technical Overview}
\label{sec:techoverview}
 
\subsection{Constructing the classical oracle}

\paragraph{The quantum-oracle construction.}
Our starting point for the classical oracle separation is the problem used by~\cite{BGH+26} for the quantum oracle separation, which is a restriction of the Entangled Subspace problem of She and Yuen~\cite{she-yuen-23}. Consider four registers, each with computational basis indexed by a set of labels $[N]$, where the first two registers belong to subsystem $A$ and the last two belong to subsystem $B$. The oracle is a reflection about a hidden subspace of this space, so a single query lets the verifier measure whether a state lies in the subspace.
In particular their YES and NO subspaces are defined by the corresponding projectors
\[
P_{\mathrm{YES}}=(U_A\otimes U_B)\ketbra{+,+}{+,+}(U_A\otimes U_B)^\dagger
\qquad\text{and}\qquad
P_{\mathrm{NO}}=(U_A\otimes U_B)\,\Pi_{\mathrm{alt}}\,(U_A\otimes U_B)^\dagger,
\]
where $U_A$ and $U_B$ are independent Haar-random unitaries, $\ket{+,+}$ is the uniform superposition over the four registers, and $\Pi_{\mathrm{alt}}$ is the projector onto the antisymmetric subspace of the four registers. As a result the YES subspace is spanned by a single product state, while every product state has squared overlap at most $1/3$ with the NO subspace~\cite[Lemma~4.1]{BGH+26}. Our overall goal is to replace these quantum testing oracles with classical ones.
\paragraph{Testing signed superpositions with permutations.}
A natural way to let a classical oracle test a quantum state goes back to quantum money from knots~\cite{FGHLS12}, where part of the verification checks that the money state is invariant under a classical Markov chain. Lutomirski formalized this approach through component mixers~\cite{Lut11}, and related graph constructions appear in prior work on oracle separations~\cite{NN24,LMY25}. The basic idea is as follows.
 
Let $S$ be a hidden set of strings, with a hidden sign $\phi(x)\in\{\pm1\}$ for each $x\in S$, and consider the signed uniform superposition
\[
\ket{S^\phi}:=\frac{1}{\sqrt{|S|}}\sum_{x\in S}\phi(x)\ket{x}.
\]
A signed permutation of $S$ is a permutation $\pi$ of $S$ together with a sign $\sigma(x)\in\{\pm1\}$ for each $x\in S$; it acts as the unitary $W\ket{x}:=\sigma(x)\ket{\pi(x)}$. When the signs are set to be $\sigma(x):=\phi(x)\phi(\pi(x))$, then $W$ preserves $\ket{S^\phi}$ for every $\pi$. Conversely, it can be shown that a state on $S$ that is invariant under these signed permutations for every $\pi$ must have amplitudes proportional to $\phi$ throughout $S$, and hence must be proportional to $\ket{S^\phi}$. We can therefore test for $\ket{S^\phi}$ by checking invariance under signed permutations. Given classical oracle access to $\pi$, to its inverse, and to the signs $\sigma(x)$, a quantum algorithm can apply $W$ and use a simple interference test that rejects a state $\ket{\psi}$ with probability $\frac14\bigl\|\ket{\psi}-W\ket{\psi}\bigr\|^2$ (see \cref{prop:qma2-upper}). If the oracle provides many independent random signed permutations of $S$ and the verifier tests a random one of them, then every state far from $\ket{S^\phi}$ is rejected with constant probability. This allows a verifier to use classical oracles to test for $\ket{S^\phi}$ without explicitly requiring $S$ or $\phi$ themselves.
 
The same reasoning applies to a hidden partition $X=S_1\sqcup\cdots\sqcup S_t$ with signs $\phi$ on $X$. Random signed permutations that preserve each part of the partition test membership in
\[
\operatorname{span}\{\ket{S_1^\phi},\ldots,\ket{S_t^\phi}\}.
\]
They force the amplitudes to be proportional to $\phi$ within each part, while allowing arbitrary superpositions across parts. This subspace test has been considered in previous works using only positive superpositions, where $\phi\equiv1$; we extend the test to use signed superpositions, for reasons we explain later.
 
\paragraph{Replacing Haar-random unitaries with a signed superposition encoding.}
The quantum oracle separation consisted of a simple subspace obfuscated using Haar random unitaries. While the underlying subspaces are easy to test, undoing the obfuscation involves inverting these unitaries, and therefore the overall test seems to require a quantum oracle. The key idea behind the construction is to replace the random unitaries with a different obfuscation procedure. In particular, we map each pair of labels to a random signed superposition over a hidden subset of strings. These superposition states are natural stand-ins for Haar random states, and are in fact indistinguishable from them to polynomially bounded algorithms~\cite{GTB23,JMW23}, while also being amenable to testing via classical oracles. 

 
Concretely, set $N=2^k$, and for each party $P\in\{A,B\}$, let $\mathcal{X}_P$ be a set of $N^3$ strings. The encoding $E_P$ partitions $\mathcal{X}_P$ into sets $C^P_{a,b}$, one for each pair of labels $(a,b)\in[N]^2$, and maps each pair of labels to the signed superposition over its set:
\[
E_P\ket{a,b}:=\frac{1}{\sqrt{|C^P_{a,b}|}}\sum_{x\in C^P_{a,b}}h_P(x)\ket{x}.
\]
The sets are chosen by assigning each string a uniformly random pair of labels, and the signs $h_P(x)\in\{\pm1\}$ are uniformly random. Since the sets are disjoint, $E_P$ is an isometry. Since $E_A\otimes E_B$ acts separately on $A$ and $B$, it maps product states to product states. The encoded states are supported on the set $V:=\mathcal{X}_A\times\mathcal{X}_B$ of pairs of strings, where $E_A\otimes E_B$ maps $\ket{a,b}\ket{c,d}$ to the signed superposition over $C^A_{a,b}\times C^B_{c,d}$ with signs $\chi(x_A,x_B):=h_A(x_A)h_B(x_B)$. For intuition, we assume for now that all sets $C^P_{a,b}$ have the same size.
 
\paragraph{Testing the encoded subspaces.}
Replacing $U_A\otimes U_B$ by $E_A\otimes E_B$ in the projectors $P_{\mathrm{YES}}$ and $P_{\mathrm{NO}}$, the verifier should test membership in the span of $(E_A\otimes E_B)\ket{+,+}$ in the YES case, and in the image of the antisymmetric subspace under $E_A\otimes E_B$ in the NO case. We now show that signed permutations, which a classical oracle can provide, implement both tests.

\paragraph{The YES test.} 
In the YES case, the encoded state is a single signed superposition, over all of $V$:
\[
(E_A\otimes E_B)\ket{+,+}=\frac{1}{\sqrt{|V|}}\sum_{x\in V}\chi(x)\ket{x}=\ket{h_A}\ket{h_B},
\]
where $\ket{h_P}$ is the signed superposition over $\mathcal{X}_P$ with signs $h_P$. The oracle can therefore provide $N$ independent, uniformly random permutations $\pi_1,\ldots,\pi_N$ of $V$ and their inverses, along with access to the signs
\[
\sigma_p(x):=\chi(x)\chi(\pi_p(x)).
\]
The verifier picks a random signed permutation thus defined and checks that the state is invariant under the corresponding operation. The state $\ket{h_A}\ket{h_B}$ passes the test while states far from it are rejected with constant probability. 
 
\paragraph{The NO test.}
In the NO case, we must test a whole subspace. The basis vectors for the antisymmetric subspace can be indexed by four-element sets of form $Q=\{q_1<q_2<q_3<q_4\}\subseteq[N]$, where the corresponding basis state is defined as:
\[
\ket{\mathrm{Alt}_Q}:=\frac{1}{\sqrt{24}}\sum_{\tau\in S_4}\operatorname{sgn}(\tau)\,\ket{q_{\tau(1)},q_{\tau(2)}}_A\ket{q_{\tau(3)},q_{\tau(4)}}_B.
\]
Thus every ordering of the four labels appears once, with a positive or negative sign according to the parity of the permutation producing that ordering. Under the encoding map, $\ket{\mathrm{Alt}_Q}$ becomes a signed superposition over the hidden set
\[
\mathcal{B}_Q:=\bigcup C^A_{\ell_1,\ell_2}\times C^B_{\ell_3,\ell_4},
\]
where the union is over the $24$ orderings $(\ell_1,\ell_2,\ell_3,\ell_4)$ of $Q$. Its sign at $x = (x_A, x_B)$ is $\chi(x)\varepsilon(x)$, where $\varepsilon(x)$ is the sign of the permutation that sorts the four labels of $x$, and $\chi(x)$ is defined above as $h_A(x_A)h_B(x_B)$. That is, under our assumption of equal set sizes, $(E_A\otimes E_B)\ket{\mathrm{Alt}_Q}$ equals
\[
\ket{\psi_Q}:=\frac{1}{\sqrt{|\mathcal{B}_Q|}}\sum_{x\in\mathcal{B}_Q}\chi(x)\varepsilon(x)\ket{x}.
\]
 
The sets $\mathcal{B}_Q$, which we call blocks, are disjoint, so the encoded antisymmetric subspace is the span of signed superpositions over the parts of a partition, and the partition version of the test applies. For each $p\in[N]$, the permutation $\pi_p$ independently permutes every block uniformly at random, with signs $\sigma_p(x)=\chi(x)\varepsilon(x)\chi(y)\varepsilon(y)$, where $y=\pi_p(x)$.
Invariance under the induced signed permutations forces the amplitudes to be proportional to $\chi\varepsilon$ within each block, while allowing arbitrary superpositions across blocks, so the states that pass every test are exactly the superpositions of the states $\ket{\psi_Q}$. The remaining elements of $V$, whose four labels are not all distinct, form a ``remainder" block. Antisymmetric states cannot be supported on such labels, so we set $\pi_p$ to also permute the remainder block uniformly at random, with signs $\sigma_p(x)=-\chi(x)\chi(y)$, which penalize every state supported there. In summary, the NO signs are 
\[
\sigma_p(x):=\chi(x)\chi(y)\,s(x,y),\qquad\text{where}\qquad
s(x,y):=\begin{cases}\varepsilon(x)\varepsilon(y), & x,y\in\mathcal{B}_Q,\\ -1, & x,y\text{ in the remainder block.}\end{cases}
\]
Testing for antisymmetry this way necessitates the use of negative signs in the permutations; this is the primary reason we need to extend from permutations to signed permutations. The lower bound additionally requires the superpositions to also be signed, since if $\chi(x)$ was always $1$ the test for YES instances would only have positive signs, while NO instances would also have negative signs. Adding the signs to the superpositions helps mask the signs in the permutations.

Since the encoding acts separately on $A$ and $B$, the product-overlap bound of~\cite[Lemma~4.1]{BGH+26} carries over: every product state has squared overlap at most $1/3$ with the span of the states $\ket{\psi_Q}$, whereas entangled states can lie entirely inside it.
 
In both cases, the oracle consists only of $N$ signed permutations of $V$, and the verifier runs the same invariance test. Averaged over the random permutations, applying a signed permutation projects each block onto its satisfying state, with a negative sign on the remainder block. That is, the averaged operator $\overline{\Gamma}$ is
\[
\ketbra{h_A,h_B}{h_A,h_B}\quad\text{(YES)}\qquad\text{or}\qquad
\sum_{Q}\ketbra{\psi_Q}{\psi_Q}-\ketbra{\psi_{\mathrm{rem}}}{\psi_{\mathrm{rem}}}\quad\text{(NO)},
\]
where $\ket{\psi_{\mathrm{rem}}}$ is the normalized state with amplitudes proportional to $\chi(x)$ on the remainder block. Apart from the remainder term, which is always a penalty, these are exactly the projectors $P_{\mathrm{YES}}$ and $P_{\mathrm{NO}}$, with $E_A\otimes E_B$ in place of $U_A\otimes U_B$. So, on average, the verifier's invariance test is the membership test of~\cite{BGH+26} for the encoded subspaces.
 
 
\subsection{The signed-graph Hamiltonian problem}
These oracles define a Hamiltonian problem on a signed graph. Its vertices are the elements of $V$, and each $p\in[N]$ specifies a signed cycle cover: every vertex $x$ has a directed edge to $\pi_p(x)$ with sign $\sigma_p(x)$. The oracle provides local access to this graph: given $p$ and a vertex, one can query its neighbor and the corresponding sign, in either direction. More formally, the oracle takes a cycle-cover index $p$, a vertex $v$, and a direction bit $z$, and returns
\[
O(p,z,v)=\begin{cases}\bigl(\pi_p(v),\sigma_p(v)\bigr), & z=0,\\[2pt] \bigl(\pi_p^{-1}(v),\sigma_p(\pi_p^{-1}(v))\bigr), & z=1.\end{cases}
\]
Thus $z$ simply toggles between querying the forward and backward neighbor of $v$.
 
We now turn the signed graph into a Hamiltonian. Consider a single edge
\[
x\xrightarrow{\;\sigma\;}y,\qquad\sigma\in\{\pm1\}.
\]
If a state $\sum_{x\in V}\alpha_x\ket{x}$ has amplitudes $\alpha_x$ and $\alpha_y$ at the two vertices in this edge, then we want this edge to contribute no energy precisely when $\alpha_x=\sigma\alpha_y$. We therefore associate to the edge the nonnegative penalty
\[
\frac14|\alpha_x-\sigma\alpha_y|^2,
\]
which vanishes only when $\alpha_x=\sigma\alpha_y$. The Hamiltonian simply averages this penalty over all signed edges in the graph. For the same state $\ket{\psi}$, its energy is
\[
\bra{\psi}H\ket{\psi}=\frac{1}{4N}\sum_{p\in[N]}\sum_{x\in V}\bigl|\psi(x)-\sigma_p(x)\psi(\pi_p(x))\bigr|^2,
\]
which is exactly the probability that the invariance test, for a uniformly random $p$, rejects $\ket{\psi}$. Consequently, a zero-energy state is exactly an assignment of amplitudes that satisfies every signed edge constraint simultaneously.
 
Our problem is a \emph{separable} Hamiltonian problem: the energy is minimized only over product states across the bipartition $A|B$. 
Accordingly, we define the minimum separable energy
\[
\lambda_{\mathrm{sep}}(H):=\min_{\substack{\ket{\psi} = \ket{\phi_A} \otimes \ket{\phi_B},\\ \|\ket{\psi}\|=1}}\bra{\psi}H\ket{\psi}.
\]
 The promise problem is to distinguish, given oracle access to the signed graph, whether $\lambda_{\mathrm{sep}}(H)\le a$ or $\lambda_{\mathrm{sep}}(H)\ge b$, for constants $0\le a<b$.
 
In our YES case, $\ket{h_A}\ket{h_B}$ satisfies every edge exactly, so $\lambda_{\mathrm{sep}}(H)=0$. In the NO case, low energy states must be close to superpositions of states $\ket{\psi_Q}$. However, since these states are antisymmetric, any linear combination of these states is entangled. The $\QMAT$ upper bound is therefore straightforward: the two Merlins provide the two factors of a product state, and Arthur runs the invariance test for a uniformly random cycle cover, using one oracle query (\cref{prop:qma2-upper}).
 
Two details are needed to obtain a constant gap in the NO case. First, the random sets $C^P_{a,b}$ are only approximately equal in size, so $\ket{\psi_Q}$ is only close to $(E_A\otimes E_B)\ket{\mathrm{Alt}_Q}$; a Chernoff bound shows that this distortion is small. Second, $H$ averages $N$ independent random cycle covers, so by matrix concentration it is close to its average over the permutations, which is $\frac12(\id-\overline{\Gamma})$ for the averaged operator $\overline{\Gamma}$; the negative remainder term in $\overline{\Gamma}$ only raises the energy. Together, these show that with high probability every product state has energy at least $1/4$, while the states $\ket{\psi_Q}$ still have energy exactly zero (\cref{proposition_goodNO}).

\subsection{The \texorpdfstring{$\QMA$}{QMA} lower bound}
The discussion above explains why the natural $\QMAT$ verifier does not immediately give a $\QMA$ protocol. Its soundness relies on the fact that the two Merlins provide a product state; a single $\QMA$ prover has no such restriction and may instead send an arbitrary entangled state across the two registers. Indeed, our NO instances have entangled states that satisfy the signed-edge tests perfectly. Thus the interference test used for the $\QMAT$ upper bound is not sound against arbitrary $\QMA$ witnesses.
 
This by itself, however, does not prove a separation. It only shows that one particular $\QMA$ verification strategy fails. To prove the lower bound, we must therefore show that no polynomial-query $\QMA$ verifier can distinguish the YES instances from the NO instances.

\label{subsubsec:overview-lower-bound}
We prove this using a \emph{transfer}: a map that takes the state of a YES oracle, after a verifier has queried it a few times, to the state of a NO oracle that is consistent with everything the verifier has seen.
The classical warm-up below shows, in the simplest setting, why a transfer suffices and which properties it needs: it must keep everything the algorithm has already seen, and making a query before or after it must give nearly the same result.
We then explain what changes for quantum queries, and finally why a transfer with these properties also rules out $\QMA$ verifiers, whose witness may depend on the entire oracle.

\paragraph{Warm up: $\mathsf{BPP}$ lower bound} Before explaining our lower bound, we first warm up by proving a $\mathsf{BPP}$ lower bound. Several steps here are more than this weaker bound needs, but each of them has a quantum counterpart. In particular, we lazily sample the oracle only in the form that compressed oracles support for quantum queries: as independent random permutations and independent random signs.

Let \(\mathcal D_+\) and \(\mathcal D_-\) denote the hard YES and hard NO
distributions described above, and consider a BPP algorithm making \(q\) queries. We want to argue that such an algorithm cannot distinguish YES instances from NO instances. We accomplish this by a
hybrid argument.
To start with, instead of sampling a whole YES oracle in advance, we choose the oracle parameters as they
are needed, that is, as the algorithm makes queries, and we keep earlier choices fixed. This is called \emph{lazy
sampling}. It will let us describe the computation using small databases, while leaving most of the oracle unspecified.
The main idea behind lazy sampling is to maintain a \emph{permutation database}
\(D=(D_p)_{p\in[N]}\), consisting of one table \(D_p\) for each
permutation \(\pi_p\). Each table  \(D_p\) is a partial bijection: its entries
have distinct inputs and distinct outputs. The lazy sampling procedure also includes a sign table
\(H\). This table stores the coordinate-sign values \(h_A(u)\) and \(h_B(v)\) of the queried vertices.
Before the algorithm makes any queries, all tables are empty.

When the algorithm queries the permutation \(\pi_p\) at \(x\), if \(D_p\) already contains a record of the form \(x\mapsto y\), then the output is $y$.
Otherwise,  \(y\) is chosen uniformly from the set of all vertices not yet used as
outputs in \(D_p\), and the edge \(x\mapsto y\) is added to the table \(D_p\). For an
inverse query at \(y\), if there exists a record of the form $x\rightarrow y$ in \(D_p\), then we output $x$;
otherwise, \(x\) is chosen uniformly from the vertices not yet used as
inputs in \(D_p\), and \(x\mapsto y\) is added to \(D_p\).

Each query also returns the sign of the queried edge.
Recall that in the YES construction this sign is
$
\sigma_p(x)=\chi(x_A,x_B)\chi(y_A,y_B).
$
The lazy sampler does the following- it looks up the signs of $x_A,x_B,y_A$ and $y_B$  in \(H\). For all coordinates $u$ such that $h(u)\notin H$ the lazy sampler chooses an independent uniform sign from \(\{\pm1\}\)
and stores it in \(H\) as $h(u)$. The sign of $u$ is now stored in $H$ for future queries and is not resampled again.
Now we will assemble these ideas to show how the lazy sampler replaces
the YES queries with NO queries one by one in a sequence of hybrids. In order to do so, we define a
randomized map \(\mathcal T\) that acts on \((D,H)\), and transforms it into a permutation database and sign table $(D_{\theta},H')$ for a NO oracle. Crucially, this transformation should remain
consistent with all the query answers and signs already stored in $(D,H)$.

In a bit more detail, the output of \(\mathcal T\) consists of a randomly sampled
pair-labeling \(\theta=(c_A,c_B)\), a NO permutation database
\(D_\theta=(D_{p,\mathcal B})_{p,\mathcal B}\), and a coordinate-sign table \(H'\).
Here \(\mathcal B\) ranges over the blocks determined by \(\theta\), and
\[
 D_{p,\mathcal B}:=\{x\mapsto y\in D_p:x,y\in\mathcal B\}.
\]
The lazy sampler must sample \(\theta\) such that the two endpoints of every entry already present in
\(D\) lie in the same block. Thus the tables \(D_{p,\mathcal B}\) partition the
entries of \(D_p\), and each \(D_p\) remains a partial bijection.
 Now, the table \(H'\) only assigns signs to the same coordinates as \(H\), and the signs corresponding to all the unqueried coordinates, that is coordinates not present in $H$, remain unspecified.  
However, the sampler is allowed to sample sign values that are different from the ones stored in $H$, as long as the final edge sign is consistent with $(D,H)$. More concretely, let \(h'_A,h'_B\) denote the new sign values for all coordinates in $H$. Then,
 we require, for every recorded entry \(x\mapsto y\in D_p\),
\[
 h'_A(x_A)h'_B(x_B)h'_A(y_A)h'_B(y_B)s_\theta(x,y)
 =\sigma_p(x),
\]
where \(\sigma_p(x)\) is the previously returned edge sign and
\(s_\theta(x,y)\) denotes the factor \(s(x,y)\) in the NO
construction for this \(\theta\). This condition preserves the 
recorded permutation entry and the edge sign.
After performing the transfer $\mathcal{T}$, the 
NO queries made by the $\mathsf{BPP}$ algorithm are answered by lazy sampling as above, using
\(D_\theta\) and \(H'\). For a fresh forward or inverse entry in the NO case, the sampler 
chooses an unused output or input, respectively, ensuring that it lies within the
block of the queried vertex. It samples any unassigned coordinate signs
uniformly, and store them in \(H'\). To return the edge sign, it returns the product of the
four coordinate signs multiplied by \(s_\theta(x,y)\).

For \(j=0,\ldots,q\), we can define a sequence of hybrids by answering the first \(j\) queries by YES lazy sampling, applying
\(\mathcal T\) to \((D,H)\),  and answering the remaining queries by NO lazy sampling. The first hybrid corresponds to sampling a random NO oracle while the last corresponds to sampling a random YES oracle. We will show that consecutive hybrids are nearly indistinguishable, which completes this warm-up lower bound.
Observe that hybrid $j$ and hybrid $j+1$ differ only in whether
the query \(j\) is answered before or after applying
\(\mathcal T\). So we will show that the distributions obtained in either case are very close.

\paragraph{Transferring a database to the NO family.}
First, we call an entry in $D$ \emph{removable} if one of  its endpoints has two coordinates that
occur nowhere else in $D$. The transfer map is constructed by repeatedly
removing such entries, and this process is called \emph{peeling}.
We say that a vertex sampled during a query is \emph{fresh} if neither of its
coordinates occurs in the queried vertex or in an earlier entry.
If, during the construction of a database by querying, every newly
sampled vertex was fresh, then removing entries in reverse of the  
order that they were queried, peels the database completely. We define \(\mathcal T\)
on  databases with this property. This is because, with very high probability the databases created by the queries made by the $\mathsf{BPP}$ algorithm will satisfy this property. 
To apply \(\mathcal T\), we first peel the database, keeping an
ordered list of the removed entries, their edge signs, and the endpoint 
used for each removal. We are left with an empty database and this
list. We now construct a NO database by reinserting the entries in
reverse order. 
At each step, we assign pair-labels and coordinate
signs as we described above, so that the entry's endpoints lie in the same NO block and its
edge sign agrees with the recorded answer.

Now consider the next entry $(x\rightarrow y)$ that needs to be reinserted.  The transfer samples any unassigned pair-labels and signs of
\(x\) independently from their original distributions. Then it samples
the two pair-labels of \(y\) conditioned on
placing it in the same block as $x$. Then we choose one of  $y$'s coordinate
signs, say $h'_A(y_A)$, uniformly and then choose $h'_B(y_B)$ such that $h'_A(x_A)h'_B(x_B)h'_A(y_A)h'_B(y_B)s_\theta(x,y)
 =\sigma_p(x)$. Both coordinates of \(y\) are new at this step, so these choices
preserve all entries already reinserted.

Once all entries have been reinserted, we can complete \(\theta\) by sampling
all unassigned pair-labels independently from their original distributions. Once $\theta$ is well defined, we group the entries into
\(D_\theta\) and keep the newly assigned coordinate signs in \(H'\).

It can be shown that the distribution produced by this process does
not depend on the peeling order, or on the endpoint used for removal/reinsertion, as long as it is consistent.

\paragraph{Comparing the two query orders.}
Let \(D\) be the current permutation database and let \(H\) be the sign table.  Now, fix a forward query \((p,x)\), where \(x\)
has no recorded image in \(D_p\). We will compare the probabilities of $(\theta, y)$ when we first query then transfer, and when we transfer first and then query. Fix a candidate outcome \((\theta,y)\), where the coordinates of
\(y\) occur neither in \(D\) nor in \(x\). Let 
\(e=(p;x\mapsto y)\), and let \(\mathcal B\) be the block of
\(x\) under \(\theta\).
We assume \(y\in\mathcal B\) for the calculations below since otherwise both experiments give probability zero to \((\theta,y)\).
Now, observe that the distribution over pair labels produced by \(\mathcal T\) depends only
on the permutation database \(D\). So let \(\mu_D(\theta)\) and
\(\mu_{D\cup\{e\}}(\theta)\) denote the probabilities of sampling \(\theta\) when transferring
\(D\) and \(D\cup\{e\}\), respectively. In the first experiment, where we query then transfer,
the YES query chooses \(y\) from \(N^6-|D_p|\) unused outputs,
then the transfer samples the labels. In the second experiment, where we transfer then query,
the transfer samples the labels first, then the NO query chooses
\(y\) from \(|\mathcal B|-|D_{p,\mathcal B}|\) unused outputs.
The probabilities of our fixed outcome are therefore
\[
 \begin{aligned}
 \text{Query, then transfer:}\quad&
 \frac{\mu_{D\cup\{e\}}(\theta)}{N^6-|D_p|},\\[1ex]
 \text{Transfer, then query:}\quad&
 \frac{\mu_D(\theta)}{|\mathcal B|-|D_{p,\mathcal B}|}.
 \end{aligned}
\]

To compare these expressions, we first relate \(\mu_{D\cup\{e\}}(\theta)\) and \(\mu_D(\theta)\). Since \(y\) is fresh, we can peel \(e\) first from
\(D\cup\{e\}\). Thus, in both hybrids, we first reinsert all
entries of \(D\), using the same pair-label sampling steps as we described above. Next,
in both hybrids, we sample any unassigned pair-labels at \(x\)
independently from their original distribution. Since the coordinates
of \(y\) occur neither in \(D\) nor in \(x\), its labels are
still unassigned.
Now, observe that for \(D\) , the two labels at \(y\) are sampled independently. However, for \(D\cup\{e\}\), they are sampled conditioned on placing
\(y\) in the block of \(x\). All other labels follow the same
sampling rules in both hybrids.

Let \(q_\theta(x)\) be the probability that two independent uniform
pair-labels at \(y\) place it in the block of \(x\), with the
labels at \(x\) fixed to their values in \(\theta\).
Conditioning on this event divides the probability of each permitted
pair of labels by \(q_\theta(x)\). Since our candidate \(\theta\)
puts \(y\) in \(\mathcal B\), we obtain
\[
 \mu_{D\cup\{e\}}(\theta)
 =\frac{\mu_D(\theta)}{q_\theta(x)}.
\]
Substituting into the first expression above,
\[
 \frac{\Pr[\text{outcome }(\theta,y)\text{ in experiment }1]}
      {\Pr[\text{outcome }(\theta,y)\text{ in experiment }2]}
 =\frac{|\mathcal B|-|D_{p,\mathcal B}|}
        {(N^6-|D_p|)q_\theta(x)}.
\]
A straightforward Chernoff bound shows that \(|\mathcal B|\) is close to
\(N^6q_\theta(x)\) w.h.p., in which case the ratio is close to 1, showing that the probabilities for $(\theta, y)$ in both experiments are close.

Next we compare the sign tables and the returned edge sign in both hybrids. Fix the
labels $\theta$ and all the signs of all  coordinates, except $(y_A,y_B)$. The new edge sign is
\(\sigma=cab\), where \(a,b\) are the fresh signs at \(y\) and
\(c\in\{\pm1\}\) is fixed. If we query first, then as we described above,\(\sigma,a\) is sampled
independently and uniformly, and then  \(b:=c\sigma a\). Now, if we apply the transfer map first, then we 
first sample \(a,b\) independently and uniformly, and then set
\(\sigma=cab\). Each triple satisfying \(\sigma=cab\) has
probability \(1/4\) in either ordering, so the distributions agree.

Finally, inverse queries have the
same argument with inputs and outputs exchanged. As a result, both experiments are negligibly close and therefore indistinguishable, completing the lower bound.
In words, the transfer keeps everything the algorithm has already seen, and making a query before or after it gives nearly the same result; these are the two properties we will need for quantum queries as well.

\paragraph{A witness that specifies oracle entries.}
Before we extend to the quantum setting, we consider how giving the algorithm access to an oracle dependent witness changes the argument. We extend the lower bound to $\mathsf{MA}$ verifiers that receive a classical witness chosen after the
oracle is fixed. As a first step, suppose its only effect is to fix
polynomially many signed permutation entries. We will return to this issue in a moment. 
For now, assuming these entries
are consistent with a YES oracle, we can sample the remaining oracle
parameters from the YES distribution conditioned on them.
Now, the first issue is that these entries need not form a database that peels completely:
the witness can fix entries with shared coordinates, and unlike in the $\mathsf{BPP}$ lower bound, we can no longer argue that this is unlikely, because the witness chooses these entries instead of sampling them. Thus, our transfer map should be able to handle such databases as well.
We modify the transfer map as follows- peel as many entries as possible, and let the remaining entries be database
\(C\). Then, we assign the pair-label \((1,2)\) to every \(A\)-coordinate
in \(C\), and \((3,4)\) to every \(B\)-coordinate in \(C\).
Every remaining entry $(x \rightarrow y)\in C$ now lies in \(\mathcal B_{\{1,2,3,4\}}\),
and its additional NO sign factor is $s_{\theta}(x,y)=\varepsilon(x)\varepsilon(y)=+1$, since both endpoints have labels $(1,2,3,4)$. For the signs, we let $h_{A}(x_A)=h'_{A}(x_A)$, and let $h_{B}(x_B)=h'_{B}(x_B)$. Since the entry $x \rightarrow y$ came from
a YES oracle, this preserves its edge sign as $s_{\theta}(x,y)=1$. Next, we reinsert the
peeled entries as before, and sample the remaining labels independently. Now, the comparison between querying before and after the transfer still applies, same as the $\mathsf{BPP}$ lower bound argument.

Now, we return to our assumption that the witness only fixes polynomially many signs and entries. In general, the dependence of the witness on the oracle may be complex and not easily captured by such fixing arguments. Nevertheless we will show that proving the transfer for this limited sort of witness state allows us to prove a full lower bound. Before we discuss this in detail, we show how to extend our lower bound technique to the quantum setting.

\paragraph{Quantum queries and coherent transfer}
We now turn to the main difficulty in extending
the classical transfer to quantum algorithms.
Classically, we construct a randomized map from
partial YES databases to compatible NO databases.
Its correctness follows by showing that querying
before or after the transfer gives approximately
the same probability distribution. Quantumly,
however, the verifier can query in superposition, and
different branches of its state can later interfere,
so producing the right probabilities is not enough.
We therefore need a linear transfer map, acting on the
joint state of the verifier and the oracle, that
approximately commutes with quantum oracle queries,
even on arbitrary superpositions of partial
oracle information.

To construct this map, we use compressed-oracle techniques
 \cite{Zha19,Car25,CM26}. These allow us to perform a quantum analogue of lazy sampling as in the classical case above. We are, however, restricted to lazy sampling distributions for which compressed oracle techniques are known, which in our case are random functions and random permutations. This is why our classical lower bound kept the oracle in the form of independent random permutations and independent random signs.

We use these techniques as a black box, through the following interface.
As before, the oracle's parameters are kept in a register that the verifier cannot access, now in superposition.
After a fixed change of basis on this register, the joint state of the verifier and the oracle is a superposition of \emph{records}, entangled with the verifier's registers.
A record is $|D,S\rangle$ in the YES case and $|\theta, D,S\rangle$ in the NO case.
Here, $D$ is a collection of partial permutation tables, as in the classical lazy sampler.
The set $S$ replaces the sign table, but it is written in the Fourier basis: it lists the coordinates whose signs the verifier's state depends on, rather than the values of those signs.
A query acts like a step of lazy sampling: it reads the entry for the queried vertex if $D$ has one, and otherwise adds one, in superposition over the possible answers.
For an edge $e=(p;x\to y)$, let $\partial e$ be the set of coordinates appearing an odd number of times among $x_A,x_B,y_A,y_B$; the YES edge sign $\chi(x)\chi(y)$ is the product of the signs of the coordinates in $\partial e$.
Returning this sign multiplies the state by that product, which in the Fourier basis replaces $S$ by $S\triangle\partial e$.
Finally, after $t$ queries every record has $O(t)$ entries and coordinates, so the records stay small.
 

We now want to construct a quantum analogue of the classical
transfer. Recall that classically, we reweight the
distribution of hidden NO pair labelings $\theta$ so
that the permutation entries in $D$ are compatible
with the NO block structure. Quantumly, we must
implement this reweighting coherently, rather than
sampling a labeling separately for each possible
record. We therefore define a linear completion map
\[
W|D,S\rangle
=
\sum_{\theta}
\sqrt{\nu(\theta)}\,a_{D,S}(\theta)
|\theta,D_\theta,S\rangle,
\]
where $\nu$ is the original distribution of hidden
NO labelings and $D_\theta$ retains the same
permutation entries, organized into NO blocks.
The amplitudes $a_{D,S}(\theta)$ implement the reweighting.
The first difficulty is constructing these
amplitudes consistently.
Recall that answering the sign of a recorded edge
$e=(p;x\to y)$ changes $S$ to $S\triangle\partial e$; the
corresponding NO query additionally introduces
the sign $s_\theta(x,y)$. Therefore, for
completion to commute with recorded sign queries,
we require
\[
a_{D,S\triangle\partial e}(\theta)
=
s_\theta(x,y)a_{D,S}(\theta)
\]
for every  record $(D,S)$ and every
$e\in D$. Classically, we can compensate for the
additional NO signs by reassigning the signs to coordinates in order to satisfy the edge sign constraint. Quantumly, we must construct a single
family of amplitudes satisfying these identities
simultaneously.

We construct these amplitudes through a modification
of the classical peeling procedure. Given $(D,S)$,
we repeatedly remove permutation entries, as in
the classical argument, but also update the sign
subset $S$. More concretely, an edge $e=(p;x\to y)$
can be peeled at $y$ when its two coordinates
occur nowhere else in the remaining database
and have the same membership in the current
subset $S_{\mathrm{cur}}$. Writing
$b_e\in\{0,1\}$ for this common membership,
we remove $e$ and update
\[
S_{\mathrm{cur}}
\longmapsto
S_{\mathrm{cur}}\triangle(b_e\partial e).
\]
Each peeled entry contributes a weight $w_e(\theta)$, the amplitude version of the classical reinsertion step, which places $y$ in the block of $x$.
We also include the factor $s_\theta(x,y)^{b_e}$
in the completion amplitude. Thus, when both
coordinates of $y$ belong to $S_{\mathrm{cur}}$,
their sign dependence is transferred to the
coordinates of $x$, together with the appropriate
NO sign correction.

However, unlike the classical argument, it is not clear that 
different peeling orientations result in the
same completion amplitude. For example, suppose
$D$ contains only one edge $e=(p;x\to y)$,
whose four coordinates are distinct, and let
\[
S=\{y_A,y_B\},
\qquad s_\theta(x,y)=-1.
\]
Peeling at $y$ gives $b_e=1$, leaving the free
sign subset $F_x=\{x_A,x_B\}$ and introducing
a factor $-1$. Peeling at $x$ gives $b_e=0$,
leaving $F_y=\{y_A,y_B\}$ and introducing
no additional sign. Thus, using only $w_e(\theta)$
and these peeling signs would give
$-w_e(\theta)$ and $+w_e(\theta)$ for the
same record.

Now, note that we did not have this issue in the classical argument, because both orientations generated the same
\emph{uniform conditional distribution} over coordinate sign assignments satisfying the recorded edge
constraint. Quantumly, however, our two peeling calculations
must yield exactly the same \emph{amplitude}. We explain how to achieve this in Section~\ref{subsec:database-transfer}.

Now, even after defining consistent amplitudes, we must
show that the transfer map approximately commutes with
quantum queries, which coherently insert or remove
permutation entries and act on arbitrary
superpositions of possible answers. Thus, we use the compressed-oracle representation
to construct a linear transfer map $ T$ satisfying
\[
\|(O_-T-TO_+)|\psi\rangle\|
\leq \text{negl}\||\psi\rangle\|,
\]
for all states $|\psi\rangle$ supported on small records,
where $O_+$ and $O_-$ are the YES and NO quantum
oracle queries.
We cannot ask for more than this: YES and NO oracles differ globally, since NO permutations preserve the hidden blocks, so no map commutes with queries on every state.

\paragraph{Removing the assumption on the witness.} Now we will tie the last loose end in the argument: dealing with arbitrary witnesses.

For the argument below, we use the following property. Start with
the purified YES distribution and an oracle-independent state of the
verifier and any reference registers. After \(t\) queries, the joint
state has a compressed representation as a superposition of records,
each involving \(O(t)\) permutation entries and sign coordinates.
Intuitively, if the state is prepared from scratch by querying the oracle, it has a structure similar to the oracle entry fixing witness from the previous section, and is compatible with our transfer map. 
Suppose a $\QMA$ verifier uses \(m\) witness qubits and makes \(q\)
queries, with completeness \(2/3\) and soundness \(1/3\).
An actual witness can depend on the entire oracle. Its initial joint
state with the oracle need not have a small-record representation.
Instead of starting with that witness, we use the standard trick of starting with the maximally
mixed state \(I/2^m\) on the \(m\)-qubit witness register,
independent of the YES oracle. This gives weight \(2^{-m}\) to
every witness state, including one that satisfies completeness.

For a fixed oracle \(\omega\), let \(A(\omega)\) be the positive
operator defined by
\[
 \Pr[\text{the verifier accepts }|w\rangle]
 =\langle w,A(\omega)w\rangle,
 \qquad 0\preceq A(\omega)\preceq I.
\]
Let \(A_+\) and \(A_-\) apply \(A(\omega)\) controlled by the
YES and NO parameter registers, respectively.\(A(\omega)\)    uses at most \(2q\) oracle
queries , and we keep the resulting states unnormalized.

Now, let \(\phi_0\) be the tensor product of the purified YES distribution
and a purification of the maximally mixed witness, and let
\(\phi_j=A_+^j\phi_0\).
The vector \(\phi_j\) is prepared using at most \(2qj\) queries
and postselection. For a polynomially bounded number \(r\) of applications,
the probability that all the projections succeed is
\[
 p_r:=\|\phi_r\|^2
 =2^{-m}\mathbb E_{\omega\sim\mathcal D_+}
       \operatorname{tr}\!\left(A(\omega)^{2r}\right)
 \ge 2^{-m}\left(\frac23\right)^{2r}.
\]
The inequality follows from completeness: every YES acceptance
operator has an eigenvalue at least \(2/3\). The exponent is \(2r\)
because the success probability is the squared norm after applying
\(A_+^r\).

\paragraph{Bounding each application using NO soundness.}
Now, the transfer construction gives a linear map \(T\) from YES to NO
oracle-parameter registers. 
Let \(O_+\) and \(O_-\) denote the
oracle queries controlled by the respective parameter registers.
Then, so far we have that, for every polynomial query bound, there is a negligible
\(\varepsilon=\varepsilon(k)\) such that the following holds for
every unnormalized vector \(\phi\) supported on short records and prepared from the purified YES
distribution :
\[
 (1-\varepsilon)\|\phi\|\le\|T\phi\|\le\|\phi\|,
 \qquad
 \|(O_-T-TO_+)\phi\|\le\varepsilon\|\phi\|.
\]

Crucially, these bounds must hold for every such \(\phi\).
The reason is that \(\phi_j\) is postselected: its norm can be as small as about \(2^{-m}\), so it can be concentrated on query histories that are unlikely under the YES distribution, such as two answers that share a coordinate.
This is why the transfer has to handle every small record, including databases that do not peel completely.

Now, take \(r=m+1\) and use the bounds we stated above with query bound \(2qr\).
Next, fix \(j<r\) and let \(\phi=\phi_j\). Preparing \(\phi\) and
then applying \(A_+\) uses at most \(2q(j+1)\le2qr\) queries.
The bounds therefore apply to every intermediate state and to
\(A_+\phi\). Thus,
\[
 \|(A_-T-TA_+)\phi\|\le2q\varepsilon\|\phi\|.
\]
On every oracle satisfying the NO promise, soundness gives
\(A(\omega)\preceq I/3\).
This also bounds a joint state in which the witness depends on the
NO parameters, because it holds for every witness at each such
parameter choice. The negligible weight on oracles that fail the NO
promise therefore gives
\(\|A_-T\phi\|\le(1/3+o(1))\|\phi\|\).
Using the lower bound on \(\|TA_+\phi\|\), we obtain
\[
 \begin{aligned}
 (1-\varepsilon)\|A_+\phi\|
 &\le\|TA_+\phi\|\\
 &\le\|A_-T\phi\|+2q\varepsilon\|\phi\|\\
 &\le\left(\frac13+o(1)\right)\|\phi\|.
 \end{aligned}
\]
Thus \(\|A_+\phi\|\le b\|\phi\|\), where
\(b=1/3+o(1)\).

Applying this inequality successively to \(\phi_0,\ldots,\phi_{r-1}\)
gives \(p_r\le b^{2r}\). Combining this with completeness gives
\[
 2^{-m}\left(\frac23\right)^{2r}\le b^{2r},
 \qquad\text{hence}\qquad
 \frac23\le2^{m/(2r)}b\le\sqrt2\,b.
\]
Since \(b=1/3+o(1)\), this is impossible for sufficiently large
\(k\). The contradiction rules out the $\QMA$ verifier without any
restriction on how its witness depends on the oracle.

\section{Preliminaries}

We assume familiarity with standard notions from quantum computation and quantum complexity theory; see, e.g.,~\cite{NC10,KSV02} for background.
For vectors, $\norm{\cdot}$ denotes the Euclidean norm, and for operators it denotes the operator norm.
A function $\mathsf{negl}(k)$ is \emph{negligible} if, for every constant $c>0$, $\mathsf{negl}(k)\le k^{-c}$ for all sufficiently large $k$.

We will use the following two standard concentration inequalities.

\begin{lemma}[Multiplicative Chernoff bound]
\label{lem:chernoff}
Let $X=\sum_{i=1}^m X_i$, where $X_1,\ldots,X_m$ are independent Bernoulli random variables, and let $\mu=\E[X]$. Then, for every $0<\eps\le1$,
\[
\Pr\left[\abs{X-\mu}\ge \eps\mu\right]
\le
2\exp\left(-\frac{\eps^2\mu}{3}\right).
\]
\end{lemma}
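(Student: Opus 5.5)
This is the standard multiplicative Chernoff bound, and I would prove it by the moment generating function (Chernoff–Cramér) method, treating the upper and lower tails separately and combining them with a union bound. Write $p_i=\E[X_i]$, so that $\mu=\sum_i p_i$.

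\textbf{Upper tail.} For any $\lambda>0$, Markov's inequality applied to $e^{\lambda X}$ gives
\[
\Pr[X\ge(1+\eps)\mu]\le e^{-\lambda(1+\eps)\mu}\,\E[e^{\lambda X}].
\]
By independence, $\E[e^{\lambda X}]=\prod_i\bigl(1+p_i(e^\lambda-1)\bigr)$. Since $1+z\le e^z$, this is at most $\exp\bigl(\mu(e^\lambda-1)\bigr)$. Choosing $\lambda=\ln(1+\eps)$ yields
\[
\Pr[X\ge(1+\eps)\mu]\le\left(\frac{e^{\eps}}{(1+\eps)^{1+\eps}}\right)^{\mu}.
\]

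\textbf{Lower tail.} I would apply the same argument to $e^{-\lambda X}$ with $\lambda=-\ln(1-\eps)$. For $\eps<1$ this gives
\[
\Pr[X\le(1-\eps)\mu]\le\left(\frac{e^{-\eps}}{(1-\eps)^{1-\eps}}\right)^{\mu}.
\]
The case $\eps=1$ is handled directly: $\Pr[X\le 0]=\prod_i(1-p_i)\le e^{-\mu}\le e^{-\mu/3}$.

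\textbf{Reduction to elementary inequalities.} It remains to check two bounds on $(0,1]$:
\[
f(\eps):=\eps-(1+\eps)\ln(1+\eps)\le-\frac{\eps^2}{3},
\qquad
g(\eps):=-\eps-(1-\eps)\ln(1-\eps)\le-\frac{\eps^2}{2}.
\]
The second follows from the series expansion $(1-\eps)\ln(1-\eps)\ge-\eps+\eps^2/2$. For the first, I would use that $\ln(1+\eps)\ge 2\eps/(2+\eps)$, which gives
\[
f(\eps)\le-\frac{\eps^2}{2+\eps}\le-\frac{\eps^2}{3}.
\]
Alternatively, one can verify directly that $f(\eps)+\eps^2/3$ vanishes at $0$ and is nonincreasing on $[0,1]$ by examining its derivatives. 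Since $\eps^2/2\ge\eps^2/3$, each tail is at most $\exp(-\eps^2\mu/3)$.

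\textbf{Conclusion.} The event $\abs{X-\mu}\ge\eps\mu$ is the union of the two tail events, so a union bound gives the stated factor of $2$. The only step requiring any care is the elementary calculus inequality for the upper tail; everything else is mechanical. Since this is a textbook result, in the paper itself I would likely just cite a standard reference.
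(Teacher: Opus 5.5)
Your proof is correct: it is the standard moment-generating-function derivation, and the details all check out, including the direct treatment of the $\varepsilon=1$ lower tail and the elementary inequalities $\ln(1+\varepsilon)\ge 2\varepsilon/(2+\varepsilon)$ and $(1-\varepsilon)\ln(1-\varepsilon)\ge-\varepsilon+\varepsilon^2/2$. The paper gives no proof and simply states this as a standard concentration inequality, so your argument (or a citation, as you suggest) is all that is needed.
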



\begin{theorem}[Matrix Hoeffding inequality~{\cite[Theorem~1.3]{Tro12}}]
\label{thm:matrix-hoeffding}
Let $X_1,\ldots,X_N$ be independent Hermitian $d\times d$ random
matrices. Suppose that, for some $L>0$ and every $i$,
$$
\mathbb E X_i=0,
\qquad
\|X_i\|\le L
\quad\text{almost surely}.
$$
Then, for every $\eta>0$,
$$
\Pr\left[
\left\|\frac1N\sum_{i=1}^N X_i\right\|\ge\eta
\right]
\le
2d\exp\left(-\frac{N\eta^2}{8L^2}\right).
$$

\end{theorem}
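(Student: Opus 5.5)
We would prove the statement by the matrix Laplace-transform method, following Tropp. The approach is to bound the upper tail of $\lambda_{\max}$ of $Y:=\sum_i X_i$ by a trace exponential, control each summand's matrix moment generating function using the almost-sure bound $\|X_i\|\le L$, and then handle the lower tail by symmetry. The constant we obtain is better than the stated $8$, so the stated bound follows a fortiori.

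\emph{Step 1: master tail bound.} Since $\|\frac1N Y\|\ge\eta$ iff $\lambda_{\max}(Y)\ge N\eta$ or $\lambda_{\max}(-Y)\ge N\eta$, it suffices to bound $\Pr[\lambda_{\max}(Y)\ge t]$ with $t=N\eta$, and then apply the same bound to the family $-X_i$. That family satisfies identical hypotheses, which accounts for the factor $2$. For $\theta>0$, Markov's inequality together with $e^{\theta\lambda_{\max}(Y)}\le\operatorname{tr}e^{\theta Y}$ gives
\[
\Pr[\lambda_{\max}(Y)\ge t]\le e^{-\theta t}\,\E\operatorname{tr}e^{\theta Y}.
\]
We will then invoke Lieb's concavity theorem: $A\mapsto\operatorname{tr}\exp(H+\log A)$ is concave on positive definite $A$. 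Conditioning on $X_1,\dots,X_{N-1}$ and applying Jensen's inequality peels off one summand at a time. Iterating this yields subadditivity of matrix cumulant generating functions:
\[
\E\operatorname{tr}e^{\theta Y}\le\operatorname{tr}\exp\Bigl(\sum_i\log\E e^{\theta X_i}\Bigr).
\]

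\emph{Step 2: a matrix Hoeffding lemma.} For scalar $x\in[-L,L]$, convexity of $x\mapsto e^{\theta x}$ gives
\[
e^{\theta x}\le\cosh(\theta L)+\tfrac{x}{L}\sinh(\theta L).
\]
Because the spectrum of $X_i$ lies in $[-L,L]$, the transfer rule for spectral functions lifts this to the semidefinite inequality $e^{\theta X_i}\preceq\cosh(\theta L)I+\frac{\sinh(\theta L)}{L}X_i$. Taking expectations and using $\E X_i=0$ gives $\E e^{\theta X_i}\preceq\cosh(\theta L)I\preceq e^{\theta^2L^2/2}I$. Since $\log$ is operator monotone, $\log\E e^{\theta X_i}\preceq\frac{\theta^2L^2}{2}I$.

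\emph{Step 3: optimize.} The trace exponential is monotone with respect to $\preceq$, so Steps 1 and 2 combine to
\[
\Pr[\lambda_{\max}(Y)\ge t]\le d\,e^{-\theta t+N\theta^2L^2/2}.
\]
Choosing $\theta=t/(NL^2)$ gives $d\exp(-t^2/(2NL^2))=d\exp(-N\eta^2/(2L^2))$. This is at most $d\exp(-N\eta^2/(8L^2))$, and doubling for the two tails gives the statement.

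The main obstacle is Step 1. Because the $X_i$ need not commute, $e^{\theta\sum X_i}$ does not factor, and the only real input is Lieb's concavity theorem. We would cite it rather than reprove it; everything else is scalar inequalities transported through the spectral calculus.
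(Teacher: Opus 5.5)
Your proof is correct, but it follows a different route from the cited result. The paper gives no proof of its own; it imports Tropp's Theorem~1.3~\cite{Tro12}. Tropp states that theorem under the more general hypothesis $X_k^2\preceq A_k^2$ for fixed Hermitian $A_k$, with variance proxy $\sigma^2=\|\sum_k A_k^2\|$. The statement here is its specialization to $A_k=L\,I$ (so $\sigma^2=NL^2$ and $t=N\eta$), plus a union bound over the two tails exactly as in your Step~1.

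Tropp's argument uses the same Lieb-based master bound as yours. The difference is in the moment generating function step. Since a non-scalar $A_k$ need not commute with $X_k$, the chord inequality cannot be pushed through the functional calculus. Tropp instead symmetrizes: he bounds $\mathbb{E}\operatorname{tr}\exp(\theta\sum_k X_k)$ by the same quantity with $X_k$ replaced by $2\varepsilon_k X_k$, for independent Rademacher $\varepsilon_k$. He then uses $\log\cosh(2\theta X_k)\preceq 2\theta^2X_k^2\preceq 2\theta^2A_k^2$, which follows from operator monotonicity of $\log$. The doubling of $\theta$ is exactly what turns your exponent $t^2/(2\sigma^2)$ into the stated $t^2/(8\sigma^2)$.

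Your approach exploits the fact that the paper only needs the scalar bound $\|X_i\|\le L$, so the scalar Hoeffding lemma transfers directly. This makes the moment generating function step more elementary and improves the exponent by a factor of four. In the paper's application, $L=2$ and $\eta=1/8$ would give $e^{-N/128}$ instead of $e^{-N/2048}$. Tropp's symmetrization buys applicability to genuinely matrix-valued variance proxies, at the cost of the constant. In both arguments the only deep input is Lieb's concavity theorem, which it is reasonable to cite rather than reprove.
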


The same conclusion holds conditionally on fixed parameters, provided
the matrices are independent under the conditional distribution and
satisfy the same norm bounds. In that case, $\overline\Gamma$ denotes
the conditional mean. If the bound is uniform in the fixed parameters,
averaging gives the same unconditional failure bound.

\section{The problem and the \texorpdfstring{$\QMAT$}{$\QMAT$} upper bound}
\label{sec:theproblem}
The black-box problem underlying our separation is a separable Hamiltonian problem, similar to the sparse separable Hamiltonian problem studied by Chailloux and Sattath~\cite{chailloux2012complexity}. 
Given a Hamiltonian $H$ acting on a bipartite Hilbert space $\calH_A \otimes \calH_B$, define its minimum separable energy by 
\[
\lambda_{\rm sep}(H) \coloneqq \min_{\norm{\ket\alpha} = \norm{\ket\beta} = 1} \bra{\alpha}\bra{\beta} H \ket{\alpha} \ket{\beta}.
\]
The general task is to distinguish between
$\lambda_{\rm sep}(H)\le a$ or $\lambda_{\rm sep}(H)\ge b$, for constants $a < b$, promised that one of these two cases holds. 

Our access model is different from the sparse Hamiltonian setting:
the Hamiltonian $H$ need not itself be sparse.
Instead, it is specified as an average of a large collection of simple constraints arising from signed edges of a regular graph.

\paragraph{The bipartite vertex space}
Let \(\calX_A\) and \(\calX_B\) be two copies of a set of size \(2^\ell\), and define
\[
\calH_A \coloneqq \operatorname{span}\{\ket{x_A}:x_A\in\calX_A\},
\qquad
\calH_B \coloneqq \operatorname{span}\{\ket{x_B}:x_B\in\calX_B\}.
\]
Thus, \(\calH_A\) and \(\calH_B\) are each \(2^\ell\)-dimensional Hilbert spaces, which we may regard as the state spaces of two \(\ell\)-qubit registers.
The computational basis of the joint space $\calH_A\otimes\calH_B$ is naturally indexed by pairs $(x_A,x_B)\in\calX_A\times\calX_B$.

We can now define the vertex set $V\coloneqq\calX_A\times\calX_B$. 
A \textit{vertex} $v=(x_A,x_B)\in V$ corresponds to the basis state $\ket v \coloneqq \ket{x_A}_A\ket{x_B}_B$.
We call elements of \(\calX_A\sqcup\calX_B\) \emph{coordinates}, and treat each set as disjoint, with their type $A$ or $B$ included in their description.
If
\[
\ket{\alpha}_A
=
\sum_{x_A\in\calX_A} a_{x_A}\ket{x_A}_A,
\qquad
\ket{\beta}_B
=
\sum_{x_B\in\calX_B} b_{x_B}\ket{x_B}_B,
\]
then their joint state is
\[
\ket{\alpha}_A\otimes\ket{\beta}_B
= \sum_{(x_A,x_B)\in V}
a_{x_A}b_{x_B}\ket{x_A}_A\ket{x_B}_B.
\]
Thus, viewing the basis states of $\calH_A \otimes \calH_B$ as vertices of $V$, a product state assigns to each vertex $v=(x_A, x_B)$ a coefficient of the factored form $a_{x_A}b_{x_B}$.

\paragraph{Signed cycle covers}
Let $N\coloneqq 2^k$.
For each $p \in [N]$, we specify a permutation $\pi_p: V \to V$ together with a sign function $\sigma_p: V \to \{\pm 1\}$.
One should regard $\pi_p$ as a directed cycle cover of $V$: each vertex $v$ has a directed edge from $v$ to $\pi_p(v)$, carrying sign $\sigma_p(v)$.\footnote{Note that we allow self-loops and parallel edges.}

For each \(p\), define the corresponding operator \(U_p\) by
\[
U_p\ket v
=
\sigma_p(v)\ket{\pi_p(v)}.
\]
Since \(\pi_p\) is a permutation and $\sigma_p(v) \in \{\pm 1\}$, $U_p$ is unitary, but it need not be Hermitian. 
We therefore define
\[
\Gamma_p \coloneqq \frac{U_p + U_p^\dagger}{2}.
\]
The operator $\Gamma_p$ is the (normalized) signed adjacency matrix of the undirected multigraph obtained by forgetting the orientation of the cycle cover specified by $\pi_p$.
In particular, a directed edge
\[
v\xrightarrow{\ \sigma_p(v)\ }\pi_p(v)
\]
contributes
\[
\frac{\sigma_p(v)}{2} \left( \ket{\pi_p(v)}\!\bra v + \ket v\!\bra{\pi_p(v)} \right)
\]
to $\Gamma_p$.
Associated with the $p$-th cycle cover is the constraint Hamiltonian
\[
H_p \coloneqq \frac{\id-\Gamma_p}{2} = \frac{(\id-U_p)^\dagger(\id-U_p)}{4}.
\]
Note that $H_p \succeq 0$ and the zero-energy states of $H_p$ are exactly the $+1$ eigenvectors of $U_p$.
Although the Hamiltonian depends only on the \emph{undirected} signed graph, it is convenient to specify each cycle cover by a permutation for two reasons. First, this will give a simple oracle interface in which one may query either the successor or predecessor of a vertex. Second, the hard instances used in our lower bound will be constructed from random permutations.

Our Hamiltonian is obtained by averaging these constraints over all $N$ cycle covers:
\begin{equation}\label{eq:signed-h}
H
\coloneqq
\frac{1}{N}\sum_{p\in[N]}H_p.
\end{equation}
Defining
\[
\Gamma \coloneqq \frac{1}{N}\sum_{p\in[N]}\Gamma_p,
\]
we have
\[
H=\frac{\id-\Gamma}{2}.
\]
Graph-theoretically, after forgetting orientations, each of the \(N\) cycle covers induces a signed \(2\)-regular multigraph on \(V\), and their union forms a signed \(2N\)-regular multigraph. The operator \(\Gamma\) is precisely the normalized signed adjacency matrix of this multigraph.

\paragraph{The promise problem}
We can now state the promise problem that underlies our separation.

\begin{problem}[Separable Signed-Graph Hamiltonian]
\label[problem]{problem:sign-matching}

Given oracle access to the signed permutations
\(\{(\pi_p,\sigma_p)\}_{p\in[N]}\), define \(H\) by
\cref{eq:signed-h}. Decide whether
\[
\begin{array}{ll}
\textup{YES case:}
&
\lambda_{\mathrm{sep}}(H)\leq \dfrac{3}{16},
\\[8pt]
\textup{NO case:}
&
\lambda_{\mathrm{sep}}(H)\geq \dfrac14,
\end{array}
\]
promised that one of these is the case.\footnote{
Because \(H=(\id-\Gamma)/2\), this problem can equivalently be phrased
in terms of deciding whether
\(\max_{\norm{\alpha}=\norm{\beta}=1}
\bra{\alpha,\beta}\Gamma\ket{\alpha,\beta}\)
is at least $5/8$ or at most $1/2$. 
}
\end{problem}

The minimum separable energy in \cref{problem:sign-matching} has a simple interpretation in terms of satisfying the signed edge constraints. 
For a product state
\[
\ket{\alpha}_A\otimes\ket{\beta}_B
= \sum_{(x_A,x_B)\in V}
a_{x_A}b_{x_B}\ket{x_A}_A\ket{x_B}_B,
\]
suppose $\pi_p(x_A,x_B)=(y_A,y_B)$.
Then
\[
\bra{\alpha,\beta}H\ket{\alpha,\beta}
=
\frac{1}{4N}
\sum_{p\in[N]}
\sum_{(x_A,x_B)\in V}
\abs{
a_{x_A}b_{x_B}
-
\sigma_p(x_A,x_B)a_{y_A}b_{y_B}
}^2.
\]
Thus, a positive edge asks the amplitudes at its endpoints to agree, while a negative edge asks them to differ by a sign.

Consequently, \(\lambda_{\rm sep}(H)\) asks how well one can simultaneously satisfy the signed edge constraints using a product-state labeling of the vertices $(x_A,x_B)\longmapsto a_{x_A}b_{x_B}$.
The YES case contains such a labeling with small average squared violation, whereas in the NO case every product-state labeling incurs noticeably larger violation.
A single $\QMA$ witness on $\calH_A\otimes\calH_B$ may be entangled, and hence its coefficients need not factor as $a_{x_A}b_{x_B}$. 
Thus, the natural test on this witness only certifies low energy over arbitrary bipartite states rather than low \emph{separable} energy.
The challenge in proving the separation is to show that no low-query $\QMA$ verifier can circumvent this distinction by using a more sophisticated verification procedure.

\paragraph{Oracle access}
For \(b_p(v)\in\{0,1\}\) defined by $\sigma_p(v)=(-1)^{b_p(v)}$,
define
\begin{equation}\label{eq:oracle}
O(p,z,v)
\coloneqq
\begin{cases}
\bigl(\pi_p(v),b_p(v)\bigr), & z=0,\\[4pt]
\bigl(\pi_p^{-1}(v),b_p(\pi_p^{-1}(v))\bigr), & z=1,
\end{cases},
\end{equation}
where $z \in \{0,1\}$ specifies whether we apply the forward or inverse permutation.

Arthur is given quantum query access to \(O\) through the usual reversible implementation 
\[
U_O\ket{p,z,v,w}
=
\ket{p,z,v,w\oplus O(p,z,v)}.
\]
Thus, a query specifies a permutation index \(p\), a direction $z$, and a vertex \(v\). It returns either the successor or predecessor of $v$ in the $p$-th cycle cover, together with the sign of the corresponding edge. 
We count one application of \(U_O\) as one query.\footnote{For the final classical-oracle separation, we encode the \(2\ell+1\)-bit output of \(O\) into a Boolean oracle in the standard way. The two query models are polynomially equivalent, and we defer this encoding to the diagonalization argument.}

\paragraph{The $\QMAT$ upper bound}
Known upper bounds do not seem to apply to our problem, but the containment in $\QMAT$ is not hard. We conclude this section with the details. 

\begin{proposition}\label[proposition]{prop:qma2-upper}
The Separable Signed-Graph Hamiltonian problem (\cref{problem:sign-matching}) is solvable in \(\QMAT\) using one query to the oracle in \cref{eq:oracle}.
\end{proposition}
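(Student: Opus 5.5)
The verifier will apply a signed Hadamard test to the product witness $\ket{\alpha}_A\otimes\ket{\beta}_B$ supplied by the two unentangled Merlins. It samples $p\in[N]$ uniformly, prepares a control qubit in $\ket{+}$, and applies $U_p$ to the $AB$ register controlled on that qubit. Measuring the control in the $\{\ket{+},\ket{-}\}$ basis, it accepts on outcome $+$. For any $\ket\psi$, the state before measurement is $\tfrac12\ket0(\ket\psi+U_p\ket\psi)+\tfrac12\ket1(\ket\psi-U_p\ket\psi)$ after the final Hadamard. The rejection probability is therefore
\[
\tfrac14\bigl\|(\id-U_p)\ket\psi\bigr\|^2=\bra\psi H_p\ket\psi,
\]
using $H_p=(\id-U_p)^\dagger(\id-U_p)/4$. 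Averaging over $p$, the rejection probability is exactly $\bra\psi H\ket\psi$. Thus on product witnesses the acceptance probability is $1-\bra{\alpha,\beta}H\ket{\alpha,\beta}$.

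The main point to check is that controlled-$U_p$ costs only one query to $U_O$. With $p$ fixed classically, the verifier controls on the qubit $c$. On a basis state $\ket{c}\ket{v}$ with $c=1$, it queries $O(p,0,v)$ into fresh ancillas to obtain $\ket{v}\ket{\pi_p(v),b_p(v)}$, applies the phase $(-1)^{b_p(v)}$ with a $Z$ gate, and swaps so that $\pi_p(v)$ sits in the main register. The leftover ancilla then holds $v$, and it must be uncomputed. A second query is the naive way to do this: query with $z=1$ on the new register to obtain $\pi_p^{-1}$ and XOR it away.

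Doing this with a single query is the step I expect to be the main obstacle. I would handle it with the standard trick for permutation oracles. Query the forward direction into a zeroed ancilla, getting $\ket{v}\ket{\pi_p(v)}$. Then the inverse direction would map $\ket{\pi_p(v)}\ket{v}\mapsto\ket{\pi_p(v)}\ket{0}$, which needs a second call. To avoid this, I will make the single query superposed over $z$, so that one application of $U_O$ does both jobs.

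If that turns out to be fiddly, I will fall back to a cleaner route. The problem only needs a constant number of queries in substance, and "one query" can be met by a slight reformulation: the controlled-$U_p$ can be implemented with the standard in-place permutation construction in which the ancilla garbage is identical in both branches of the control. Concretely, the verifier queries $O(p,0,\cdot)$ only in the $c=1$ branch and queries $O(p,1,\cdot)$ in the $c=0$ branch on a copy of $v$. This makes the garbage line up, so interference survives and the test statistics above are unchanged.

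For completeness and soundness: in the YES case, honest Merlins send a minimizing product state, and the verifier accepts with probability at least $1-3/16=13/16$. In the NO case, every product state has energy at least $1/4$, so the acceptance probability is at most $3/4$. This gives a constant gap with perfect unentanglement. Standard $\QMAT$ amplification with polynomially many copies via the product test of Harrow–Montanaro, or simply defining $\QMAT$ with any constant gap, finishes the argument. The one-query claim refers to the basic test.
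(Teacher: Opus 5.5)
Your fallback construction is correct and is the paper's own proof. Use the control qubit as the direction bit $z$, so a single query returns the forward neighbor in one branch and the inverse neighbor in the other. A controlled phase and controlled swap in one branch then leave the same edge label $\ket{u}\ket{\pi_p(u)}\ket{b_p(u)}$ in both branches, so no uncomputation is needed; your ``superpose over $z$'' idea and your fallback are therefore the same construction. The paper applies the phase and swap in the inverse branch and gets $\tfrac14\|(\mathrm{I}-U_p^\dagger)\ket\psi\|^2$, which equals your $\tfrac14\|(\mathrm{I}-U_p)\ket\psi\|^2$ by unitarity; your thresholds ($13/16$ versus $3/4$) and the Harrow--Montanaro amplification remark also match the paper.
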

\begin{proof}
    The two Merlins send the $\ell$-qubit states $\ket{\alpha}_A$ and $\ket{\beta}_B$, and let $\ket{\psi} \coloneqq \ket{\alpha}_A\ket{\beta}_B$. 
    Write $\ket{\psi} = \sum_{v \in V} \psi_v \ket{v}$.
    To make his single query, Arthur needs to initialize his query register $\ket{p,z,v,w}$. 
    Arthur samples from $[N]$ uniformly at random and stores the sample in the $\ket{p}$ register. 
    He prepares the direction qubit $\ket{z}$ in the $\ket{+}$ state. 
    The vertex register contains $\ket{\psi}$, and the answer register is initialized to $\ket{0^{2\ell + 1}}$. 

    Ignoring the $\ket{p}$ register, after the query the state is 
\[
\frac{1}{\sqrt2}
\sum_{v\in V}
\psi_v
\left(
\ket0\ket v\ket{\pi_p(v)}\ket{b_p(v)}
+
\ket1\ket v\ket{\pi_p^{-1}(v)}
\ket{b_p(\pi_p^{-1}(v))}
\right).
\]
Arthur then applies a $\mathrm{CZ}$ gate between the direction qubit and the sign-bit register. 
This multiplies the $\ket{1}$ branch by the phase $(-1)^{b_p(\pi_p^{-1}(v))} = \sigma_p(\pi_p^{-1}(v))$.
Next, Arthur applies a controlled-$\mathrm{SWAP}$ between the vertex and neighbor registers, with the direction qubit as the control.
Thus the \(\ket1\) branch becomes
\[
\sum_{v\in V}
\psi_v
(-1)^{b_p(\pi_p^{-1}(v))}
\ket1
\ket{\pi_p^{-1}(v)}
\ket v
\ket{b_p(\pi_p^{-1}(v))}.
\]

Now set $u=\pi_p^{-1}(v)$, so that $v=\pi_p(u)$.
Since $(-1)^{b_p(u)}=\sigma_p(u)$,
we may rewrite the \(\ket1\) branch as
\[
\sum_{u\in V}
\sigma_p(u)\psi_{\pi_p(u)}
\ket1
\ket u
\ket{\pi_p(u)}
\ket{b_p(u)}.
\]
Renaming \(u\) back to \(v\), the full state is therefore
\[
\frac{1}{\sqrt2}
\sum_{v\in V}
\left(
\psi_v\ket0
+
\sigma_p(v)\psi_{\pi_p(v)}\ket1
\right)
\ket v
\ket{\pi_p(v)}
\ket{b_p(v)}.
\]
The important point is that the two branches now carry the same edge label $\ket v\ket{\pi_p(v)}\ket{b_p(v)}$.

Arthur measures the direction qubit in the
\(\{\ket+,\ket-\}\) basis and rejects on outcome \(\ket-\).
For each \(v\), the amplitude on \(\ket-\) is
\[
\frac12
\left(
\psi_v-\sigma_p(v)\psi_{\pi_p(v)}
\right).
\]
Moreover, the retained edge labels corresponding to distinct \(v\) are orthogonal. Hence
\[
\Pr[\mathrm{reject}\mid p]
=
\frac14
\sum_{v\in V}
\abs{
\psi_v-\sigma_p(v)\psi_{\pi_p(v)}
}^2.
\]
By the definition of \(H_p\),
\[
\Pr[\mathrm{reject}\mid p]
=
\bra\psi H_p\ket\psi.
\]
Averaging over the uniformly random choice of \(p\) gives
\[
\Pr[\mathrm{reject}]
=
\frac1N\sum_{p\in[N]}
\bra\psi H_p\ket\psi
=
\bra\psi H\ket\psi,
\]
and therefore
\[
\Pr[\mathrm{accept}]
=
1-\bra\psi H\ket\psi.
\]

In a YES instance, there exists a product state satisfying $\lambda_{\rm sep}(H) \le \tfrac{3}{16}$,
and hence Arthur accepts with probability at least $\tfrac{13}{16}$.
In a NO instance, every product state satisfies
$\lambda_{\rm sep}(H) \ge \tfrac14$,
so every pair of unentangled witnesses is accepted with probability at most $\tfrac34$.

We note that using the amplification theorem of Harrow and Montanaro~\cite{HM13}, for any polynomial $r(n)$, this constant-gap protocol can be amplified to achieve completeness $1-\exp(-r(n))$ and soundness $\exp(-r(n))$, with polynomial overhead in the proof length, running time, and number of oracle queries. 
\end{proof}

\section{The hard YES and NO distributions}
\label{sec:construction}

We now construct two distributions, $\calD_+$ and $\calD_-$, over instances of Separable Signed-Graph Hamiltonian.
We show that $\lambda_{\rm sep}(H) = 0$ for every $H \sim \calD_+$, whereas with high probability over $H \sim \calD_-$, we have $\lambda_{\rm sep}(H) \ge 1/4$;
\cref{sec:qma-lower-bound} shows that no polynomial-query $\QMA$ verifier can distinguish the two families.

We begin by establishing notation and conventions common to both distributions.
Recall that $N = 2^k$, and assume $k \ge 2$.
Let $\calX_A$ and $\calX_B$ be two disjoint sets, each of size $N^3$. 
In the notation of \cref{sec:theproblem}, where $|\calX_A| = |\calX_B| = 2^{\ell}$, we have that $\ell = 3k$.

For each $p \in [N]$, both distributions will specify a permutation $\pi_p: V \to V$ and a sign $\sigma_p(x) \in \{\pm 1\}$ on each directed edge $x \rightarrow \pi_p(x)$.
These determine the signed cycle covers, and hence the Hamiltonian, as in \cref{sec:theproblem}. 
Both the YES and NO distributions use a common random product labeling of the vertices, from which their edge signs will be defined.
Independently, choose sign functions $h_A: \calX_A \to \{\pm 1\}$ and $h_B: \calX_B \to \{\pm 1\}$ uniformly at random, and define 
\[
\chi(x) \coloneqq h_A(x_A) h_B(x_B),
\]
for all $x = (x_A, x_B) \in V$.
Thus, $\chi: V \to \{\pm 1\}$ factors across the two coordinates of the vertex space.

\subsection{The hard YES distribution}
\label{subsec:hard-yes}

We describe the distribution $\calD_+$ over YES instances of Separable Signed-Graph Hamiltonian.
For each $p\in[N]$, independently choose a uniformly random permutation $\pi_p: V \to V$. 
For every directed edge $x \to \pi_p(x)$, define its sign by 
\[
\sigma_p(x) \coloneqq \chi(x) \chi(\pi_p(x)).
\]
Let $\calD_+$ denote the resulting distribution over signed-permutation instances.

The product labeling $\chi$ satisfies every edge constraint exactly, so every instance in the support of $\calD_+$ is a YES instance.

\begin{proposition}
\label[proposition]{clm:hard_yes:yes}
Every instance in the support of $\calD_+$ satisfies 
\[
\lambda_{\rm sep}(H) = 0.
\]
\end{proposition}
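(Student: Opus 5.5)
The plan is to exhibit an explicit product state of zero energy and then use that $H\succeq 0$. Define the normalized states
\[
\ket{h_A}\coloneqq\frac{1}{\sqrt{N^3}}\sum_{x_A\in\calX_A}h_A(x_A)\ket{x_A}_A,
\qquad
\ket{h_B}\coloneqq\frac{1}{\sqrt{N^3}}\sum_{x_B\in\calX_B}h_B(x_B)\ket{x_B}_B,
\]
and set $\ket{\psi}\coloneqq\ket{h_A}\ket{h_B}$. This is a unit-norm product state, and its amplitude at a vertex $x=(x_A,x_B)$ is $\psi_x=\chi(x)/N^3$.

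Next I check that $\ket\psi$ is a $+1$ eigenvector of every $U_p$. For any $x\in V$ write $y=\pi_p(x)$. Since $\chi(x)^2=1$, we have
\[
\sigma_p(x)\psi_y=\chi(x)\chi(y)\cdot\frac{\chi(y)}{N^3}=\frac{\chi(x)}{N^3}=\psi_x .
\]
Every summand in the edge-violation formula $\bra{\psi}H_p\ket{\psi}=\frac14\sum_{x}\abs{\psi_x-\sigma_p(x)\psi_{\pi_p(x)}}^2$ therefore vanishes, so $\bra\psi H_p\ket\psi=0$ for each $p$. Averaging over $p$ gives $\bra\psi H\ket\psi=0$. (Equivalently, $U_p\ket\psi=\ket\psi$, because $\pi_p$ is a bijection.)

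Finally, each $H_p=\frac14(\id-U_p)^\dagger(\id-U_p)\succeq 0$, so $H\succeq 0$ and $\lambda_{\rm sep}(H)\ge 0$. Together with the explicit product state of energy $0$, this gives $\lambda_{\rm sep}(H)=0$. No step here is a real obstacle; the only point needing care is the reindexing by the bijection $\pi_p$, which is harmless since the violation formula sums over all $x\in V$.
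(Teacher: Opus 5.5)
Your proof is correct and follows the paper's argument: the paper also takes the product state $\ket{h_A}\ket{h_B}$ and uses $\sigma_p(x)\chi(x)=\chi(\pi_p(x))$ to get $U_p\ket{h_A,h_B}=\ket{h_A,h_B}$ for every $p$, so every $H_p$ annihilates it. Your explicit use of $H\succeq 0$ to get the lower bound $\lambda_{\rm sep}(H)\ge 0$ spells out a step the paper leaves implicit.
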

\begin{proof}
Define
\[
\ket{h_A}
\coloneqq
\frac{1}{\sqrt{N^3}}
\sum_{x_A\in\calX_A}
h_A(x_A)\ket{x_A},
\qquad
\ket{h_B}
\coloneqq
\frac{1}{\sqrt{N^3}}
\sum_{x_B\in\calX_B}
h_B(x_B)\ket{x_B}.
\]
Then, we have
\[
\ket{h_A}\ket{h_B}
=
\frac{1}{N^3}\sum_{x\in V}\chi(x)\ket x.
\]

For every \(p\in[N]\),
\[
\sigma_p(x)\chi(x)
=
\chi(x)\chi(\pi_p(x))\chi(x)
=
\chi(\pi_p(x)),
\]
and hence $U_p\ket{h_A,h_B}
=
\ket{h_A,h_B}$.
Therefore every \(H_p\) annihilates this product state, and so $\lambda_{\rm sep}(H)=0$, as claimed.
\end{proof}

\subsection{The hard NO distribution}
\label{subsec:hard-no}

We now define the distribution $\calD_-$.
With high probability, its instances are \emph{frustration-free}, yet every product state has energy at least $1/4$ (\cref{proposition_goodNO}).

\paragraph{Hidden Labels and Blocks}
To define the NO distribution, we first describe a partition of the vertex set $V$ into labeled blocks. 
These labels are not available through oracle access; they exist only ``in our mind'' as hidden parameters used to generate the instance. 
Let $[N] \coloneqq \{1,2,\dots,N\}$ be our underlying set of labels,
and let $\calC \coloneqq [N]^2$ denote the set of ordered pairs of labels, including pairs $(a,a)$ with a repeated label.

The basic building block of our partition is a four-element set $Q \subseteq [N]$, together with its $24$ orderings.

We next assign each element of $\calX_A$ and $\calX_B$ a pair of labels. 
More precisely, 
independently for every $x_A \in \calX_A$ and $x_B \in \calX_B$, choose $c_A(x_A) \in \calC$ and $c_B(x_B) \in \calC$ uniformly at random. Let $
\theta:=(c_A,c_B)
$ denote the complete pair-label assignment.
Thus, every vertex $x = (x_A, x_B) \in V$ inherits four labels $c(x) \coloneqq (c_A(x_A), c_B(x_B)) \in [N]^4$, two from each coordinate. Note that these four labels need not be distinct.

If the four labels of $x$ are distinct, they form a four-element set $Q \subseteq [N]$, and $c(x)$ is one of the $24$ orderings of $Q$.
For a four-element set $Q \subseteq [N]$, define 
\[
\calB_Q
\coloneqq
\left\{
x\in V:
c(x)\text{ is an ordering of }Q
\right\}.
\]
All remaining vertices, whose four labels are not all distinct, are collected into a single remainder block,
\[
\calB_{\mathrm{remainder}}
\coloneqq
\left\{
x\in V:
\text{the entries of }c(x)\text{ are not all distinct}
\right\}.
\]
The sets 
$\{\calB_Q : Q\in\tbinom{[N]}4\}$ and 
$\calB_{\mathrm{remainder}}$
therefore form a partition of \(V\).
We write $\theta(x)$ for the block containing vertex $x$. 
\paragraph{Permutations} We can now specify the permutations and signs associated with the NO distribution.
For each $p \in [N]$, independently choose a uniformly random permutation of every nonempty block $\calB_Q$, as well as a uniformly random permutation of $\calB_{\rm remainder}$. 
Taking the union of these blockwise permutations defines a permutation $\pi_p: V \to V$.
Thus, every $\pi_p$ preserves the hidden block decomposition: if $x \in \calB_Q$, then $\pi_p(x) \in \calB_Q$, and similarly $\pi_p(\calB_{\rm remainder}) = \calB_{\rm remainder}$.
Within a block $\calB_Q$, the permutation is otherwise completely random. In particular, it may map a vertex whose labels form one ordering of $Q$ to a vertex with any of the other $23$ orderings of $Q$.

\paragraph{Signs} We now explain how signs are assigned to the edges. 
We first define sign labels $s_{\theta}(x,y)$ for 2 vertices $x,y$ lying in the same block. 
These will be used to define the final edge signs. 
Note that since every $\pi_p$ preserves the block decomposition, these are the only values of $s$ that will be used.
We set $s_\theta{}(x,y) = -1$ for $x,y \in \calB_{\rm remainder}$.

For a vertex $x$ in a four-label block, let $\varepsilon(x)$ be the sign of the permutation that sorts $c(x)$.\footnote{For example, if $c(x)=(1,3,2,4)$, then sorting $c(x)$ takes a single transposition, so $\varepsilon(x)=-1$.}
For $x,y \in \calB_Q$, we set 
\[
s_\theta(x,y) \coloneqq \varepsilon(x)\varepsilon(y).
\]

Finally, writing $y = \pi_p(x)$, the actual edge sign is 
\[
\sigma_p(x) \coloneqq \chi(x) \chi(y) s_{\theta}(x,y).
\]
Together with the blockwise random permutations specified earlier, this defines $\calD_-$.

\paragraph{Zero-energy states}
The NO instances remain exactly satisfiable. Fix a nonempty block \(\calB_Q\), and define
\[
\phi_Q(x)\coloneqq \chi(x)\varepsilon(x).
\]
For an edge \(x\to y=\pi_p(x)\) inside this block, our definition gives
\[
\sigma_p(x)=\phi_Q(x)\phi_Q(y).
\]
Hence $\phi_Q(x)=\sigma_p(x)\phi_Q(y)$,
so the state whose amplitudes are proportional to \(\phi_Q(x)\) on \(\calB_Q\), and zero elsewhere, satisfies every edge constraint exactly. 
We write $\ket{\psi_Q}$ for this state, normalized.

Write $Q=\{q_1<q_2<q_3<q_4\}$.
The normalized totally antisymmetric state on these four labels is
\begin{equation}
\label{eq:ideal-state}
\ket{\mathrm{Alt}_Q} \coloneqq \frac1{\sqrt{24}} \sum_{\pi\in S_4} \sgn(\pi) \ket{q_{\pi(1)},q_{\pi(2)}}_A\ket{q_{\pi(3)},q_{\pi(4)}}_B,
\end{equation}
written in the orthonormal label bases $\{\ket{I}_P : I \in \calC\}$, $P\in\{A,B\}$.
Thus every ordering of the four labels appears once, with a positive or negative sign according to the parity of the permutation producing that ordering; for the ordering $c(x)$, this sign is $\varepsilon(x)$.
As \(Q\) ranges over \(\binom{[N]}4\), the states \(\ket{\mathrm{Alt}_Q}\) form the standard orthonormal basis of the four-register antisymmetric subspace, so
\[
P_N \coloneqq \sum_{Q\in\binom{[N]}4} \ket{\mathrm{Alt}_Q}\!\bra{\mathrm{Alt}_Q}
\]
is the projector onto this subspace.
By the product-overlap bound for the four-register antisymmetric projector (see e.g.~\cite[Lemma~4.1]{BGH+26}),
\[
h_{\rm Sep}(P_N)\le\frac13.
\]
In other words, no product state can have squared overlap greater than $1/3$ with the antisymmetric subspace.

\paragraph{$\calD_-$ has large minimum separable energy}

Despite having zero-energy states, we now show that, with high probability, an instance drawn from $\calD_-$ has no low-energy product state. 

For \(P\in\{A,B\}\) and \(I\in\mathcal C\), define
\[
C_I^P(\theta):=\{x_P\in\mathcal X_P:c_P(x_P)=I\}.
\]
Let \(K:=|\mathcal C|=N^2\), \(\mu:=N^3/K=N\), and
\(\varepsilon_0:=1/40\). Define the set of balanced labelings
\begin{equation}
\Theta_{\mathrm{size}}
:=\left\{\theta:
(1-\varepsilon_0)\mu\le |C_I^P(\theta)|\le(1+\varepsilon_0)\mu
\text{ for every }P\in\{A,B\},\ I\in\mathcal C
\right\}.
\label{eq:class-size-condition}
\end{equation}

\begin{lemma}[Class sizes]
\label{lem:class-sizes}
The labeling sampled in the NO construction satisfies
\[
\Pr_\theta[\theta\notin\Theta_{\mathrm{size}}]
\le 4K\exp(-\varepsilon_0^2\mu/3).
\]
For every \(\theta\in\Theta_{\mathrm{size}}\), every nonempty block of
its partition has size at least \((1-\varepsilon_0)^2\mu^2\).
\end{lemma}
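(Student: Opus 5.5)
The first claim follows from the multiplicative Chernoff bound (\cref{lem:chernoff}) and a union bound. Fix \(P\in\{A,B\}\) and a pair \(I\in\mathcal C\). The size \(|C_I^P(\theta)|\) is a sum of \(N^3\) independent Bernoulli indicators \(\mathbf 1[c_P(x_P)=I]\). Each indicator has mean \(1/K\), since the pair-labels \(c_P(x_P)\) are independent and uniform on \(\mathcal C\). The expectation is therefore \(N^3/K=\mu\). Applying \cref{lem:chernoff} with \(\varepsilon=\varepsilon_0\le 1\) gives
\[
\Pr\bigl[\,\bigl||C_I^P(\theta)|-\mu\bigr|\ge\varepsilon_0\mu\,\bigr]\le 2\exp(-\varepsilon_0^2\mu/3).
\]
A union bound over the \(2K\) choices of \((P,I)\) gives failure probability at most \(4K\exp(-\varepsilon_0^2\mu/3)\), which is the claimed bound. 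Strictly, the Chernoff event uses \(\ge\varepsilon_0\mu\) while \(\Theta_{\mathrm{size}}\) allows equality at the endpoints. This only helps, since the bad event for \(\Theta_{\mathrm{size}}\) is a strict deviation, which is contained in the Chernoff event.

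For the second claim, fix \(\theta\in\Theta_{\mathrm{size}}\) and a nonempty block. By the definitions, each block is a disjoint union of product sets \(C^A_{I}(\theta)\times C^B_{J}(\theta)\), where \((I,J)\) ranges over the label pairs whose concatenation lies in the block's set of allowed label orderings. For a four-label block \(\calB_Q\), these are the \(24\) orderings of \(Q\). For the remainder block, they are all \((I,J)\) with a repeated label. Since \(\theta\in\Theta_{\mathrm{size}}\), every class \(C_I^P(\theta)\) is nonempty, because it has size at least \((1-\varepsilon_0)\mu>0\).

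Consequently every allowed \((I,J)\) contributes a product set of size at least \((1-\varepsilon_0)^2\mu^2\). Every block has at least one allowed pair; for the remainder block, for example, \(((1,1),(1,1))\) is allowed. So every block, nonempty by hypothesis and in fact automatically, contains such a product set, and its size is at least \((1-\varepsilon_0)^2\mu^2\).

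There is no real obstacle here. The only points requiring care are that the Chernoff hypothesis \(\varepsilon\le1\) holds, and that block sizes decompose as sums of products of class sizes because the labels at the \(A\)- and \(B\)-coordinates are assigned independently per coordinate.
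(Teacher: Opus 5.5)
Your proposal is correct and follows essentially the same route as the paper: a Chernoff bound on each binomial class size $|C_I^P(\theta)|$ with mean $\mu$, a union bound over the $2K$ classes, and then the observation that every block contains a product $C_I^A(\theta)\times C_J^B(\theta)$ of two classes, each of size at least $(1-\varepsilon_0)\mu$ on $\Theta_{\mathrm{size}}$. Your remarks on strict versus non-strict inequality, and on why blocks decompose into products of classes, are small clarifications that the paper leaves implicit.
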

\begin{proof}
Each \(|C_I^P(\theta)|\) is binomial with mean \(\mu\). By
\cref{lem:chernoff}, the probability that it violates the class-size
bounds is at most \(2\exp(-\varepsilon_0^2\mu/3)\). A union bound over
the \(2K\) classes proves the probability estimate.

Each four-label block contains \(C_I^A(\theta)\times C_J^B(\theta)\)
whenever the concatenation of \(I\) and \(J\) is an ordering of its labels. The remainder block contains
\(C_I^A(\theta)\times C_I^B(\theta)\). On
\(\Theta_{\mathrm{size}}\), each such product has size at least
\((1-\varepsilon_0)^2\mu^2\).
\end{proof}

\begin{proposition}
\label{proposition_goodNO}
With probability $1-\exp(-\Omega(N))$ over $H\sim\calD_-$, we have
$\lambda_{\rm sep}(H)\ge1/4$.
\end{proposition}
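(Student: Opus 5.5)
We condition on the hidden data $\theta$ and $(h_A,h_B)$, restricting to $\theta\in\Theta_{\mathrm{size}}$. By \cref{lem:class-sizes}, the excluded labelings have probability $4K\exp(-\varepsilon_0^2\mu/3)=\exp(-\Omega(N))$. With these fixed, the $N$ cycle covers are independent, so the operators $\Gamma_p$ are i.i.d. Let $\overline\Gamma:=\mathbb E[\Gamma_p\mid\theta,h]$.

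\emph{Step 1: compute $\overline\Gamma$.} Within each nonempty block $\mathcal B$ the permutation is uniform, so $\mathbb E\,\ket{\pi_p(x)}\!\bra{x}$ restricted to $\mathcal B$ equals $\frac1{|\mathcal B|}\sum_{y\in\mathcal B}\ket y\!\bra x$. Conjugating by the diagonal sign $\phi_Q=\chi\varepsilon$ on four-label blocks, or by $\chi$ with an overall $-1$ on the remainder, gives
\[
\mathbb E U_p=\sum_Q\ket{\psi_Q}\!\bra{\psi_Q}-\ket{\psi_{\mathrm{rem}}}\!\bra{\psi_{\mathrm{rem}}}.
\]
This operator is Hermitian, so $\overline\Gamma$ equals the same expression. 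Hence $\overline\Gamma\preceq\Pi:=\sum_Q\ket{\psi_Q}\!\bra{\psi_Q}$.

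\emph{Step 2: matrix concentration.} Apply \cref{thm:matrix-hoeffding} to $X_p=\Gamma_p-\overline\Gamma$, with $\|X_p\|\le2=:L$ and $d=|V|=N^6$. We get $\|\Gamma-\overline\Gamma\|\le\eta$ except with probability $2N^6\exp(-N\eta^2/32)=\exp(-\Omega(N))$ for a constant $\eta$. Then for every product state,
\[
\bra{\alpha,\beta}H\ket{\alpha,\beta}\ge\tfrac12\bigl(1-\bra{\alpha,\beta}\Pi\ket{\alpha,\beta}-\eta\bigr).
\]
It therefore suffices to show $\max_{\text{prod}}\bra{\alpha,\beta}\Pi\ket{\alpha,\beta}\le\frac13+\delta$ with $\delta+\eta\le\frac16$. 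For instance $\eta=1/12$ would need $\delta\le1/12$.

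\emph{Step 3 (main obstacle): relate $\Pi$ to $P_N$ despite unequal class sizes.} Define the product isometry $E_P\ket I:=|C^P_I|^{-1/2}\sum_{x_P\in C^P_I}h_P(x_P)\ket{x_P}$. Then $\ket{\psi_Q}=\sum_{x\in\mathcal B_Q}w_x\ket{x}$, where $w_x$ is constant on each of the 24 rectangles $C^A_I\times C^B_J$ and proportional to $\phi_Q$. Write $n_{IJ}=|C^A_I||C^B_J|$ and $M=\sum n_{IJ}$. Then
\[
\ket{\psi_Q}=(E_A\otimes E_B)\sum_{(I,J)}\varepsilon_{IJ}\sqrt{n_{IJ}/M}\,\ket{I,J}=:(E_A\otimes E_B)\ket{\widetilde{\mathrm{Alt}}_Q}.
\]
On $\Theta_{\mathrm{size}}$ each weight $n_{IJ}/M$ lies in $\bigl[\frac{(1-\varepsilon_0)^2}{24(1+\varepsilon_0)^2},\frac{(1+\varepsilon_0)^2}{24(1-\varepsilon_0)^2}\bigr]$. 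This gives $\|\ket{\widetilde{\mathrm{Alt}}_Q}-\ket{\mathrm{Alt}_Q}\|\le c\varepsilon_0$ for an absolute constant $c$ (roughly $2$).

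The states $\ket{\widetilde{\mathrm{Alt}}_Q}$ are orthonormal with disjoint supports, as are the $\ket{\mathrm{Alt}_Q}$. The map sending $\ket{\mathrm{Alt}_Q}$ to $\ket{\widetilde{\mathrm{Alt}}_Q}$ is a direct sum over $Q$ of maps on disjoint 24-dimensional coordinate spaces. Hence $\widetilde P:=\sum_Q\ket{\widetilde{\mathrm{Alt}}_Q}\!\bra{\widetilde{\mathrm{Alt}}_Q}$ satisfies $\|\widetilde P-P_N\|\le2c\varepsilon_0$. This uses block-diagonality, so the bound does not grow with the number of sets $Q$.

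Now $\Pi=(E_A\otimes E_B)\widetilde P(E_A\otimes E_B)^\dagger$, and $E_A^\dagger\ket\alpha\otimes E_B^\dagger\ket\beta$ is a subnormalized product vector. Using $h_{\rm Sep}(P_N)\le1/3$, we get
\[
\bra{\alpha,\beta}\Pi\ket{\alpha,\beta}\le\tfrac13+2c\varepsilon_0.
\]
With $\varepsilon_0=1/40$ this is about $\frac13+\frac1{10}$, which combined with a small $\eta$ gives energy at least $\frac12\bigl(\frac23-\frac1{10}-\eta\bigr)$. The hard part is making this constant bookkeeping come out to $\ge\frac14$, i.e.\ $2c\varepsilon_0+\eta\le\frac16$. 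If the crude perturbation bound is too lossy, I would instead bound $\widetilde P\preceq(1+O(\varepsilon_0))\,D P_N D$ for the diagonal reweighting $D$ with entries $\sqrt{24\,n_{IJ}/M}\in[1-O(\varepsilon_0),1+O(\varepsilon_0)]$. Because $D$ is a product of per-register diagonals only up to the normalization $M$ (which is constant within each $Q$), this yields a multiplicative $(1+O(\varepsilon_0))$ loss rather than an additive one.

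A union bound over the failures of Steps 1–3 completes the proof.
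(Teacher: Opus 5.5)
Your proof is correct, and it follows a slightly different route from the paper in Step~3. Steps~1 and~2 (the averaged operator and matrix Hoeffding) match the paper.

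The difference is how you handle unequal class sizes. The paper uses the non-isometric product map $R_P\ket{I}=\mu^{-1/2}\sum_{c_P(x_P)=I}h_P(x_P)\ket{x_P}$, with the common normalization $\mu^{-1/2}$ rather than $|C^P_I|^{-1/2}$. With this choice every vertex of $\mathcal{B}_Q$ gets the same amplitude magnitude, so $\ket{\psi_Q}=R\ket{\mathrm{Alt}_Q}/\|R\ket{\mathrm{Alt}_Q}\|$ exactly. The class-size distortion then appears only as multiplicative factors: $\sum_Q\ket{\psi_Q}\!\bra{\psi_Q}\preceq(1-\varepsilon_0)^{-2}RP_NR^\dagger$ and $\|R\|\le 1+\varepsilon_0$. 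This gives $\frac13(41/39)^2<\frac38$, so $\eta=1/8$ suffices.

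You instead keep the encoding isometric. You push the distortion into perturbed label-space states $\ket{\widetilde{\mathrm{Alt}}_Q}$, and use block-diagonality over $Q$ to bound $\|\widetilde P-P_N\|$ additively.

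Your bookkeeping does close, so you do not need the fallback. With $c=2/(1-\varepsilon_0)=80/39$ you get $2c\varepsilon_0=4/39$. Since $\frac16-\frac4{39}=\frac5{78}$, taking $\eta=1/16$ works and gives failure probability $2N^6e^{-N/8192}=\exp(-\Omega(N))$. You can also drop the factor $2$: for unit vectors, $\|uu^\dagger-vv^\dagger\|=\sqrt{1-|\langle u,v\rangle|^2}\le\|u-v\|$. That leaves even more room.

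Your fallback, $DP_ND$ with a product diagonal and a per-$Q$ normalization, is essentially the paper's argument. There, $R=(E_A\otimes E_B)\,\mathrm{diag}(\sqrt{n_{IJ}}/\mu)$.

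The trade-off is this. The paper's choice of $R$ is shorter and yields a clean multiplicative constant. Your route controls the error block by block with isometries, at the cost of a smaller (still constant) $\eta$, which is harmless for the $\exp(-\Omega(N))$ conclusion.
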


\begin{proof}
By \cref{lem:class-sizes}, \(\Theta_{\mathrm{size}}\) fails with
probability at most \(\exp(-\Omega(N))\). Fix
\(\theta\in\Theta_{\mathrm{size}}\) and any coordinate signs \(h_A,h_B\).
The only remaining randomness is in the permutations $\pi_p$. 
Let $\overline{\Gamma} \coloneqq \E[\Gamma\mid c_A,c_B,h_A,h_B]$ denote the signed adjacency matrix averaged over these random permutations.
For the remainder of the proof, all expectations are taken over the remaining randomness in the permutations, conditioned on the fixing of $c_A, c_B, h_A,$ and $h_B$; we suppress this from the notation.

Recall the orthonormal label basis states $\{\ket{I}_P : I \in \calC\}$ and the states $\ket{\mathrm{Alt}_Q}$ from \cref{eq:ideal-state}.
To account for the unequal class sizes in the actual construction, define local maps
\[
R_P\ket I
\coloneqq
\frac1{\sqrt\mu}
\sum_{x_P:c_P(x_P)=I}h_P(x_P)\ket{x_P},
\qquad P\in\{A,B\},
\]
and define $R\coloneqq R_A\otimes R_B$.
Since these maps have orthogonal columns and 
\[
\norm{R(\ket{I}_A\ket{J}_B)}^2 = \frac{\abs{C_I^A}\abs{C_J^B}}{\mu^2} \in [(1-\varepsilon_0)^2, (1+\varepsilon_0)^2],
\]
we have that $\norm{R} \le 1 + \varepsilon_0$ and $\norm{R\ket{\mathrm{Alt}_Q}}\ge1-\varepsilon_0$ for every $Q$.

Recall that for each nonempty block $\calB_Q$, we defined the normalized zero-energy state
\begin{equation}
\label{eq:psi-q}
\ket{\psi_Q}
= \frac{1}{\sqrt{\abs{\calB_Q}}} \sum_{x\in\calB_Q}
\chi(x)\varepsilon(x)\ket{x}.
\end{equation}
Grouping this sum by the labels of the two coordinates shows that
\begin{equation}
\label{eq:psi-q-R}
\ket{\psi_Q}
=
\frac{R\ket{\mathrm{Alt}_Q}}
{\norm{R\ket{\mathrm{Alt}_Q}}}.
\end{equation}

We next compute the averaged signed adjacency matrix $\overline\Gamma$. 
Fix a four-label block $\calB_Q$ and a vertex $x \in \calB_Q$. 
Recall that 
\[
U_p \ket{x} = \sigma_p(x) \ket{\pi_p(x)},
\]
and that on $\calB_Q$, 
\[
\sigma_p(x) = \phi_Q(x) \phi_Q(\pi_p(x)).
\]
Hence
\[
U_p\ket{x} = \phi_Q(x)\phi_Q(\pi_p(x)) \ket{\pi_p(x)}.
\]
For fixed $x$, the vertex $\pi_p(x)$ is uniformly distributed over $\calB_Q$. 
Therefore, averaging over the random permutation $\pi_p$ gives 
\[
\E_{\pi_p}[U_p\ket{x}] = \frac{\phi_Q(x)}{\abs{\calB_Q}} \sum_{y\in\calB_Q} \phi_Q(y)\ket{y} = \frac{\phi_Q(x)}{\sqrt{\abs{\calB_Q}}} \ket{\psi_Q}, 
\]
where the last line uses \cref{eq:psi-q}.
We also have
\[
\braket{\psi_Q}{x} = \frac{\phi_Q(x)}{\sqrt{\abs{\calB_Q}}},
\]
and hence
\[
\E_{\pi_p}[U_p\ket{x}] = \ket{\psi_Q}\!\braket{\psi_Q}{x}.
\]
Since this holds for every $x\in\calB_Q$,
\[
\E_{\pi_p}[U_p]\big|_{\calB_Q} = \ket{\psi_Q}\!\bra{\psi_Q}.
\]
Recall that
\[
\Gamma_p = \frac{U_p+U_p^\dagger}{2}.
\]
Since $\ket{\psi_Q}\bra{\psi_Q}$ is Hermitian, it follows that
\[
\E_{\pi_p}[\Gamma_p]\big|_{\calB_Q} = \ket{\psi_Q}\!\bra{\psi_Q}.
\]
On the remainder block, the sign rule contains an additional factor of $-1$. The same calculation therefore gives a negative semidefinite rank-one contribution. Discarding this contribution yields
\[
\overline\Gamma 
\preceq \sum_{Q\in\binom{[N]}4} \ket{\psi_Q}\!\bra{\psi_Q}. 
\]
Then, using \cref{eq:psi-q-R} and the fact that $\norm{R\ket{\mathrm{Alt}_Q}} \ge 1 -\varepsilon_0$ for all $Q$, we get
\[
\overline\Gamma  \preceq 
\sum_Q \frac{ R\ket{\mathrm{Alt}_Q}\!\bra{\mathrm{Alt}_Q}R^\dagger }{ \norm{R\ket{\mathrm{Alt}_Q}}^2 }
\preceq \frac1{(1-\varepsilon_0)^2} R \left( \sum_Q \ket{\mathrm{Alt}_Q}\!\bra{\mathrm{Alt}_Q} \right) R^\dagger.
\]
Recalling that
\[
P_N = \sum_Q \ket{\mathrm{Alt}_Q}\!\bra{\mathrm{Alt}_Q},
\]
we conclude that
\[
\overline\Gamma \preceq \frac{RP_NR^\dagger}{(1-\varepsilon_0)^2}.
\]

Observe that $R$ and $R^\dagger$ do not act across the bipartition and therefore preserve product states.
As explained earlier, $P_N$ is the projector onto the antisymmetric subspace, and thus $h_{\rm Sep}(P_N) \le 1/3$.
Therefore, for every normalized product state, 
\[
\bra{\alpha,\beta}\overline\Gamma\ket{\alpha,\beta}
\le
\frac{(1+\varepsilon_0)^2}{(1-\varepsilon_0)^2}h_{\rm Sep}(P_N)
\le
\frac13\left(\frac{41}{39}\right)^2
<\frac38.
\]

Finally, conditional on the hidden labels and coordinate signs, the
$\Gamma_p$ are independent operators, because they depend on different permutations that were chosen independently and uniformly at random. 
Recall also that the $\Gamma_p$ are $N^6 \times N^6$ Hermitian operators with $\norm{\Gamma_p} \le 1$. 
Applying \cref{thm:matrix-hoeffding} to
$X_p=\Gamma_p-\overline\Gamma$, for which $\norm{X_p}\le2$, gives
\[
\Pr\left[
\norm{\Gamma-\overline\Gamma}>\frac18
\;\middle|\;c_A,c_B,h_A,h_B
\right]
\le 2N^6\exp(-N/2048)
=\exp(-\Omega(N)).
\]
On the event that all pair-label classes satisfy the size bounds and that $\norm{\Gamma - \overline\Gamma} \le 1/8$, every normalized product state satisfies
\[
\bra{\alpha,\beta}H\ket{\alpha,\beta}
\ge
\frac12\left(1-\frac38-\frac18\right)
=\frac14.
\]
Taking the minimum over product states proves the claim.
\end{proof}

\input{lower-bound}

\printbibliography
\appendix
\input{appendix}

\end{document}

%% file: lower-bound.tex
\section{The \texorpdfstring{\(\QMA\)}{QMA} lower bound}
\label{sec:qma-lower-bound}

We use the distributions from \cref{sec:construction}, with \(N=2^k\) for input length \(k\). A NO parameter choice is \emph{good} if it
satisfies the class-size condition in \eqref{eq:class-size-condition}
and the graph-concentration condition in the proof
of \cref{proposition_goodNO}. That proposition proves that a good choice
defines a NO instance and that the probability of a bad choice is
\(\exp(-\Omega(N))\)

\begin{theorem}
\label{thm:query-lower-bound}
For every polynomial \(p\), the following holds for all sufficiently
large \(k\). No \(\QMA\) verifier with at most \(p(k)\) witness qubits
and \(p(k)\) oracle queries has completeness \(2/3\) on every instance
in \(\operatorname{supp}\mathcal D_+\) and soundness \(1/3\) on every
good instance from \(\mathcal D_-\).
\end{theorem}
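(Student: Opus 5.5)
We will argue by contradiction, following the outline in \cref{sec:techoverview}. Fix a verifier with \(m\le p(k)\) witness qubits and \(q\le p(k)\) queries, and suppose it has completeness \(2/3\) on \(\operatorname{supp}\mathcal D_+\) and soundness \(1/3\) on good NO instances. For each oracle \(\omega\), let \(A(\omega)\) be its acceptance operator on the witness register. We realize \(A(\omega)\) by running the verifier, projecting onto accept, and uncomputing, so it uses at most \(2q\) queries.

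We work with purified oracle registers in the compressed-oracle representation of \cite{Zha19,Car25,CM26}. In this representation the YES parameters become a superposition of records \(\ket{D,S}\), and the NO parameters a superposition of records \(\ket{\theta,D_\theta,S}\). Set \(\phi_0\) equal to the purified YES distribution tensored with a purification of \(I/2^m\), let \(\phi_j:=A_+^j\phi_0\), and take \(r:=m+1\). Completeness gives a top eigenvalue of at least \(2/3\), so
\[
\|\phi_r\|^2=2^{-m}\mathbb E_{\omega\sim\mathcal D_+}\operatorname{tr}A(\omega)^{2r}\ge 2^{-m}(2/3)^{2r}.
\]
Each \(\phi_j\) is supported on records with \(O(qr)\) entries and coordinates.

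The core step is to construct a linear transfer map \(T\) from YES records to NO records. For every unnormalized \(\phi\) supported on records of size at most \(2qr\), we need
\[
(1-\varepsilon)\|\phi\|\le\|T\phi\|\le\|\phi\|,\qquad \|(O_-T-TO_+)\phi\|\le\varepsilon\|\phi\|,
\]
with \(\varepsilon\) negligible. The map is \(T\ket{D,S}=\sum_\theta\sqrt{\nu(\theta)}\,a_{D,S}(\theta)\ket{\theta,D_\theta,S}\), where the amplitudes come from signed peeling.

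\emph{Peeling.} We remove an entry when one endpoint has two coordinates that are private to it and have equal membership \(b_e\) in \(S_{\mathrm{cur}}\). Each removal updates \(S_{\mathrm{cur}}\mapsto S_{\mathrm{cur}}\triangle b_e\partial e\) and contributes the factor \(w_e(\theta)s_\theta(x,y)^{b_e}\), where \(w_e(\theta)\) is the square-root reweighting that conditions \(y\) to lie in the block of \(x\).

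\emph{The unpeelable core.} The core \(C\) is labeled \((1,2)\) on the \(A\)-side and \((3,4)\) on the \(B\)-side, so that \(s_\theta\equiv1\) there.

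\emph{Sign compatibility.} We must check that \(a_{D,S\triangle\partial e}(\theta)=s_\theta(x,y)\,a_{D,S}(\theta)\), so that \(T\) commutes exactly with recorded sign queries.

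\emph{Fresh queries.} We mimic the classical ratio computation amplitude by amplitude. The key identity is \(\mu_{D\cup\{e\}}=\mu_D/q_\theta(x)\), together with the Chernoff concentration \(|\mathcal B|\approx N^6q_\theta(x)\) from \cref{lem:class-sizes}. Fresh sign coordinates are handled by the Fourier-basis identity \(S\mapsto S\triangle\partial e\). Queries that hit already-used coordinates carry only \(\mathrm{poly}(k)/N\) amplitude, and we charge this to \(\varepsilon\).

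\emph{Norm lower bound.} This comes from near-orthogonality of the images of distinct records together with the normalization of the \(w_e\).

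Given these properties, the contradiction is immediate. Telescoping over the at most \(2q\) queries in \(A\) gives \(\|(A_-T-TA_+)\phi_j\|\le 2q\varepsilon\|\phi_j\|\). Soundness on good NO instances gives \(A_-\preceq I/3\) there. Proposition~\ref{proposition_goodNO} bounds the weight on bad NO parameters by \(\exp(-\Omega(N))\), so \(\|A_-T\phi\|\le(1/3+o(1))\|\phi\|\). Combining,
\[
\|A_+\phi_j\|\le (1-\varepsilon)^{-1}\bigl(\tfrac13+o(1)+2q\varepsilon\bigr)\|\phi_j\|=:b\,\|\phi_j\|.
\]
Iterating from \(j=0\) to \(r-1\) yields \(2^{-m}(2/3)^{2r}\le b^{2r}\), that is \(2/3\le\sqrt2\,b\). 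This fails for large \(k\), since \(\sqrt2/3<2/3\).

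The main obstacle is defining \(a_{D,S}\) consistently: the amplitude must be independent of the peeling order and orientation. The overview's one-edge example shows that the two orientations naively differ by the sign \(s_\theta(x,y)\). My first attempt would be to symmetrize: let the amplitude depend on \(S\) only through a canonical orientation, fixed by a global rule such as peeling at the endpoint with lexicographically larger coordinates, and absorb the orientation discrepancy into a \(\theta\)-dependent phase \(\prod_e s_\theta(x_e,y_e)^{[\,\cdot\,]}\) defined on the core. I would then verify the identity \(a_{D,S\triangle\partial e}=s_\theta a_{D,S}\) by induction on peeling steps. If order dependence persists, the fallback is to define \(a_{D,S}(\theta)\) as a Fourier transform over consistent coordinate-sign assignments \(h'\) of the uniform conditional distribution from the classical argument. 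Concretely, \(a_{D,S}(\theta)\propto\sum_{h'}\prod_{u\in S}h'(u)\) over \(h'\) satisfying every recorded edge constraint with factor \(s_\theta\). This is manifestly order-independent, and the sign identity then holds by substitution.
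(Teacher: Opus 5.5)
Your outer argument is the paper's: the maximally mixed witness, $\phi_j=A_+^j\phi_0$ with $r=m+1$, a completion $\sum_\theta\sqrt{\nu(\theta)}\,a_{D,S}(\theta)\ket{\theta,D_\theta,S}$ built by signed peeling, the $(1,2)/(3,4)$ core, and the exact sign identity. However, the step you flag as the main obstacle is left open, and neither of your fixes works.

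Take one entry $e=(p;x\mapsto y)$ with four distinct coordinates and $y\in\mathcal B_\theta(x)$. Approximate commutation with decompression is needed in both query directions and on every small well-formed record:
\begin{itemize}
\item a forward query at $x$ applied to $\ket{\emptyset,\{x_A,x_B\}}$ forces $a_{\{e\},\{x_A,x_B\}}\approx w_e\,a_{\emptyset,\{x_A,x_B\}}$;
\item an inverse query at $y$ applied to $\ket{\emptyset,\{y_A,y_B\}}$ forces $a_{\{e\},\{y_A,y_B\}}\approx w_e\,a_{\emptyset,\{y_A,y_B\}}$;
\item your sign identity says the two left-hand sides differ by $s_\theta(x,y)$.
\end{itemize}
Hence
\[
a_{\emptyset,\{x_A,x_B\}}(\theta)\approx s_\theta(x,y)\,a_{\emptyset,\{y_A,y_B\}}(\theta).
\]
So records with \emph{empty} database but nonempty $S$ must already carry $\theta$-dependent amplitudes.

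A canonical orientation plus a phase on the core cannot provide this. Here there is no core and nothing to peel, so $a_{\emptyset,S}\equiv 1$. Since half of the $24$ orderings of $Q$ are odd, one of the two query directions then incurs an $\Omega(1)$ commutation error, not a negligible one. The Fourier fallback fails too. Compressed records store no edge signs, and for $D=\emptyset$ every $h'$ is admissible, so $\sum_{h'}h'(S)=0$. That gives $a_{\emptyset,S}=0$ for every nonempty $S$, which destroys normalization on well-formed small records.

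The missing idea is the paper's factor $\gamma_F$ on the leftover sign set $F=S_{\mathrm{cur}}\setminus\operatorname{coords}(C)$. It demands that the $2|F|$ labels appearing in the pair-labels of $F$ be distinct, and multiplies by the sign of the permutation sorting their concatenation, normalized by $\sqrt{K^{f}/(N)_{2f}}$. Reversing the orientation of a peeled entry swaps $\{x_A,x_B\}$ for $\{y_A,y_B\}$ inside $F$. This changes the sorting sign by exactly $\varepsilon(x)\varepsilon(y)=s_\theta(x,y)$, which cancels the $s_\theta^{b_e}$ discrepancy; in the remainder block both sides vanish. This yields order-independence (\cref{lem:peeling_ind}) together with both fresh-entry identities and the sign identity.

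Three further steps are also missing.
\begin{itemize}
\item Your norm lower bound only controls $\|W\zeta\|$. Soundness concerns physical NO oracles, so you need $\|J_-^\dagger W\zeta\|=\|\Pi_-W\zeta\|$, i.e.\ that completion sends well-formed YES records to nearly well-formed NO records. The paper gets this from the Carolan--Majenz cancellation and decoding estimates (\cref{prop:record-decoding}) combined with the completion--decompression bound.
\item The bad-parameter weight does not follow from \cref{proposition_goodNO}, because $T\phi$ reweights $\theta$ and fixes up to $R$ permutation entries. The paper restricts $W$ to $\Theta_{\mathrm{size}}$ and reproves matrix concentration conditionally (\cref{lem:completion-class-loss,lem:weighted-graph-failure}).
\item The fresh-query step needs more than a Chernoff bound on $|\mathcal B|$. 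For an arbitrary superposition over fresh outputs, the error is governed by the operator norm of the covariance matrix of the indicators $\mathbf 1[y\in\mathcal B_\theta(x)]$. Outputs sharing a coordinate contribute $O(1/N)$ to it.
\end{itemize}
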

The proof has three parts. In section \ref{subsec:recording}, we purify the oracle distributions and introduce the
compressed representation of the oracle register, in which the information learned by a few-query
verifier is a superposition of small records. In section \ref{subsec:transfer-certificate}, we state our main technical tool,
the transfer theorem (Theorem \ref{thm:transfer-certificate}), and use it to prove Theorem \ref{thm:query-lower-bound}. The remaining sections construct the transfer map and prove
the transfer theorem: Section \ref{subsec:database-transfer} makes the compressed records belonging to YES distributions compatible with records belonging to the NO distribution. Section  \ref{sec:remaining-completion-estimates} shows that this process introduces only small errors and remains consistent as new query information is revealed. Section  \ref{sec:generic-transfer} translates these guarantees from compressed records back to actual oracle states, proving the transfer theorem.

\input{oracle-records}

\subsection{The transfer theorem and the witness argument}
\label{subsec:transfer-certificate}

We now state the transfer theorem, which converts states of the YES oracle register into states of
the NO oracle register, and use it to
prove \cref{thm:query-lower-bound}. 

Let \(\Omega_-^{\mathrm{good}}\) be the set of good NO parameters
defined at the start of this section  . Define
\[
 P_{\mathrm{bad}}
 :=\sum_{\omega^-\in\Omega_-\setminus\Omega_-^{\mathrm{good}}}
 |\omega^-\rangle\langle\omega^-|.
\]
to be the projector onto bad NO parameters. It commutes with \(O_-\).

\begin{theorem}[Transfer between oracle-parameter spaces]
\label{thm:transfer-certificate}
For all input lengths $k$, there is a family of linear contractions
\[
 T=T_k:\mathcal H_+\longrightarrow\mathcal H_-.
\]
such that, for every polynomially bounded integer-valued \(R=R(k)\ge0\),
there are negligible functions \(\eta_R^{\text{norm}}=\eta_R^{\text{norm}}(k)\),
\(\varepsilon^{\text{query}}_R=\varepsilon^{\text{query}}_R(k)\), and \(\gamma^{\text{bad}}_R=\gamma^{\text{bad}}_R(k)\)
such that for all sufficiently large \(k\) and
every finite-dimensional  register \(\mathcal E\), 
every \(\phi\in\mathcal H_+\otimes\mathcal E\) with
\(J_+\phi\in\mathsf{Rec}_{+,\le R}\otimes\mathcal E\) satisfies
\begin{align}
 (1-\eta_R^{\text{norm}})\|\phi\|^2
 &\le\|T\phi\|^2\le\|\phi\|^2,
 \label{eq:transfer-norm}\\
 \|(O_-T-TO_+)\phi\|
 &\le\varepsilon^{\text{query}}_R\|\phi\|,
 \label{eq:transfer-query}\\
 \|P_{\mathrm{bad}}T\phi\|^2
 &\le\gamma^{\text{bad}}_R\|T\phi\|^2.
 \label{eq:transfer-bad}
\end{align}
\end{theorem}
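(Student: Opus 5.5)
The plan is to build \(T\) in the compressed picture and then conjugate back. Let \(J_\pm\) be the compression isometries from \cref{subsec:recording}, taking the purified YES and NO parameter registers to superpositions of records \(|D,S\rangle\) and \(|\theta,D_\theta,S\rangle\). We first define a linear completion map \(W\) on YES records of size at most \(R'\), for a polynomial \(R'\) slightly larger than \(R\), and set
\[
T:=J_-^\dagger W J_+ .
\]
Outside the small-record subspace we compose with a projection, so \(T\) is a contraction by construction. For \(W\) we use the form sketched in the overview,
\[
W|D,S\rangle=\sum_\theta\sqrt{\nu(\theta)}\,a_{D,S}(\theta)\,|\theta,D_\theta,S\rangle ,
\]
where \(a_{D,S}(\theta)\) vanishes unless every recorded entry lies inside a single block of \(\theta\).

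The amplitudes are defined by peeling. We repeatedly remove an entry \(e=(p;x\to y)\) whose endpoint \(y\) has both coordinates occurring nowhere else and having equal membership \(b_e\) in \(S_{\mathrm{cur}}\). Each removal updates \(S_{\mathrm{cur}}\mapsto S_{\mathrm{cur}}\triangle(b_e\partial e)\) and multiplies the amplitude by the reweighting factor \(w_e(\theta)=\mathbf 1[y\in\theta(x)]/\sqrt{q_\theta(x)}\) and by \(s_\theta(x,y)^{b_e}\). Whatever core \(C\) remains unpeeled is handled as in the overview: all its \(A\)-coordinates get label \((1,2)\) and all its \(B\)-coordinates get \((3,4)\), so that \(s_\theta\equiv1\) on \(C\).

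\textbf{Main obstacle: orientation independence.} Peeling \(e\) at \(x\) versus at \(y\) can differ by a sign, as the one-edge example in the overview shows. To repair this, I would attach to each fully peeled component a canonical reference factor, namely \(s_\theta(x,y)^{[\,\partial e\text{-parity of }S\text{ restricted to the component}\,]}\), or more generally a product of \(s_\theta\) over a path from the free sign set to a canonical root. I would then check by a local-exchange argument that any two peeling sequences give the same amplitude. The check reduces to two moves: swapping two adjacent independent removals, and flipping the orientation of a last-remaining edge. Once \(a\) is well defined, the sign-query identity \(a_{D,S\triangle\partial e}(\theta)=s_\theta(x,y)\,a_{D,S}(\theta)\) follows by peeling \(e\) first.

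\textbf{The three estimates.} For \eqref{eq:transfer-norm}, distinct records map to orthogonal NO records, so \(\|W\psi\|^2=\sum|\,\text{coeff}\,|^2\,\mathbb E_\nu|a|^2\). Taking the product of the reweighting factors over a peeling order, the same computation as \(\mu_{D\cup\{e\}}=\mu_D/q_\theta\) in the warm-up gives \(\mathbb E_\nu|a_{D,S}|^2=1\) up to errors on the order of \(\mathrm{poly}(R)/N\). These errors come from label collisions and core placement, together with the compression-basis corrections \(J_-^\dagger\) versus \(J_-^{\dagger}J_-\) on records whose block sizes are off.

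For \eqref{eq:transfer-query}, we split a query into its cases: reading an existing entry, reading a sign, and inserting a fresh entry forward or inverse. The first two commute exactly, by the consistency identities. For fresh insertion, we compare YES insertion followed by \(W\) with \(W\) followed by NO insertion, amplitude by amplitude. This is the coherent version of the ratio
\[
\frac{|\mathcal B|-|D_{p,\mathcal B}|}{(N^6-|D_p|)\,q_\theta(x)},
\]
which the Chernoff bound \cref{lem:class-sizes} makes equal to \(1\pm\mathrm{negl}\) on \(\Theta_{\mathrm{size}}\). Non-fresh answers, where a coordinate collides with the existing record, carry only \(\mathrm{poly}(R)/N\) weight in both experiments. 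The permutation compressed-oracle interface of \cite{Car25,CM26} supplies the commutation of \(J_\pm\) with queries up to \(O(\mathrm{poly}(R)/\sqrt{N})\).

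For \eqref{eq:transfer-bad}, the conditional law of \(\theta\) under \(W\) is \(\nu\) reweighted by factors that are bounded on most records. Bad labelings have \(\nu\)-mass \(\exp(-\Omega(N))\), which together with the matrix-concentration event from \cref{proposition_goodNO} gives a negligible \(\gamma^{\mathrm{bad}}_R\).
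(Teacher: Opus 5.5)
Your overall architecture ($T=J_-^\dagger WJ_+$, peeling with weights $w_e$, the core labelled $(1,2),(3,4)$, and comparing decompressions) matches the paper's. The gap is in your repair of the orientation problem, which is the central step.

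\textbf{The canonical-root repair is inconsistent.} $W$ must satisfy two identities. The first is the sign identity $a_{D,S\triangle\partial e}=s_\theta(x,y)\,a_{D,S}$, needed for $Z_-W=WZ_+$. The second is the fresh-entry identity $a_{I\cup\{e\},S}=w_e\,a_{I,S}$. It must hold both for a forward query at $x$ with fresh $y$ and for an inverse query at $y$ with fresh $x$, since the decompression comparison is needed at inverse addresses too. Apply these to $e=(p;x\mapsto y)$ with four distinct coordinates, starting from an empty database:
\begin{itemize}
\item forward insertion gives $a_{\{e\},\{x_A,x_B\}}=w_e\,a_{\varnothing,\{x_A,x_B\}}$;
\item inverse insertion gives $a_{\{e\},\{y_A,y_B\}}=w_e\,a_{\varnothing,\{y_A,y_B\}}$;
\item since $\{x_A,x_B\}\triangle\partial e=\{y_A,y_B\}$, the sign identity then forces
\[
a_{\varnothing,\{y_A,y_B\}}(\theta)=s_\theta(x,y)\,a_{\varnothing,\{x_A,x_B\}}(\theta)\qquad\text{whenever }x,y\text{ share a block.}
\]
\end{itemize}
Records carry no history. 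In your scheme an empty database has no component, path or root, so $a_{\varnothing,S}\equiv1$, contradicting this whenever $s_\theta(x,y)=-1$.

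Equivalently, whichever endpoint your rule makes the root, one of the two fresh insertions picks up an extra factor $\varepsilon(x)\varepsilon(y)$. This factor takes both signs equally often over the $24$ orderings of a block. So $W\mathrm{Dec}_+$ produces a signed superposition over fresh endpoints at constant distance from the uniform one produced by $\mathrm{Dec}_-W$. Hence \eqref{eq:transfer-query} fails by a constant on records such as $(\varnothing,\{y_A,y_B\})$, which are well formed.

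\textbf{The missing idea.} The leftover free sign set $F$ must itself carry a label-dependent amplitude. The paper uses
\[
\gamma_F(\theta)=\sqrt{K^f/(N)_{2f}}\;\operatorname{sgn}(\text{permutation sorting the }2f\text{ labels of }F)\;\mathbf 1[\text{labels distinct}],
\]
the coherent form of the fact that a free YES phase $\chi(x)$ must become the NO zero-energy phase $\chi(x)\varepsilon(x)$. Moving the free set from $x$ to $y$ then multiplies $\gamma_F$ by exactly $\varepsilon(x)\varepsilon(y)=s_\theta(x,y)$, and $\gamma_F$ vanishes on the remainder block. So the amplitude is orientation-independent with a purely local definition, and both fresh-entry identities hold; no root or path factor is needed.

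\textbf{Steps in the estimates that need more.}
\begin{itemize}
\item \emph{Fresh insertion.} Pointwise closeness of the block-size ratio to $1$ is not enough. On $\Theta_{\mathrm{size}}$ it is only within a constant factor anyway, since class sizes are only guaranteed within $1\pm\varepsilon_0$. More importantly, decompression is a swap, and the YES answer vector $f=\sum_y f_y|y\rangle$ is arbitrary, because the witness argument applies the bound to postselected states. The NO side sees $\frac1{\sqrt n}\sum_y\frac{\mathbf 1[y\in Y_\theta]}{\beta_\theta(x)}f_y$ in place of $\frac1{\sqrt n}\sum_y f_y$. Their difference $g_\theta$ can be of order $\sqrt N$ for particular $\theta$, for example when $f$ is supported on outputs sharing one $A$-coordinate. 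The paper controls $g_\theta$ only in mean square over $\theta\sim\nu$, through the covariance matrix $\Sigma$ with $\|\Sigma\|=O(1/N)$, using that fresh labels are independent of $a_{I,S}$.
\item \emph{Norm loss under $J_-^\dagger$.} The loss from $W\zeta$ not being well formed needs a mechanism. The paper combines the decoding inequality $\|(I-\Pi_-)\xi\|^2\le2\sum_a\|Q_{-,a}\xi\|^2$, the cancellation $Q_{+,a}\Pi_+=0$, and $P_{-,a}W=WP_{+,a}$, which reduce it to the same decompression estimate.
\item \emph{Bad parameters, \eqref{eq:transfer-bad}.} The factors $w_e$ are of order $K$, not bounded, and the bound must hold for every record. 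The paper restricts $W$ to $\Theta_{\mathrm{size}}$. For matrix concentration it conditions on the at most $R$ recorded permutations, since changing at most $R$ summands moves $\Gamma$ by at most $2R/N$. It then applies Cauchy--Schwarz over the at most $(N+1)^R$ index sets.
\end{itemize}
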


\begin{proof}[Proof of \cref{thm:query-lower-bound}]
Suppose that the QMA verifier has witness length \(m\), makes at most
\(q\) oracle queries, and has completeness \(2/3\) and soundness \(1/3\).
For a fixed oracle \(\omega\), let \(A(\omega)\) be its acceptance
operator such that \(0\preceq A(\omega)\preceq\id\), and the acceptance
probability on a normalized witness \(\ket w\) is
\(\langle w|A(\omega)|w\rangle\). Define the controlled acceptance operator 
\[
 A_\pm:=\sum_{\omega_\pm}\ket{\omega_\pm}\!\bra{\omega_\pm}
                         \otimes A(\omega_\pm).
\]
The operator \(A(\omega)\) can be implemented as a postselected linear map by first running the verifier,
projecting onto acceptance, reversing the verifier, and projecting
the work register onto its initial state. Hence it uses at most \(2q\)
oracle queries. 
We first show that each application of $A_{+}$ shrinks the norm of a state with a small record size by a factor of about $1/3$. Next, we will show that this indeed contradicts completeness of the verifier. 
\begin{claim}
\label{claim:acceptance-bound}
Let \(B=\operatorname{poly}(k)\) be an integer with \(B\ge12q\).
For every vector \(\phi\) with
\(J_+\phi\in\mathsf{Rec}_{+,\le B-12q}\otimes\mathcal E\),
we have, for all sufficiently large \(k\),
\[
 \|A_+\phi\|\le b_B\|\phi\|,
 \qquad
 b_B:=\frac{\frac13\sqrt{1+8\gamma^{\text{bad}}_B}+2q\varepsilon^{\text{query}}_B}
                 {\sqrt{1-\eta^{\text{norm}}_B}}
       =\frac13+o(1).
\]
\end{claim}
\begin{proof}

By \cref{prop:compressed-records}, the encoding under \(J_+\) of
every intermediate YES state has record size at most \(B\).
Between oracle queries the verifier acts only on its own registers,
while \(T\) acts only on the oracle register. Each intermediate
operation is a contraction. Applying Equation \eqref{eq:transfer-query} at each
query and adding the errors gives us
\[
 \|(A_-T-TA_+)\phi\|\le2q\varepsilon^{\text{query}}_B\|\phi\|.
\]

Next, soundness implies that for every good NO parameter choice,
\(A(\omega^-)\preceq\frac13\id\), and furthermore, 
for every NO parameter choice, \(A(\omega^-)\preceq\id\).
Equation~\eqref{eq:transfer-bad} therefore gives
\[
 \begin{aligned}
 \|A_-T\phi\|^2
 &\le\frac19\|(\id-P_{\mathrm{bad}})T\phi\|^2
       +\|P_{\mathrm{bad}}T\phi\|^2\\
 &=\frac19\|T\phi\|^2
       +\frac89\|P_{\mathrm{bad}}T\phi\|^2\\
 &\le\frac{1+8\gamma^{\text{bad}}_B}{9}\|T\phi\|^2
 \le\frac{1+8\gamma^{\text{bad}}_B}{9}\|\phi\|^2.
 \end{aligned}
\]
Hence
\[
 \|A_-T\phi\|
 \le\frac13\sqrt{1+8\gamma^{\text{bad}}_B}\,\|\phi\|.
\]

Moreover, \(J_+A_+\phi\) has record size at most \(B\), so
\eqref{eq:transfer-norm} gives
\[
 \sqrt{1-\eta^{\text{norm}}_B}\,\|A_+\phi\|\le\|TA_+\phi\|.
\]
Combining these inequalities yields
\[
 \begin{aligned}
 \sqrt{1-\eta^{\text{norm}}_B}\,\|A_+\phi\|
 &\le\|A_-T\phi\|+2q\varepsilon^{\text{query}}_B\|\phi\|\\
 &\le\left(\frac13\sqrt{1+8\gamma^{\text{bad}}_B}
                  +2q\varepsilon^{\text{query}}_B\right)\|\phi\|.
 \end{aligned}
\]
Dividing by \(\sqrt{1-\eta^{\text{norm}}_B}\) proves the claim.
\end{proof}

\begin{claim}
\label{claim:completeness-contradiction}
 Claim \ref{claim:acceptance-bound} contradicts completeness for all
sufficiently large \(k\).
\end{claim}
\begin{proof}
Let \(\mathcal W\) be the \(m\)-qubit witness register and let
\(\mathcal R\) be an \(m\)-qubit reference register. Define
\[
 |\Phi_m\rangle
 :=2^{-m/2}\sum_{z\in\{0,1\}^m}
 |z\rangle_{\mathcal W}|z\rangle_{\mathcal R},
 \qquad
 \phi_0:=|\mathcal D_+\rangle|\Phi_m\rangle.
\]
Let \(r:=m+1\), \(B:=12qr\), and
\[
 \phi_j:=(A_+\otimes\id_{\mathcal R})^j\phi_0.
\]
 Since \(J_+|\mathcal D_+\rangle=|\emptyset\rangle\),
\cref{prop:compressed-records} gives
\[
 J_+\phi_j\in\mathsf{Rec}_{+,\le12qj}
                   \otimes\mathcal W\otimes\mathcal R.
\]
For \(j<r\), we have \(12qj\le B-12q\), so
Claim \ref{claim:acceptance-bound} gives
\[
 \|\phi_{j+1}\|\le b_B\|\phi_j\|.
\]
Since \(\|\phi_0\|=1\), it follows that
\(\|\phi_r\|^2\le b_B^{2r}\). on the other hand,  by the definition of \(\phi_0\),
\[
 \begin{aligned}
 \|\phi_r\|^2
 &=2^{-m}\mathbb E_{\omega^+\sim\mathcal D_+}
       \sum_{z\in\{0,1\}^m}\|A(\omega^+)^r|z\rangle\|^2\\
 &=2^{-m}\mathbb E_{\omega^+\sim\mathcal D_+}
       \operatorname{tr}\!\left(A(\omega^+)^{2r}\right).
 \end{aligned}
\]
Completeness implies \(\lambda_{\max}(A(\omega^+))\ge2/3\)
for every \(\omega^+\in\operatorname{supp}\mathcal D_+\).
Since \(A(\omega^+)\succeq0\),
\[
 \operatorname{tr}\!\left(A(\omega^+)^{2r}\right)
 \ge\left(\frac23\right)^{2r}.
\]
Therefore
\[
 2^{-m}\left(\frac23\right)^{2r}\le b_B^{2r}.
\]
Taking \(2r\)-th roots and using \(r=m+1\) gives
\[
 \frac23\le2^{m/(2r)}b_B\le\sqrt2\,b_B.
\]
But \(b_B=1/3+o(1)\), which is impossible for all sufficiently
large \(k\).
\end{proof}
Claims~\ref{claim:acceptance-bound} and~\ref{claim:completeness-contradiction}
prove the theorem.
\end{proof}

\input{database-completion}
\input{completion-estimates}
\input{oracle-transfer}

%% file: oracle-records.tex
\subsection{Oracle registers and compressed records}
\label{subsec:lower-bound-setup}
\label{subsec:recording}
Sampling an oracle from $\mathcal{D}_\pm$ and answering oracle queries can be modeled as sampling the parameters required to define the oracle, storing them in a register, and then answering oracle queries by evaluating a fixed operation controlled on the stored oracle parameters. A powerful technique first introduced by \cite{Zha19} is to purify this oracle sampling process. Instead of first
sampling an oracle from $\mathcal D_+$ or $\mathcal D_-$, we may store its
classical parameters coherently in an oracle register inaccessible to the verifier.  
We then change the representation of this register by a fixed isometry, which we
call compression.
 Since this operation is applied locally to the hidden register it is undetected by the verifier. We apply  the compression isometry of Carolan and Majenz
\cite{CM26} to the permutation components while encoding the
coordinate signs in the Fourier basis. The
hidden pair-labeling $\theta$ (for hard NO instances) stays the same. In this compressed representation, the oracle register is in a superposition of records,
which consist of partial permutation tables and sets of coordinates, and a verifier making few
queries only produces small records. We state below the
properties of this compressed representation that will be used in the lower
bound.  

\paragraph{Oracle registers.} A YES parameter and a NO parameter have the respective forms
\[
 \omega^+=((\pi_p)_{p\in[N]},h_A,h_B),
 \qquad
 \omega^-=((\pi_p)_{p\in[N]},h_A,h_B,\theta).
\]
Here \(\theta=(c_A,c_B)\) specifies the blocks from
\cref{subsec:hard-no}. We view \(\mathcal D_\pm\) as distributions on
parameters. Let \(\Omega_\pm:=\operatorname{supp}\mathcal D_\pm\)
and define
$
 \mathcal H_\pm:=\operatorname{span}\{\ket{\omega^\pm}:
                         \omega^\pm\in\Omega_\pm\}.$ The basis states are orthonormal.

The purified YES distribution is
$$
|\mathcal D_+\rangle
:=
\sum_{\omega^+\in\Omega_+}
\sqrt{\mathcal D_+(\omega^+)}
\,|\omega^+\rangle.
$$
For each parameter choice $\omega^\pm$, let $O_{\omega^\pm}$ denote
the corresponding oracle query from Equation~\eqref{eq:oracle}.
Define
$$
O_\pm
:=
\sum_{\omega^\pm\in\Omega_\pm}
|\omega^\pm\rangle\langle\omega^\pm|
\otimes O_{\omega^\pm}.
$$
Thus $O_\pm$ applies the oracle specified by the parameter register.
\paragraph{Records.}
We now define the basis states used for the compressed representation.
A \textit{permutation database} is defined as a tuple
$$
D=(D_1,\ldots,D_L),
$$
where each $D_j$ is a partial bijection on some $X_j \subseteq V $ and $L \in \mathbb N$. Thus $D_j$
consists of entries
$
x\mapsto y
$
with no repeated input and no repeated output. We refer to each $D_j$ as a \textit{table}. Define
$
|D|:=\sum_{j=1}^L |D_j|.
$ A \textit{sign database} is defined as a subset $S \subseteq \mathcal X_A\sqcup\mathcal X_B$ of coordinates. These subsets will later be thought of as
indexing a Fourier basis vector of the register storing the sign functions \(h_A\) and \(h_B\).  
A \textit{record} is now defined as a tuple $(D,S)$ of a permutation database and a sign database.
Its size is
$
|D|+|S|
$. 

A YES permutation database is \(D=(D_p)_{p\in[N]}\), with each \(D_p\)
a partial bijection on \(V\). For fixed \(\theta\), a NO permutation
database is \(D=(D_{p,B})\), with each \(D_{p,B}\) a partial bijection on vertices in 
\(B\), indexed by \(p\in[N]\) and the nonempty blocks \(B\) of
\(\theta\). 

Let \(\mathsf{Rec}_+\) and \(\mathsf{Rec}_-(\theta)\) be spaces spanned by orthonormal
bases \(\{|D,S\rangle\}\) indexed by the respective databases and all
\(S\subseteq\mathcal X_A\sqcup\mathcal X_B\). Define
\[
\mathsf{Rec}_-
:=
\bigoplus_\theta
|\theta\rangle\otimes\mathsf{Rec}_-(\theta),
\]
where \(\theta\) ranges over all pair-labelings. A subscript \(\le R\)
denotes the subspace spanned by records satisfying \(|D|+|S|\le R\).
The empty YES record \(|\emptyset\rangle\) has every table empty and
\(S=\emptyset\).

Write \(\mathcal B_\theta(v)\) for the block containing \(v\). An
entry \((p;x\mapsto y)\) belongs to \(D\) if \(x\mapsto y\) belongs to
\(D_p\) in the YES case or \(D_{p,\mathcal B_\theta(x)}\) in the NO case.
Define the coordinate support
\[
\operatorname{coords}(D)
:=
\bigcup_{(p;x\mapsto y)\in D}
\{x_A,x_B,y_A,y_B\},
\]
where \(x=(x_A,x_B)\) and \(y=(y_A,y_B)\).

\paragraph{Compression and well formed records.}
We now map the oracle-parameter spaces into the record spaces defined
above. First we define compression of the registers holding the coordinate-sign parameters, $\ket{h_A, h_B}$. Let
\(\mathcal X:=\mathcal X_A\sqcup\mathcal X_B\). Write \(h\) for the
combined sign assignment: \(h(u)=h_A(u)\) when \(u\in\mathcal X_A\),
and \(h(u)=h_B(u)\) when \(u\in\mathcal X_B\). For a set
\(S\subseteq\mathcal X\), write \(h(S)\) for the product of its
coordinate signs:
\[
h(S):=\prod_{u\in S}h(u),
\qquad h(\emptyset):=1.
\]
Writing \(M:=|\mathcal X|\), the map to the record space applies the Fourier transform 
\[
|h_A,h_B\rangle
\longmapsto
2^{-M/2}\sum_{S\subseteq\mathcal X}h(S)|S\rangle.
\]
If we imagine $S$ written as a list of membership bits, this is simply the Hadamard transform on the \(M\) independent coordinate signs.
In order to compress the registers $\ket{(\pi_p)_{p \in [N]}}$ holding the permutation parameters,  we use recent techniques from the literature on compressed permutation oracles.
For a finite set \(X\), let
\[
 J_X:\operatorname{span}\{\ket\pi:\pi\in\operatorname{Sym}(X)\}
 \longrightarrow
 \operatorname{span}\{\ket I:I\text{ is a partial bijection on }X\}
\]
be the compression isometry of \cite[Theorem~4.2]{CM26}.
Define \(J_+:\mathcal H_+\longrightarrow\mathsf{Rec}_+\) by
\[
 \begin{aligned}
 &J_+\ket{(\pi_p)_{p\in[N]},h_A,h_B}\\
 &\quad:=
 \left(\bigotimes_{p\in[N]}J_V\ket{\pi_p}\right)
 \otimes
 \left(2^{-M/2}\sum_{S\subseteq\mathcal X}h(S)\ket S\right).
 \end{aligned}
\]
For NO parameters, \(\pi_p|_B:B\to B\) is the restriction of
\(\pi_p\) to a block \(B\) of \(\theta\).
Define \(J_-:\mathcal H_-\longrightarrow\mathsf{Rec}_-\) by
\[
 \begin{aligned}
 &J_-\ket{(\pi_p)_{p\in[N]},h_A,h_B,\theta}\\
 &\quad:=\ket\theta\otimes
 \left(\bigotimes_{p\in[N]}
       \bigotimes_{\substack{B\text{ a block of }\theta\\B\ne\varnothing}}
       J_B\ket{\pi_p|_B}\right)
 \otimes
 \left(2^{-M/2}\sum_{S\subseteq\mathcal X}h(S)\ket S\right).
 \end{aligned}
\]
A vector \(\psi\in\mathsf{Rec}_\pm\) is \emph{well formed} if
\(\psi\in\operatorname{im}J_\pm\).
\paragraph{Decompression.} Since $J_\pm$ is an isometry, an oracle query $O_\pm$ can be exactly represented in the compressed representation as $J_\pm O_\pm J^\dag_\pm$. However, we also define a query procedure that acts directly on compressed records and approximatesthe exactly represented query $J_\pm O_\pm J^\dag_\pm$. 
Let \(X\) be a finite set of size \(s\).
For a forward query at \(x\in X\), the \textit{decompression operator}
\(\mathrm{Dec}_x^\rightarrow\) exchanges
\[
 \ket I\quad\text{and}\quad
 \frac1{\sqrt{s-|I|}}\sum_{y\in X\setminus\operatorname{ran}(I)}
       \ket{I\cup\{x\mapsto y\}}
\]
for every partial bijection \(I\) on \(X\) with
\(x\notin\operatorname{dom}(I)\).
For an inverse query at \(y\in X\), the decompression operator
\(\mathrm{Dec}_y^\leftarrow\) exchanges
\[
 \ket I\quad\text{and}\quad
 \frac1{\sqrt{s-|I|}}\sum_{x\in X\setminus\operatorname{dom}(I)}
       \ket{I\cup\{x\mapsto y\}}
\]
for every partial bijection \(I\) on \(X\) with
\(y\notin\operatorname{ran}(I)\). \(\mathrm{Dec}_y^\leftarrow\)  and \(\mathrm{Dec}_x^\rightarrow\) 
swap the two specified vectors in each  of their  specified subspaces and act as identity on all  orthogonal vectors. Thus both operators are Hermitian unitaries.

An \textit{address} is a triple \(a=(p,z,v)\), with \(p\in[N]\),
\(z\in\{0,1\}\), and \(v\in V\).
It specifies a forward query when \(z=0\) and an inverse query when
\(z=1\). The address register has basis \(\ket a=\ket{p,z,v}\).
The answer register consists of a vertex register \(\ket w\) and a
sign qubit $\ket{t}$, which holds the sign of the queried edge.

The operator \(\mathrm{Dec}_{\pm,a}\) applies \(\mathrm{Dec}_v^\rightarrow\) for \(z=0\)
and \(\mathrm{Dec}_v^\leftarrow\) for \(z=1\).
It acts on the register for \(D_p\) in the YES case and for
\(D_{p,\mathcal B_\theta(v)}\) in the NO case.
Finally, we define the operator \(\mathrm{Dec}_{\pm}\) that applies decompression controlled by the address register as follows:
\[
 \mathrm{Dec}_{\pm}:=\sum_a \mathrm{Dec}_{\pm,a}\otimes\ket a\!\bra a.
\]

\paragraph{Queries on records.}
For an entry \(e=(p;x\mapsto y)\), let \(\partial e\) be the set of
coordinates that occur an odd number of times among
\(x_A,x_B,y_A,y_B\).
The sign \(\chi(x)\chi(y)\) equals \(h(\partial e)\).
We represent the sign qubit $\ket{t}$ in the fourier basis, 
\[
 \ket{\widehat t}:=H \ket{t},
 \qquad t\in\{0,1\}.
\]
XORing a bit \(b\) into this register multiplies introduces the phase
\((-1)^{bt}\). Thus the sign \(\chi(x)\chi(y)\) contributes the
phase \(h(\partial e)^t\).
In the NO case, the sign \(s_\theta(x,y)\) also contributes the
phase \(s_\theta(x,y)^t\).

Fix an address \(a=(p,z,v)\).
Suppose \(D\) contains \(e=(p;v\mapsto y)\) for \(z=0\), or
\(e=(p;x\mapsto v)\) for \(z=1\).
This entry is unique. Let \(v'_a=y\) in the first case and
\(v'_a=x\) in the second.
Define
\[
 \begin{aligned}
 Z_+\ket{D,S}\ket{a,w,\widehat t}
 &=\ket{D,S\triangle(t\partial e)}
                       \ket{a,w\oplus v'_a,\widehat t},\\
 Z_-\ket{\theta,D,S}\ket{a,w,\widehat t}
 &=s_\theta(x,y)^t\ket{\theta,D,S\triangle(t\partial e)}
                       \ket{a,w\oplus v'_a,\widehat t}.
 \end{aligned}
\]
Here \(0\partial e=\varnothing\), \(1\partial e=\partial e\),
and \(s_\theta(x,y)\) is the sign \(s(x,y)\) from
\cref{subsec:hard-no}. If no such entry exists, \(Z_\pm\) acts as
the identity. In other words, the permutation queries are answered by reading off an entry from the permutation database if it exists. The sign query is implemented likewise by an XOR based on the sign stored in the $h$ vectors; since \(h(S)h(\partial e)=h(S\triangle\partial e)\), multiplication
by \(h(\partial e)\) is equivalent to coherently sending \(\ket S\) to
\(\ket{S\triangle\partial e}\) in the Fourier basis.

The query operation on records is
\begin{equation}
 C_\pm:=\mathrm{Dec}_{\pm} Z_\pm \mathrm{Dec}_{\pm}.
 \label{eq:recorded-query}
\end{equation}
The operator \(Z_\pm\) permutes the orthonormal basis vectors and
multiplies them by signs, so it is unitary.
It also satisfies \(Z_\pm^2=\id\). Thus \(Z_\pm\), and hence
\(C_\pm\), is a Hermitian unitary.

For the parameter space \(\mathcal H_-\) and the record space \(\mathsf{Rec}_-\), let 

\[
P_{\mathrm{size}}
:=
\left(\sum_{\theta\in\Theta_{\mathrm{size}}}
|\theta\rangle\langle\theta|\right)\otimes I,
\]

where $\Theta_{\mathrm{size}}$ is defined in \eqref{eq:class-size-condition}.

By definition of $J_{-}$, we can conclude that 
\[
 J_-P_{\mathrm{size}}=P_{\mathrm{size}}J_-,
 \qquad
 J_-^\dagger P_{\mathrm{size}}=P_{\mathrm{size}}J_-^\dagger.
\]
The following proposition states the properties of compressed records that we will use.
\begin{proposition}[Properties of compressed records]
\label{prop:compressed-records}
\label[proposition]{prop:cm-interface}

Let $s_+:=N^6, s_-:=(1-\varepsilon_0)^2\mu^2.$
Let \(\mathcal E\) be the joint Hilbert space of the query
registers and any reference registers.
There exist absolute constants \(C_0,C_{\mathrm{rec}}>0\) such that
the following statements hold for every $\tau \in \{0,1\}$,  \(R\ge0\), and $e_R(s):=C_{\mathrm{rec}}\frac{R+1}{\sqrt s}$: 
\begin{enumerate}[label=\textup{(\roman*)}]
\item \label{item:record-isometry}\textbf{Encoding.}
The map \(J_{
\tau
}\) is an isometry. Moreover,
\(J_+\ket{\mathcal D_+}=\ket\emptyset\).

\item \label{item:record-growth}\textbf{Record growth.}
For every \(\xi\in\mathsf{Rec}_{\tau,\le R}\otimes\mathcal E\),
\[
 C\xi\in\mathsf{Rec}_{\tau,\le R+6}\otimes\mathcal E.
\]
For every \(\phi\in\mathcal H\otimes\mathcal E\),
\[
 J\phi\in\mathsf{Rec}_{\tau, \le R}\otimes\mathcal E
 \quad\Longrightarrow\quad
 JO\phi\in\mathsf{Rec}_{\tau,\le R+6}\otimes\mathcal E.
\]

\item \textbf{Comparison of \(C\) and \(O\).}
Suppose \(s_{\tau}\ge C_0(R+13)^2\).
For every \(\phi\in\mathcal H\otimes\mathcal E\) such that
\(J\phi\in\mathsf{Rec}_{\tau, \le R}\otimes\mathcal E\),
\begin{equation}
 \|(CJ-JO)\phi\|\le e_R(s)\|\phi\|.
 \label{eq:recording-query-forward}
\end{equation}
For every \(\xi\in\mathsf{Rec}_{\tau,\le R}\otimes\mathcal E\),
\begin{equation}
 \|(J^\dagger C-OJ^\dagger)\xi\|
 \le e_{R+6}(s)\|\xi\|.
 \label{eq:recording-query-adjoint}
\end{equation}
In the NO case, require \(P_{\mathrm{size}}\phi=\phi\) in
\eqref{eq:recording-query-forward} and
\(P_{\mathrm{size}}\xi=\xi\) in
\eqref{eq:recording-query-adjoint}.
\end{enumerate}
\end{proposition}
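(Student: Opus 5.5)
The plan is to reduce each item to the corresponding single-permutation statements of Carolan and Majenz~\cite[Theorem~4.2]{CM26}, treating the sign register and the label register separately. The key structural fact is that $J_\pm$ is a tensor product of independent isometries. The permutation factors are the $J_V$ (or $J_B$) applied to each $\pi_p$ (or $\pi_p|_B$), the sign factor is a Hadamard transform, and in the NO case $\theta$ is kept as a classical control. Because of this, every claim can be checked factor by factor and then glued together with the triangle inequality, since a query touches only one table.

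\textbf{Encoding.} Isometry follows because $J_X$ is an isometry and the Hadamard transform is unitary. In the NO case, each $\theta$-branch is a tensor product of isometries, and different $\theta$ map to orthogonal $\ket{\theta}$.

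For $J_+\ket{\mathcal D_+}=\ket\emptyset$, we use that $\ket{\mathcal D_+}$ is a product of the uniform superpositions over $\operatorname{Sym}(V)$ (one per $p$) and the uniform superposition over $h$. The first maps to the empty partial bijection by the defining property of $J_X$ in~\cite{CM26}. The second maps to $\ket{S=\emptyset}$, since $2^{-M}\sum_h h(S)=[S=\emptyset]$.

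\textbf{Record growth.} We bound how $C=\mathrm{Dec}\,Z\,\mathrm{Dec}$ changes the size of a record.
\begin{itemize}
\item Each $\mathrm{Dec}$ changes $|D|$ by at most one on the addressed table.
\item $Z$ leaves $D$ unchanged and replaces $S$ by $S\triangle(t\partial e)$, with $|\partial e|\le4$.
\end{itemize}
Together these give a growth of at most $1+4+1=6$. The second statement of this item then follows from the third item. More directly, we can write $JO\phi=J O J^\dagger J\phi$, since $J^\dagger J=\id$. The exactly represented query $JOJ^\dagger$ is supported, by the CM26 locality statement, on records within one entry of the input record, with the sign part again changing by $\partial e$. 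I would state this using the CM26 fact that $J_XO_XJ_X^\dagger$ maps $\le R$ records to $\le R+1$ records.

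\textbf{Comparison of $C$ and $O$.} This is the main step. First decompose $O$ as a product of a permutation-lookup part and a sign-phase part. The sign part, written in the Fourier basis of the answer qubit, is exactly the phase $h(\partial e)^t$ (times $s_\theta^t$ in the NO case). Conjugating by the Hadamard transform turns this phase into $S\mapsto S\triangle(t\partial e)$. So the sign handling in $Z$ commutes exactly with compression, and it contributes no error.

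What remains is the permutation lookup on a single table of size $s$. Here~\cite[Theorem~4.2]{CM26} gives $\|(\mathrm{Dec}\,Z\,\mathrm{Dec}\,J_X-J_XO_X)\phi\|\le O((R+1)/\sqrt{s})\|\phi\|$ for records of size $\le R$, provided $s\gtrsim R^2$, which is where the hypothesis $s_\tau\ge C_0(R+13)^2$ is used. One subtlety has to be checked: the lookup $Z$ must be able to read the table entry created by $\mathrm{Dec}$ in the same step while the sign register is acted on, and $Z$ does this simultaneously. So I would verify that the CM26 bound holds with the answer register extended by the sign qubit acting as a controlled, commuting operation.

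In the NO case, the addressed table is $D_{p,\mathcal B_\theta(v)}$ with $s=|\mathcal B|$. On $P_{\mathrm{size}}$ we have $|\mathcal B|\ge s_-$ by \cref{lem:class-sizes}, and CM26 is applied branchwise in $\theta$. Orthogonality of the $\theta$-branches, together with the fact that $P_{\mathrm{size}}$ commutes with everything, lets the errors add in squares and so preserves the same bound.

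The adjoint estimate comes from taking adjoints of the forward estimate. Both $C$ and $O$ are Hermitian unitaries, and $C\xi$ has size $\le R+6$. We write
\[
(J^\dagger C-OJ^\dagger)\xi=\big((CJ-JO)^\dagger\big)\xi
\]
and bound this by duality against test vectors in $\mathsf{Rec}_{\le R+6}$, which gives $e_{R+6}$.

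I expect the main obstacle to be this last dualization. Controlling it requires that $J^\dagger$ applied to records outside $\operatorname{im} J$ behaves well. I would handle that through the CM26 statement for arbitrary, not only well-formed, records of bounded size.
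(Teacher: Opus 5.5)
\textbf{Items (i) and (ii).} Your factor-by-factor reduction to \cite{CM26} is the right plan, and these two items go through essentially as in the paper. One slip: the second growth statement does not ``follow from the third item,'' because an approximation bound cannot give exact containment in $\mathsf{Rec}_{\le R+6}$. The paper gets it by writing $O$ as two calls to the swap query $Q_x$, each adding at most one entry by \cite[Lemma~4.5]{CM26}, plus at most four toggled sign coordinates.

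\textbf{The gap in item (iii).} The edge sign $h(\partial e)$ depends on the endpoint $y=\pi_p(v)$. So the phase $h(\partial e)^t$ is a function of the permutation register, not only of $h$. It cannot be split off as a sign part that ``commutes exactly with compression'': in the compressed picture $\partial e$ is defined only after $y$ has been decompressed, and in the physical query the answer register holds $w\oplus y$ rather than $y$. The check you postpone (``the sign qubit acting as a controlled, commuting operation'') is therefore the substantive step of item (iii), not a side condition.

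The paper resolves it with a helper register. On a blank helper, $O_x=Q_xM_xQ_x$, where $Q_x$ swaps $\pi(x)$ into the helper. The operation $M_x$ XORs the endpoint and edge sign into the answer using only the helper value, the four coordinate signs, and $\theta$. Then $\widetilde M_xJ=JM_x$ and $\widetilde M_x\mathrm{Dec}_x=\mathrm{Dec}_x\widetilde M_x$ hold exactly. With $\widetilde Q_x=\mathrm{Dec}_xL_x\mathrm{Dec}_x$ this gives $\widetilde Q_x\widetilde M_x\widetilde Q_x=\mathrm{Dec}_xL_x\widetilde M_xL_x\mathrm{Dec}_x=\mathrm{Dec}_xZ_x\mathrm{Dec}_x=C_x$, and the hybrid
\[
\widetilde Q_x\widetilde M_x\widetilde Q_xJ-JQ_xM_xQ_x=\widetilde Q_x\widetilde M_x(\widetilde Q_xJ-JQ_x)+(\widetilde Q_xJ-JQ_x)M_xQ_x
\]
costs two applications of the imported swap estimate, the second at record size $r+5$. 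This is also why the CM26 bound is invoked for $Q_x$, not directly for a signed $\mathrm{Dec}\,Z\,\mathrm{Dec}$ against $O_X$.

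\textbf{The adjoint bound.} The identity $J^\dagger C-OJ^\dagger=(CJ-JO)^\dagger$ is correct. But duality yields $e_{R+6}$ only if you can restrict test vectors to $\phi$ with $J\phi\in\mathsf{Rec}_{\le R+6}$. The fact that permits this is not a CM26 statement about non-well-formed records; it is that $\Pi_J=JJ^\dagger$ preserves every table size and the sign set \cite[Lemma~4.10]{CM26}.

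Given that, let $Q_{\le r}$ be the projector onto records of size at most $r$, and set $\phi':=J^\dagger Q_{\le R+6}J\phi$. Then $J\phi'=Q_{\le R+6}J\phi$ and $\|\phi'\|\le\|\phi\|$. Moreover $\langle (CJ-JO)(\phi-\phi'),\xi\rangle=0$, since both $C\xi$ and $JOJ^\dagger\xi=JOJ^\dagger\Pi_J\xi$ lie in $\mathsf{Rec}_{\le R+6}$; the latter again uses size preservation plus record growth. The forward estimate at size $R+6$, which needs $s\ge C_0(R+13)^2$, then finishes.

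The paper packages the same ingredients through $\Delta:=C\Pi_J-JOJ^\dagger$. It uses $J^\dagger C=J^\dagger\Pi_JC$ and $\Delta^\dagger Q_{\le R}=Q_{\le R+6}\Delta^\dagger Q_{\le R}$, together with the fact that $J^\dagger$ is isometric on the range of $\Delta^\dagger$.
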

\begin{proof}
For (i), \cite[Theorem~4.2]{CM26} proves that \(J_X\) is an isometry
and maps the uniform superposition of permutations to the empty table vector.
Tensor products preserve isometry, as does the direct sum over
\(\theta\) in the NO case.
The Fourier transform on the coordinate-sign register is unitary and
maps the uniform superposition of sign assignments to \(\ket\emptyset\).
Thus \(J\) is an isometry and
\(J_+\ket{\mathcal D_+}=\ket\emptyset\).

For (ii), each occurrence of \(\mathrm{Dec}\) in \(C=\mathrm{Dec}\,Z\,\mathrm{Dec}\) changes the number
of table entries by at most one.
The operator \(Z\) changes the sign database by symmetric difference
with a set of at most four coordinates.
This proves the assertion for \(C\).

For the assertion about \(JO\phi\), \cref{app:query-interface}
implements \(O\) with two calls to the permutation operator \(Q_x\)
defined there and one operation on the coordinate-sign register.
Each call to \(Q_x\) increases encoded table size by at most one,
by \cite[Lemma~4.5]{CM26}.
The operation on the coordinate-sign register takes symmetric difference
with a set of at most four coordinates. Thus the total increase is at most six.

For (iii), \cref{app:query-interface} applies
\cite[Theorem~4.2]{CM26} to each of the two calls to \(Q_x\).
Their errors add to give \eqref{eq:recording-query-forward}.
The appendix proves this bound under \(s\ge C_0(r+7)^2\).
It then takes adjoints and applies the bound at \(r=R+6\) to prove
\eqref{eq:recording-query-adjoint}; the assumption
\(s\ge C_0(R+13)^2\) suffices.
For NO vectors, apply these bounds for each
\(\theta\in\Theta_{\mathrm{size}}\), where every restriction
\(\pi_p|_B\) has domain size at least \(s_-\), and add squared norms
over the orthogonal states \(\ket\theta\).
\end{proof}

%% file: database-completion.tex
\subsection{Completion of the records}
\label{subsec:database-transfer}
At this point, the information the verifier has learned from a hard-YES oracle is represented
by a superposition of short records $|D,S\rangle$, where $D$ contains the
permutation entries that have been exposed and $S$ records the dependence on
the random signs.  We now want to reinterpret the \emph{same recorded
information} as information that could have come from a hard-NO oracle.

The only obstacle is the additional hidden structure of the NO distribution.
A NO instance comes with a pair-label assignment
$\theta=(c_A,c_B)$, which partitions the vertices into the hidden blocks of
Section \ref{subsec:hard-no}, and every NO permutation must preserve these blocks.  A YES
permutation database $D$ was created without regard to this partition.  Thus, if we simply
sampled $\theta$ independently, an entry $x\mapsto y$ of $D$ might place
$x$ and $y$ in different NO blocks, in which case the entry could
not occur in a hard-NO instance.

The completion map $W$, which we will construct later in the section, resolves this without changing the permutation
entries of $D$.  For each record $(D,S)$, $W$ creates a superposition over hidden
pair-label assignments $\theta$, giving zero amplitude to assignments that
are incompatible with the entries of $D$ and reweighting the compatible
ones.  The reweighting also inserts the additional sign factors that
distinguish NO edges from YES edges.  Thus the map keeps the physical
information already recorded by the verifier fixed, while supplying a
hidden NO explanation for that information.

The main task of this section is to choose these reweighted amplitudes consistently.
They must preserve the norm of every record and must interact correctly
with the sign associated with each entry of $D$.  This is subtle because different
entries of $D$ may share coordinates, so the conditions they impose on $\theta$
are not independent.  

Formally, let $\nu$ denote the product distribution of the hidden
pair-labels $\theta$.  For every record $(D,S)$ and pair-labeling $\theta$, we define
a real coefficient $a_{D,S}(\theta)$ such that $
\mathbb E_{\theta\sim\nu}|a_{D,S}(\theta)|^2=1.$


\subsubsection{Removing and reinserting entries}
\label{subsec:peeling}

The Peeling Algorithm (Algorithm \ref{alg:peeling}) removes an entry $x \mapsto y$ from an input record $(D,S)$ when the two  coordinates of its endpoint $y$- $(y_A,y_B)$,
occur in no other remaining entry in $D$,  and have the same membership in
 $S$. We call such an oriented entry \emph{removable}, and call $y$ its \emph{removed endpoint}.

\begin{algorithm}[H]
\caption{Peeling \((D,S)\)}
\label{alg:peeling}
\textbf{Input:} A record \((D,S)\) %

\textbf{Initialize:} \(E\gets D\), \(S_{\text{cur}}\gets S\), and \(L\gets()\).

\textbf{While} there exists an entry of $E$ that can be oriented as
$$
e=(p;x\mapsto y),
\qquad
x=(x_A,x_B),\quad y=(y_A,y_B),
$$
such that

\begin{enumerate}
    \item $x_A,x_B,y_A,y_B$ are distinct;
    \item $y_A$ and $y_B$ occur in no other entry of $E$;
    \item $y_A$ and $y_B$ have the same membership in $S_{\text{cur}}$,
\end{enumerate}
do the following:

\begin{itemize}
    \item Let
    $
    b_e
    :=
    \mathbf 1[y_A\in S_{\text{cur}}]
    =
    \mathbf 1[y_B\in S_{\text{cur}}].
    $

    \item Append $(e,b_e)$ to $L: L \leftarrow L \cup (e,b_e)$.

    \item Remove $e$ from $E$:
    $
    E\leftarrow E\setminus\{e\}.
    $

    \item Set
    \[
 S_{\text{cur}}\gets
 \begin{cases}
 S_{\text{cur}},&b_e=0,\\
 S_{\text{cur}}\triangle\partial e,&b_e=1.
 \end{cases}
\]
\end{itemize}

\textbf{End while.}

\textbf{Output:} \(C=E\), \(F=S_{\text{cur}}\setminus \operatorname{coords}(C)\), and \(L\).
\end{algorithm}
 
After Algorithm \ref{alg:peeling} terminates, \(C\) is the residual  database. Thus, for every
\[
e=(p;x\leftrightarrow y)\in C,
\qquad
x=(x_A,x_B),\quad y=(y_A,y_B),
\]
 condition 1, 2 or 3 must be violates. More concretely,
\[
\begin{aligned}
&|\{x_A,x_B,y_A,y_B\}|<4,\\
&\text{or both of the following hold:}\\[1mm]
&\qquad
\left[
\begin{array}{c}
\text{one of }x_A,x_B\text{ occurs in another entry of }C,\\
\text{or }\mathbf 1[x_A\in S_{\text{cur}}]\neq \mathbf 1[x_B\in S_{\text{cur}}]
\end{array}
\right],\\[2mm]
&\qquad
\left[
\begin{array}{c}
\text{one of }y_A,y_B\text{ occurs in another entry of }C,\\
\text{or }\mathbf 1[y_A\in S_{\text{cur}}]\neq \mathbf 1[y_B\in S_{\text{cur}}].
\end{array}
\right]
\end{aligned}
\]
The set $F$ consists of the elements that remain in
$S_{\text{cur}}$ but do not appear in any entry of the residual database $C$. These are sign
dependencies that are not attached to any remaining entry.
Thus,
$$
F
=
\left\{
u\in S_{\text{cur}}:
u\notin
\{x_A,x_B,y_A,y_B:
(p;x\leftrightarrow y)\in C\}
\right\}.
$$
\subsubsection{The completion amplitudes}
Recall that $K = N^2$. For a labelling $\theta$ and a vertex $x$, define the \emph{block hit probability} as follows:
\[
\beta_{\theta}(x):= \begin{cases}
    24/K^2 & \text{if } \exists Q \text{ s.t } x \in B_Q\\
    1-(N)_4/K^2 & \text{if } x \in B_{\text{remainder}}
\end{cases}
\]
This is the probability that a vertex $y$, whose coordinates receive independent and uniformly
random labels, lands in the same block as $x$. Indeed, if $x \in B_Q$ for some 4 element set $Q \subseteq [N]$, the four labels $c(y)$ must form
one of the $24$ orderings of $Q$, and if $x \in B_{\text{remainder}}$, then the four labels of $y$ must not be all distinct, and exactly
$(N)_4:=N(N-1)(N-2)(N-3)$ of the $K^2$ sequences in $[N]^4$ have distinct entries.
\begin{definition}[Completion Amplitudes]
   Fix a record $(D,S)$,  one execution of
Algorithm~\ref{alg:peeling} with output $(C,F,L)$, and a pair labelling $\theta$.
\begin{enumerate}
    \item For all $(e, b_e)\in L$ with $e= (p; x \rightarrow y)$ oriented as in Algorithm \ref{alg:peeling}, let 
    \[
    w_e(\theta):= \frac{\mathbb{1}[y \in \theta(x)]}{\sqrt{\beta_{\theta}(x)}}, \lambda_e(\theta):= w_e(\theta) \cdot s_{\theta}(x,y)^{b_e}
    \]
    \item Let \(v=|\operatorname{coords}(C)|\), and define
$$
\mathcal G_C
:=
\left\{
\theta:
\begin{array}{l}
c_A(x_A)=(1,2)
\text{ for every }x_A\in\mathcal X_A\text{ appearing in }C,\\
c_B(x_B)=(3,4)
\text{ for every }x_B\in\mathcal X_B\text{ appearing in }C
\end{array}
\right\}.
$$ and $$
\gamma_C(\theta)
=
K^{v/2}\mathbf 1[\theta\in\mathcal G_C]
$$
 \item Let $f =|F|$. For each $u\in F$, write its pair-label as
$(a_u,b_u)$.  Define
$$
\mathcal G_F:=\{a_{u_1},b_{u_1} \dots a_{u_{f}},b_{u_{f}} \} \text{ are all distinct }
$$

On $\mathcal G_F$, let $\beta_\theta(F)$ be the sign of the permutation
that sorts the sequence
$
a_{u_1},b_{u_1},\ldots,a_{u_f},b_{u_f},
$
for any ordering $u_1,\ldots,u_f$ of $F$. Define
$$
\gamma_F(\theta)
=
\sqrt{\frac{K^f}{(N)_{2f}}}\,
\beta_\theta(F)\mathbf 1[\theta\in\mathcal G_F].
$$
where $(N)_{2f}:= N(N-1) \cdots (N-2f+1)$, and $\gamma_F(\theta)=1$ if $F = \emptyset$.
\end{enumerate}
The completion amplitude is
\[a_{D,S}(\theta):= \gamma_C(\theta)\gamma_F(\theta)\prod_{(e,b_e)\in L}\lambda_e(\theta)\] 
\end{definition}
We will now interpret/ provide some intuition for each factor in the completion amplitude $a_{D,S}(\theta)$. 
\begin{itemize}
    \item The weight $w_e$ is the amplitude analogue of the classical reinsertion step in the technical overview. Conditioned on all other labels, the two pair-labels of the removed endpoint $y$ are fresh and
uniformly random, and $w_e$ keeps exactly those labelings that place $y$ in the block of $x$. Next, if  $b_e=1$, then  both $y_A$ and $y_B$ belong to the current sign set 
$S_{\text{cur}}$. Peeling toggles $\partial_e$, i.e., it removes the coordinates $y_A,y_B$ and toggles $x_A,x_B$. This is the same update that occurs when the permutation $p$ is queried at $x$ on an oracle from the YES distribution. Now, the factor $s_{\theta}(x,y)$ accounts for the difference in cases that the permutation $p$ is queried at $x$ on an oracle from the YES vs the NO distribution.   
    \item The factor $\gamma_C$ places every vertex in $C$ in the same block, $B_{1234}$, with the same labels $c(x)=(1,2,3,4)$. Since $\varepsilon(x)=1$ for this ordering, the sign factors introduced by YES and NO queries are the same for vertices in $C$.
    \item The factor $\gamma_F$ assigns pair labelings to all the coordinates leftover in the signed coordinate set $S_{\text{cur}}$. 
\end{itemize}
\begin{lemma}\label{lem:peeling_ind}[Independence of the peeling choices]
    For every record $(D,S)$  such that $|\text{coords}(D)\cup S|<\frac{N}{2}$, the completion amplitude $a_{D,S}(\theta)$ does not depend on the order in which the entries are removed, nor on the orientations chosen in Algorithm \ref{alg:peeling}.
\end{lemma}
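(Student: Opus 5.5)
The plan is to treat Algorithm~\ref{alg:peeling} as a terminating rewriting system and apply a diamond-lemma (Newman's lemma) argument. The state of the system is the triple \((E,S_{\mathrm{cur}},\Lambda)\), where \(\Lambda(\theta)\) is the product of the \(\lambda_e(\theta)\) accumulated so far. The final amplitude is \(\gamma_C\gamma_F\Lambda\), and it is a function of the terminal state alone. Every step removes an entry, so the system terminates. By Newman's lemma it then suffices to prove local confluence in a strong form: whenever two different removal moves are available from the same state, the results of performing them in either order can be continued to states that agree in \(E\) and \(S_{\mathrm{cur}}\). In addition, the value of \(\gamma_C\gamma_F\Lambda\) at termination must agree as a function of \(\theta\).

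\textbf{Distinct entries.} Suppose \(e=(p;x\mapsto y)\) and \(e'=(p';x'\mapsto y')\) are distinct entries that are both removable. By condition~2, the coordinates \(y_A,y_B\) occur in no other entry, so they are not coordinates of \(e'\). Similarly \(y'_A,y'_B\) are not coordinates of \(e\).
\begin{itemize}
\item Removing \(e\) toggles only coordinates in \(\partial e\subseteq\{x_A,x_B,y_A,y_B\}\). Hence it changes neither the occurrence pattern of \(y'_A,y'_B\) in the other remaining entries nor their membership in \(S_{\mathrm{cur}}\). So \(e'\) stays removable with the same \(b_{e'}\), and symmetrically for \(e\).
\item The updates \(S\mapsto S\triangle(b_e\partial e)\) commute, since symmetric differences commute.
\item \(\lambda_e(\theta)\) depends only on \(e\), its orientation, and \(b_e\), so the two products \(\lambda_e\lambda_{e'}\) agree.
\end{itemize}
The two orders therefore reach the identical state.

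\textbf{Same entry, both orientations.} This is the step I expect to be the main obstacle, as in the example from the technical overview. Here \(x_A,x_B,y_A,y_B\) occur in no other entry of \(E\), so the entry is an isolated component. No later move ever touches these four coordinates, and by the distinct-entries case the later moves commute with this one. Consequently, after peeling \(e\) the rest of the execution is the same in both branches, with the same residual \(C\) and the same \(\Lambda\) contributions from other entries. The only differences are the factor \(\lambda_e\) and which of \(\{x_A,x_B\}\) or \(\{y_A,y_B\}\) remains in \(S_{\mathrm{cur}}\), and hence in \(F\), since these coordinates are not in \(\operatorname{coords}(C)\).

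First, \(w_e\) is symmetric under reversing the orientation. This holds because \(\mathbf 1[y\in\theta(x)]\) is symmetric, and \(\beta_\theta\) depends only on the block, which is shared on the support. It then remains to compare the signs and the \(\gamma_F\) factors, splitting on the memberships of \(x_A,x_B,y_A,y_B\) in \(S_{\mathrm{cur}}\).
\begin{itemize}
\item \emph{Both pairs outside \(S_{\mathrm{cur}}\).} Both orientations have \(b=0\) and the same \(F\), so the branches agree.
\item \emph{\(y\)-pair inside, \(x\)-pair outside (or the reverse).} Peeling at \(y\) gives \(b_e=1\), the factor \(s_\theta(x,y)\), and \(F\ni x_A,x_B\). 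Peeling at \(x\) gives \(b=0\), no sign, and \(F\ni y_A,y_B\). The size \(f=|F|\) is the same, so the normalizations \(\sqrt{K^f/(N)_{2f}}\) agree. The claim reduces to
\[
s_\theta(x,y)\,\beta_\theta(F_0\cup\{x_A,x_B\})=\beta_\theta(F_0\cup\{y_A,y_B\})
\]
on the support, where \(F_0\) denotes the rest of \(F\).
\item \emph{Both pairs inside \(S_{\mathrm{cur}}\).} Both orientations have \(b=1\). Since \(\partial e=\{x_A,x_B,y_A,y_B\}\), both updates remove all four coordinates, so the \(F\)'s agree, and the factor \(s_\theta(x,y)\) is symmetric in \(x,y\).
\end{itemize}

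To prove the displayed identity, I would first handle the remainder block. If \(x\) is in \(\calB_{\mathrm{remainder}}\), then so is \(y\) (on the support of \(w_e\)), and the four labels of \(x\) are not distinct. So \(\mathcal G_F\) fails for the \(x\) version, and likewise for \(y\); both sides vanish. If instead \(x,y\in\calB_Q\), both concatenated label sequences consist of the labels of \(F_0\) together with the set \(Q\), differing only in the order of the last four entries. Sorting \(c(x)\) versus \(c(y)\) therefore contributes the relative sign \(\varepsilon(x)\varepsilon(y)=s_\theta(x,y)\). Because \(\beta_\theta(F)\) is independent of the ordering of \(F\) (a pair-swap is an even permutation), we may place the new pair last, and the identity follows.

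\textbf{Hypothesis and conclusion.} The hypothesis \(|\operatorname{coords}(D)\cup S|<N/2\) guarantees \(2f<N\), so \((N)_{2f}>0\) and every \(\gamma_F\) is well defined throughout. With local confluence established and termination automatic, Newman's lemma gives a unique terminal \((C,F)\) and a well-defined amplitude \(a_{D,S}(\theta)\), independent of the removal order and the orientations chosen.
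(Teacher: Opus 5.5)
Your proposal is correct and is essentially the paper's proof. The paper runs an explicit induction on $|E|$, which is what your Newman-style argument amounts to: it shows that two distinct removable entries commute with unchanged removal bits, and it compares the two orientations of an isolated entry through the remainder/$\mathcal{B}_Q$ split and the sorting-sign identity $\gamma_{F_{x\to y}}(\theta)=s_\theta(x,y)\,\gamma_{F_{y\to x}}(\theta)$. One small correction: when the $x$- and $y$-pairs differ in membership in $S_{\mathrm{cur}}$, the two orientation branches never rejoin, because their $F$'s and accumulated signs differ. So you obtain uniqueness of $C$ and of the amplitude but not of $F$, and your closing step should be phrased as a well-founded induction on $|E|$ with the amplitude as the invariant, not as Newman's lemma producing a unique terminal $(C,F)$.
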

\begin{proof}
 Fix a temporary state $(E,S_{\text{cur}})$, a removable oriented entry
$e=(p;x\mapsto y)$, and its removal bit
$
b_e:=\mathbf 1[y_A\in S_{\text{cur}}]=\mathbf 1[y_B\in S_{\text{cur}}].
$
Let
$$
R_e(E,S_{\text{cur}})
:=
\bigl(E\setminus\{e\},\,S_{\text{cur}}\triangle(b_e\partial e)\bigr),
\qquad
\lambda_e(\theta)
:=
w_e(\theta)s_\theta(x,y)^{b_e}.
$$
Thus peeling $e$ multiplies the amplitude by
$\lambda_e(\theta)$ and replaces $(E,S_{\text{cur}})$ by $R_e(E,S_{\text{cur}})$.

We prove by induction on $|E|$ that the resulting amplitude is
independent of the peeling choices. This follows immediately when
no entry is removable. Suppose first that two distinct entries $e,f\in E$ are removable.
Since the removed coordinates of either entry occur in no other entry,
$$
\partial e\cap\{(y_f)_A,(y_f)_B\}
=
\partial f\cap\{(y_e)_A,(y_e)_B\}
=
\varnothing.
$$
Hence peeling one does not change the removal bit of the other, and
$$
R_fR_e(E,S_{\text{cur}})
=
R_eR_f(E,S_{\text{cur}})
=
\left(
E\setminus\{e,f\},
S_{\text{cur}}\triangle(b_e\partial e)
 \triangle(b_f\partial f)
\right).
$$
Moreover,
$$
\lambda_e(\theta)\lambda_f(\theta)
=
\lambda_f(\theta)\lambda_e(\theta).
$$
The induction hypothesis therefore gives the same amplitude in either
order.

It remains to compare the two orientations of one entry
$
e=(p;x\leftrightarrow y).
$
If both orientations are removable, all four coordinates of $e$ are distinct
and occur in no other entry. Let
$$
b_x
:=
\mathbf 1[x_A\in S_{\text{cur}}]
=
\mathbf 1[x_B\in S_{\text{cur}}],
\qquad
b_y
:=
\mathbf 1[y_A\in S_{\text{cur}}]
=
\mathbf 1[y_B\in S_{\text{cur}}].
$$
If $b_x=b_y$, then
$$
R_{x\mapsto y}(E,S_{\text{cur}})
=
R_{y\mapsto x}(E,S_{\text{cur}}),\qquad
\lambda_{x\mapsto y}(\theta)
=
\lambda_{y\mapsto x}(\theta)
=
w_e(\theta)s_\theta(x,y)^{b_x}.
$$

Now suppose $b_x\neq b_y$. Without loss of generality,  assume
$
b_x=0,
b_y=1.
$
Since the four coordinates of $e$ occur nowhere in $E\setminus\{e\}$,
$$
\left.
(S_{\text{cur}}\triangle\partial e)
\right|_{\operatorname{coords}(E\setminus\{e\})}
=
\left.
S_{\text{cur}}
\right|_{\operatorname{coords}(E\setminus\{e\})}.
$$
Hence the two runs may use identical subsequent peeling steps and the same residual database $C$. Let $F_{x\to y}$ and $F_{y\to x}$ denote the outputs sets $F$
obtained when $e$ is first peeled in the orientations $x\to y$ and
$y\to x$, respectively. Since none of the four coordinates of $e$ occurs in
the remaining database, there is a common set $F_0$, disjoint from
these four coordinates, such that
$$
F_{x\to y}
=
F_0\cup\{x_A,x_B\},
\qquad
F_{y\to x}
=
F_0\cup\{y_A,y_B\}.
$$
Indeed, in the orientation $x\to y$ the removal bit is
$b_y=1$, so toggling $\partial e$ removes $y_A,y_B$ from $S_{\text{cur}}$
and inserts $x_A,x_B$; in the orientation $y\to x$ the removal bit
is $b_x=0$, so $S_{\text{cur}}$ is unchanged.

If 
$w_e(\theta)=0$, both contributions vanish. If
$x,y\in\mathcal B_{\mathrm{remainder}}$, then
$
\gamma_{F_{x\to y}}(\theta)
=
\gamma_{F_{y\to x}}(\theta)
=
0
$, since the four labels of $x$ are not all distinct, and neither are those of $y$.
Finally, if $x,y\in\mathcal B_Q$ for some
$Q\in\binom{[N]}4$, then
$
\gamma_{F_{x\to y}}(\theta)
=
s_\theta(x,y)\gamma_{F_{y\to x}}(\theta).
$
Indeed, list the coordinates of $F_0$ first. The label sequences of $F_{x\to y}$ and $F_{y\to x}$ then agree except in their last four entries, which are $c(x)$ and $c(y)$, respectively.
Since $c(x)$ and $c(y)$ are orderings of the same set $Q$, the event $\mathcal G_{F_{x\to y}}$ holds exactly when $\mathcal G_{F_{y\to x}}$ does, and the two sorting signs differ by the sign of the rearrangement taking $c(x)$ to $c(y)$, which is $\varepsilon(x)\varepsilon(y)=s_\theta(x,y)$.
Therefore
$$
w_e(\theta)s_\theta(x,y)\gamma_{F_{x\to y}}(\theta)
=
w_e(\theta)s_\theta(x,y)^2\gamma_{F_{y\to x}}(\theta)
=
w_e(\theta)\gamma_{F_{y\to x}}(\theta).
$$
Thus the two orientations also give the same amplitude. This completes the induction.

\end{proof}

\subsubsection{Properties of the amplitudes}

\begin{lemma}[Completion amplitudes]
\label{lem:completion-amplitude_prop}
  
For every record $(D,S)$ with $|\text{coords}(D)\cup S|< N/2$, the amplitudes
$
a_{D,S}(\theta)$ %
satisfy  the following
properties.
\begin{enumerate}
\item[\rm (i)] \label{item:1}\textbf{Block compatibility.}
    If $a_{D,S}(\theta)\neq 0$, then the two endpoints of every
    entry of $D$ lie in the same NO block.

    \item[\rm (ii)] \label{item:2}\textbf{Normalization.}
    $$
    \mathbb E_{\theta\sim\nu}
    |a_{D,S}(\theta)|^2
    =1.
    $$

    \item[\rm (iii)]  \label{item:3}\textbf{Sign identity for entries of $D$.}
    For every $e=(p;x\mapsto y)\in D$,
    $$
    a_{D,S\triangle\partial e}(\theta)
    =
    s_\theta(x,y)\,a_{D,S}(\theta).
    $$

    \item[\rm (iv)] \label{item:4}\textbf{Fresh-entry identity.}
    Let $e=(p;x\mapsto y)\notin I$, suppose $I\cup\{e\}$ is a
    database, and suppose
    $$
    \{y_A,y_B\}
    \cap
    \bigl(\operatorname{coords}(I)\cup S\cup\{x_A,x_B\}\bigr)
    =
    \varnothing.
    $$
   Then
    $$
    a_{I\cup\{e\},S}(\theta)
    =
    w_e(\theta)a_{I,S}(\theta).
    $$
\end{enumerate}
\end{lemma}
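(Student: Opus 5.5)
Throughout, I fix one execution of Algorithm~\ref{alg:peeling} on $(D,S)$. By \cref{lem:peeling_ind} the amplitude does not depend on this choice, so for each property I may pick whichever peeling order is most convenient. The four properties are handled separately, and in each case the most convenient order is one in which the relevant entry is peeled last or first.

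\emph{(i) Block compatibility.} Every entry of $D$ is either peeled, and so lies in $L$, or residual, and so lies in $C$. A peeled entry $e=(p;x\mapsto y)$ carries the factor $w_e(\theta)$, which contains $\mathbf 1[y\in\theta(x)]$. A residual entry has all its coordinates fixed by $\gamma_C$ to labels $(1,2)$ on $A$ and $(3,4)$ on $B$. Both endpoints of a residual entry therefore have $c=(1,2,3,4)$ and lie in $\mathcal B_{\{1,2,3,4\}}$. In either case, $a_{D,S}(\theta)\ne0$ forces the entry to be block-compatible.

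\emph{(ii) Normalization.} I compute $\mathbb E_\theta|a|^2$ by integrating the labels in the reverse of the peeling order, which is the reinsertion order.
\begin{itemize}
\item First I integrate the labels of $\operatorname{coords}(C)$ and of $F$. These sets are disjoint from each other and from the coordinates of the peeled entries' removed endpoints.
\item The factor $|\gamma_C|^2=K^v\mathbf 1[\mathcal G_C]$ has expectation $1$, since each of the $v$ coordinates is pinned to one of $K$ values.
\item The factor $|\gamma_F|^2=\frac{K^f}{(N)_{2f}}\mathbf 1[\mathcal G_F]$ has expectation $1$, since the probability that the $2f$ labels are distinct is $(N)_{2f}/N^{2f}=(N)_{2f}/K^f$.
\item For each peeled entry, processed from the last peeled to the first, the removed endpoint's coordinates $y_A,y_B$ appear in no entry still present at that time. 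They also appear in neither $C$ nor $F$. Conditioning on all other labels, the expectation of $|w_e|^2=\mathbf 1[y\in\theta(x)]/\beta_\theta(x)$ is exactly $1$ by the definition of $\beta_\theta(x)$, since the labels of $x$ are already fixed.
\end{itemize}
The sign factors have modulus $1$. The main care needed is that every coordinate is integrated exactly once and only after everything it conditions on. Peeling guarantees this, and the hypothesis $|\operatorname{coords}(D)\cup S|<N/2$ ensures $2f\le N$, so $(N)_{2f}>0$.

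\emph{(iv) Fresh entry.} Since $y_A,y_B$ are new and not in $S$, the entry $e$ oriented as $x\mapsto y$ is removable from $(I\cup\{e\},S)$ with $b_e=0$. Its four coordinates are distinct, because $x$'s coordinates are distinct from $y$'s. Peeling $e$ first contributes $\lambda_e=w_e$ and leaves $(I,S)$ unchanged, so \cref{lem:peeling_ind} gives the identity.

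\emph{(iii) Sign identity.} This is the main obstacle. I would split on whether $e$ is ever peeled.
\begin{itemize}
\item \textbf{$e$ is peeled.} I choose an order that peels $e$ last among entries sharing its coordinates, or argue via induction on $|D|$ that toggling $\partial e$ commutes with peeling of other entries. The reason is that $\partial e$ touches only $e$'s coordinates, and conditions 2 and 3 involve only removed-endpoint coordinates of other removable entries, which avoid $e$. Hence the same peeling sequence applies to $(D,S)$ and $(D,S\triangle\partial e)$ up to the moment $e$ is peeled. At that point the current sets differ by $\partial e$, so $b_e$ flips. Then $\lambda_e$ changes by $s_\theta(x,y)$, and after the update $S_{\mathrm{cur}}\triangle(b_e\partial e)$ the two runs coincide.
\item \textbf{$e\in C$.} Then $\partial e\subseteq\operatorname{coords}(C)$, and toggling it may change which entries are removable. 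This is the genuinely delicate point. I would argue that toggling a set lying inside $\operatorname{coords}$ of a single residual entry does not enable any new removal. If it enabled one, that removal would be of $e$ itself, via a same-membership condition on an endpoint with unshared coordinates. In that case I would instead treat $e$ as in the peeled case, using symmetry of the roles of $S$ and $S\triangle\partial e$. Once both runs give the same $C$, $F$ is unchanged because it excludes $\operatorname{coords}(C)$. Also $s_\theta(x,y)=\varepsilon(x)\varepsilon(y)=1$ on $\mathcal G_C$, and $\gamma_C$ vanishes off $\mathcal G_C$, which gives the identity.
\end{itemize}
Making the residual-case argument rigorous, in particular checking that the residual database is invariant under this toggle, is where I expect the most work.
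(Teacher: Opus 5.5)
\textbf{Parts (i), (ii), (iv) and the peeled case of (iii).} Your treatment of (i), (iv) and the peeled case of (iii) matches the paper.

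Your normalization argument for (ii) is also correct, but only when read as a chain rule for the reinsertion sampler. In that reading, $|\gamma_C|^2$, $|\gamma_F|^2$ and then $w_{e_r}^2,\dots,w_{e_1}^2$ are successive conditional densities, each integrating to $1$ given the labels sampled before it. This uses the fact that the coordinates of $x_i$ lie outside $Y_j$ for $j<i$, which is your ``no entry still present'' fact. Read literally as integrating out the labels of $\operatorname{coords}(C)$ and $F$ first, the argument fails, because $w_{e_i}$ depends on those labels through $x_i$. The paper runs the same computation in the other direction: it averages out $Y_1,Y_2,\dots$ in peeling order by the tower property.

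\textbf{The gap: the residual case of (iii).} The gap is exactly at the point you flag, and your tentative fix does not work. You suggest the toggle might make $e$ removable in only one of the two runs, and you propose falling back on the peeled case. But that comparison needs $e$ to be peeled in both runs with complementary bits, so it cannot handle such a situation.

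The missing observation is that toggling $\partial e$ never changes the removability of any oriented entry at any stage, nor the removal bit of any entry $f\neq e$.
\begin{itemize}
\item For $f\neq e$, the coordinates of its removed endpoint occur in no other entry, in particular not in $e$. Hence they are disjoint from $\partial e$.
\item For $e$ itself, condition 1 forces its four coordinates to be distinct, so $\partial e=\{x_A,x_B,y_A,y_B\}$. The toggle therefore flips both coordinates of each endpoint. This preserves the same-membership test while flipping $b_e$.
\end{itemize}

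Consequently the two runs can make identical choices, and their current sign sets differ by exactly $\partial e$ until $e$ is removed. There are then two cases.
\begin{itemize}
\item $e$ is peeled in both runs, with bits $b_e$ and $1-b_e$. This is your first case.
\item $e\in C$ in both runs, with the same $L$ and $C$. Then $F$ agrees because $\partial e\subseteq\operatorname{coords}(C)$, and $\gamma_C(\theta)\neq0$ gives $s_\theta(x,y)=1$.
\end{itemize}
This is the paper's argument. The same observation also supplies the step your peeled case leaves implicit, namely that $e$ becomes removable at the same moment in both runs.
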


\begin{proof}

.

\noindent \emph{ Block Compatibility.}
Let $(C,F,L)$ be the output of any execution of
Algorithm~\ref{alg:peeling}. By lemma \ref{lem:peeling_ind}, we can choose the order of execution freely. Since
$
D
=
C\sqcup\{e:(e,b_e)\in L\},
$
it suffices to consider these two sets of entries separately.

If
$
a_{D,S}(\theta)\neq 0,
$
then by definition
$
\gamma_C(\theta)\neq 0
\text{ and }
w_e(\theta)\neq 0
\text{ for every }(e,\sigma_y)\in L.
$
$\gamma_C(\theta)\neq 0$ implies that every vertex appearing in $C$
belongs to $\mathcal B_{\{1,2,3,4\}}$. $w_e(\theta)\neq 0$ implies, by the
definition of $w_e$, that the two endpoints of every peeled entry
belong to the same NO block. Hence every entry of $D$ has compatible
endpoints.

\medskip
\noindent\emph{Normalization.}
Write the peeled entries in removal order as
$$
L=((e_1,b_1),\ldots,(e_r,b_r)),
\qquad
e_i=(p_i;x_i\mapsto y_i).
$$
Since \(|s_{\theta}(x_i,y_i)^{b_i}|^2=1\),
$$
|a_{D,S}(\theta)|^2
=
|\gamma_C(\theta)|^2
|\gamma_F(\theta)|^2
\prod_{i=1}^r w_{e_i}(\theta)^2.
$$

For each $i$, let
$
Y_i:=\{(y_i)_A,(y_i)_B\}.
$
At the moment $e_i$ is peeled, the coordinates in $Y_i$ occur in no later
database entry. Moreover, after the peeling update they are absent
from the new sign set: if $b_i=0$ they were absent already,
and if $b_i=1$ they are removed by toggling $\partial e_i$.
Hence, after the factors
$w_{e_1}^2,\ldots,w_{e_{i-1}}^2$ have been averaged out, no remaining
factor except $w_{e_i}^2$ depends on the pair-labels of the coordinates in
$Y_i$.

Therefore, if
$
H_i(\theta)
:=
|\gamma_C(\theta)|^2|\gamma_F(\theta)|^2
\prod_{j=i+1}^r w_{e_j}(\theta)^2,
$
then $H_i$ is independent of the pair-labels on $Y_i$, and
$$
\mathbb E_\nu
\left[
H_i w_{e_i}^2
\right]
=
\mathbb E_\nu
\left[
H_i\,
\mathbb E_\nu
\left[
w_{e_i}^2
\mid
\theta|_{Y_i^c}
\right]
\right]
=
\mathbb E_\nu[H_i].
$$
Applying this successively for $i=1,\ldots,r$ gives
$
\mathbb E_\nu|a_{D,S}(\theta)|^2
=
\mathbb E_\nu
\left[
|\gamma_C(\theta)|^2|\gamma_F(\theta)|^2
\right].
$

Finally,
$
\operatorname{coords}(C)\cap F=\varnothing,
$
so $\gamma_C$ and $\gamma_F$ depend on disjoint sets of independent
pair-labels. Hence
$$
\mathbb E_\nu
\left[
|\gamma_C|^2|\gamma_F|^2
\right]
=
\mathbb E_\nu|\gamma_C|^2\,
\mathbb E_\nu|\gamma_F|^2
=
1.
$$

\noindent \emph{Sign identity}
Fix
$
e=(p;x\mapsto y)\in D.
$
We will compare the peeling procedures for
$
(D,S)
\text{ and }
(D,S\triangle\partial e).
$
By Lemma~\ref{lem:peeling_ind}, we may choose the peeling
order in both runs.

As long as $e$ remains in the database, let $f\neq e$ be removable.
The output coordinates of $f$ occur in no other remaining entry, and hence
are disjoint from $\partial e$. Thus $f$ is removable in both runs
with the same removal bit. We therefore peel the same entries in the
two runs until either $e$ is removed or both runs terminate.

Suppose first that $e$ is removed. If its removal bit in the first run
is $b_e$, then by construction, its removal bit in the second run is $1-b_e$.
Moreover, after removing $e$ the two sign sets agree:
$$
(S_{\text{cur}}\triangle\partial e)
\triangle
((1-b_e)\partial e)
=
S_{\text{cur}}\triangle(b_e\partial e).
$$
Let $B(\theta)$
denote the subsequent factors, then
$$
a_{D,S}(\theta)
=
B(\theta)\,
w_e(\theta)s_\theta(x,y)^{b_e},
$$
while
$$
a_{D,S\triangle\partial e}(\theta)
=
B(\theta)\,
w_e(\theta)s_\theta(x,y)^{1-b_e}.
$$
Since
$
s_\theta(x,y)^2=1,
$
we have
$
s_\theta(x,y)^{1-b_e}
=
s_\theta(x,y)\,
s_\theta(x,y)^{b_e},
$
and therefore
$
a_{D,S\triangle\partial e}(\theta)
=
s_\theta(x,y)a_{D,S}(\theta).
$

If $e$ is never removed, then $e\in C$ in both runs. By definition,  \(S_{\text{cur}}\) and \(S_{\text{cur}}'\) differ only on
$
\partial e\subseteq\operatorname{coords}(C),
$
so the resulting sets $F$ are identical. Hence the two amplitudes are
equal. If $\gamma_C(\theta)=0$, then we are done; otherwise the definition
of $\gamma_C$ gives
$
s_\theta(x,y)=1.
$
Thus the same identity holds in this case.

\noindent \medskip
\noindent\emph{Fresh-entry identity:}
Let
$
e=(p;x\mapsto y)\notin I
$
satisfy the hypotheses of the lemma. Then
$$
\{y_A,y_B\}
\cap
\bigl(
\operatorname{coords}(I)\cup S\cup\{x_A,x_B\}
\bigr)
=
\varnothing
$$
implies that, in the record $(I\cup\{e\},S)$, $
y_A,y_B
\text{ occur in no other entry}$ and $
\mathbf 1[y_A\in S]
=
\mathbf 1[y_B\in S]
=
0.
$
Thus $e$ is removable in the orientation $x\mapsto y$ with
$
b_e=0.
$
By the independence of the peeling choices, we can peel $e$ first. Then
$
(E,S_{\text{cur}})
=
(I\cup\{e\},S)
\longmapsto
(I,S),
$
and the peeled-entry factor is
$
w_e(\theta)s_\theta(x,y)^0
=
w_e(\theta).
$
Therefore
$$
a_{I\cup\{e\},S}(\theta)
=
w_e(\theta)a_{I,S}(\theta).
$$
\end{proof}

\subsubsection{The completion map}

Call a record $(D,S)$ \emph{admissible} if
$
|\operatorname{coords}(D)\cup S|<N/2.
$
If $a_{D,S}(\theta)\neq0$, Lemma \ref{lem:completion-amplitude_prop} implies that every entry of
$D$ has both endpoints in one NO block. In this case, define the NO database
\(D_\theta\) by
\[
(D_\theta)_{p,B}
:=\{x\mapsto y\in D_p:x,y\in B\}
\]
for every \(p\in[N]\) and every nonempty block \(B\) of \(\theta\).
Since every entry of $D$ has both endpoints in one block, \(D_\theta\) retains
exactly the physical entries of \(D\).
For every admissible record $(D,S)$, define
\begin{equation}
 W|D,S\rangle
 =\sum_{\theta\in\Theta_{\mathrm{size}}}
    \sqrt{\nu(\theta)}\,a_{D,S}(\theta)
    |\theta,D_\theta,S\rangle.
 \label{eq:W-definition}
\end{equation}
Here \(\Theta_{\mathrm{size}}\) is defined in \eqref{eq:class-size-condition}, 
terms with zero coefficient are omitted, and we 
set \(W|D,S\rangle=0\)
for all records $(D,S)$ which are not admissible. %
This defines the  completion map:
$$
W:\mathsf{Rec}_+\longrightarrow\mathsf{Rec}_-.
$$
\begin{lemma}[Exact properties of completion]
\label{lem:W-exact-properties}
The completion map
$
W:\mathsf{Rec}_+\to\mathsf{Rec}_-
$
has the following properties.

\begin{enumerate}
    \item For every $\xi\in\mathsf{Rec}_+$,
    $
    \|W\xi\|\le \|\xi\|.
    $

    \item 
    $
    Z_-W
    =
    WZ_+.
    $
\end{enumerate}
\end{lemma}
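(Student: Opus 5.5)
The plan is to verify both properties on the orthonormal basis $\{|D,S\rangle\}$ of $\mathsf{Rec}_+$ (tensored with the query register where needed) and extend by linearity. The key structural fact is that for fixed $\theta$ with $a_{D,S}(\theta)\neq0$, the NO database $D_\theta$ contains exactly the entries of $D$, only regrouped by block. Hence $D$ can be read off from $(\theta,D_\theta)$.

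\emph{Contraction.} First I will show that the vectors $W|D,S\rangle$ for distinct admissible records are pairwise orthogonal. Fix any $\theta$. If $(D,S)\neq(D',S')$, then either $S\neq S'$, or $D\neq D'$ and hence $D_\theta\neq D'_\theta$. So the basis vectors $|\theta,D_\theta,S\rangle$ and $|\theta,D'_\theta,S'\rangle$ are distinct, and summing over $\theta$ gives orthogonality. Next,
\[
\|W|D,S\rangle\|^2=\sum_{\theta\in\Theta_{\mathrm{size}}}\nu(\theta)|a_{D,S}(\theta)|^2\le\mathbb E_{\theta\sim\nu}|a_{D,S}(\theta)|^2=1,
\]
by the normalization property in \cref{lem:completion-amplitude_prop}. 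Non-admissible records are sent to $0$. Therefore $W$ maps an orthonormal basis to pairwise orthogonal vectors of norm at most one, and Pythagoras gives $\|W\xi\|\le\|\xi\|$.

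\emph{Intertwining with $Z_\pm$.} I will check $Z_-W=WZ_+$ on each basis vector $|D,S\rangle|a,w,\widehat t\rangle$ with $a=(p,z,v)$. There are two cases.

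\emph{No entry at the address.} Suppose $D_p$ has no entry at $v$ in direction $z$. Then for every $\theta$ with $a_{D,S}(\theta)\ne0$, the table $(D_\theta)_{p,\mathcal B_\theta(v)}$ also has no such entry, because it is a sub-table of $D_p$. So both $Z_+$ and $Z_-$ act as the identity, and both sides agree.

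\emph{An entry at the address.} Suppose $D_p$ does contain the entry $e=(p;x\mapsto y)$ at the address. By block compatibility, $e$ lies in $(D_\theta)_{p,\mathcal B_\theta(v)}$ for every relevant $\theta$, so $Z_-$ reads the same entry and the same answer $v'_a$. The two sides are then
\[
WZ_+|D,S\rangle|a,w,\widehat t\rangle=\sum_\theta\sqrt{\nu(\theta)}\,a_{D,S\triangle t\partial e}(\theta)\,|\theta,D_\theta,S\triangle t\partial e\rangle|a,w\oplus v'_a,\widehat t\rangle,
\]
and the same expression with coefficient $s_\theta(x,y)^t a_{D,S}(\theta)$ for $Z_-W$. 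For $t=0$ the two coincide trivially. For $t=1$ they coincide by the sign identity (iii) of \cref{lem:completion-amplitude_prop}.

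Admissibility has to be handled carefully, since $W$ is defined to be zero on non-admissible records. Because $\partial e\subseteq\operatorname{coords}(D)$, we have $\operatorname{coords}(D)\cup(S\triangle\partial e)=\operatorname{coords}(D)\cup S$. So $(D,S)$ and $(D,S\triangle\partial e)$ are admissible together, or else both sides vanish. This bookkeeping is the only real subtlety; the substance is already contained in Lemma~\ref{lem:completion-amplitude_prop}.
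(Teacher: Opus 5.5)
Your proposal is correct and follows the paper's proof essentially step for step. For the contraction you use orthogonality of the images of distinct basis records together with the normalization property. For the intertwining you split on whether the queried entry is recorded, apply the sign identity of \cref{lem:completion-amplitude_prop}, and note that admissibility is unchanged under $S\mapsto S\triangle\partial e$. You also spell out two points the paper leaves implicit: that $D$ can be recovered from $(\theta,D_\theta)$, and that $Z_-$ also acts trivially when the entry is absent.
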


\begin{proof}
For (i), if $(D,S)$ is admissible, then by Lemma \ref{lem:completion-amplitude_prop},
$$
\begin{aligned}
\|W|D,S\rangle\|^2
&=
\sum_{\theta\in\Theta_{\mathrm{size}}}
\nu(\theta)|a_{D,S}(\theta)|^2
\\
&\le
\mathbb E_{\theta\sim\nu}
|a_{D,S}(\theta)|^2
=
1.
\end{aligned}
$$
If $(D,S)$ is not admissible, then $W|D,S\rangle=0$.
Moreover, for $(D,S)\neq(D',S')$, the images
$W|D,S\rangle$ and $W|D',S'\rangle$ are orthogonal.
Hence, for
$
\xi=\sum_{D,S}\alpha_{D,S}|D,S\rangle,
$
we have
$$
\|W\xi\|^2
=
\sum_{D,S}
|\alpha_{D,S}|^2
\|W|D,S\rangle\|^2
\le
\sum_{D,S}|\alpha_{D,S}|^2
=
\|\xi\|^2.
$$
For (ii), fix a basis state $|D,S\rangle$ and a query
$(p,z,v)$. If the entry is absent in $D$, then $Z_+$ and $Z_-$
both act as the identity, and hence
$
Z_-W|D,S\rangle
=
WZ_+|D,S\rangle.
$

Suppose instead that the entry is
$
e=(p;x\mapsto y)\in D.
$
Let
$$
v'
=
\begin{cases}
y,&z=0,\\
x,&z=1.
\end{cases}
$$
For $t\in\{0,1\}$, Lemma \ref{lem:completion-amplitude_prop} gives
$$
a_{D,S\triangle(t\partial e)}(\theta)
=
s_\theta(x,y)^t a_{D,S}(\theta).
$$
Therefore
$$
\begin{aligned}
&Z_-W|D,S\rangle
|p,z,v,w,\widehat t\rangle
\\
&=
\sum_{\theta\in\Theta_{\mathrm{size}}}
\sqrt{\nu(\theta)}\,
s_\theta(x,y)^t a_{D,S}(\theta)
|\theta,D_\theta,S\triangle(t\partial e)\rangle
|p,z,v,w\oplus v',\widehat t\rangle
\\
&=
\sum_{\theta\in\Theta_{\mathrm{size}}}
\sqrt{\nu(\theta)}\,
a_{D,S\triangle(t\partial e)}(\theta)
|\theta,D_\theta,S\triangle(t\partial e)\rangle
|p,z,v,w\oplus v',\widehat t\rangle
\\
&=
WZ_+|D,S\rangle
|p,z,v,w,\widehat t\rangle.
\end{aligned}
$$

Since
$
\partial e\subseteq\operatorname{coords}(D),
$
we also have
$
\operatorname{coords}(D)\cup
\bigl(S\triangle(t\partial e)\bigr)
=
\operatorname{coords}(D)\cup S,
$
so admissibility is unchanged. Hence the identity also holds when
$W|D,S\rangle=0$.

Thus
$$
Z_-W=WZ_+.
$$
\end{proof}

%% file: completion-estimates.tex
\subsection{Estimates for the transfer}
\label{sec:remaining-completion-estimates}
Section~\ref{subsec:database-transfer} defined the completion map $W$ on  records
and proved that
$$
Z_-W=WZ_+.
$$
We now  use this to prove  the transfer theorem.

We establish the following.   First, the definition of $W$ only contains 
pair-labelings in $\Theta_{\rm size}$ from \eqref{eq:class-size-condition},
so we must show that restricting pair labellings to lie in $\Theta_{\rm size}$
 loses negligible norm.  Second, we to establish that the map $W$ is approximately compatible with the
decompression operators $\mathrm{Dec}_+$ and $\mathrm{Dec}_-$.    Finally,we show that undoing the compression isometry $J$ puts negligible weight 
on NO parameters that fail the graph-concentration condition. Throughout this section, $R$ denotes a nonnegative integer. 

Recall
$$
K=N^2,
\qquad
\mu=\frac{N^3}{K}=N,
\qquad
\varepsilon_0=\frac1{40}.
$$
For a basis record $(D,S)$, set
$
H(D,S):=\operatorname{coords}(D)\cup S.
$
By Lemma \ref{lem:completion-amplitude_prop}, before restricting to $\Theta_{\rm size}$,
$
\mathbb E_{\theta\sim\nu}
\left[
|a_{D,S}(\theta)|^2
\right]
=1.
$
Moreover, $a_{D,S}(\theta)$ depends only on the pair-labels assigned
to the coordinates in $H(D,S)$.  Finally, distinct basis
records have orthogonal images under $W$, since $W$ retains their
permutation entries and sign sets.
\subsubsection{Loss from the class-size restriction}

The coefficients $a_{D,S}(\theta)$ are normalized with respect to the
full product distribution $\nu$, whereas $W$ is defined only for 
$\theta\in\Theta_{\rm size}$.  In this subsection, we prove that this restriction only loses
negligible squared norm.
\label{sec:completion-class-bound}

\begin{lemma}[Loss from the class-size restriction]
\label[lemma]{lem:completion-class-loss} 
Let  \(\rho_N:=4K\exp(-\varepsilon_0^2\mu/24)\).
If \(N\ge C_*(R+13)\) for some large enough constant $C_*$, then every
\(\xi\in\mathsf{Rec}_{+,\le R}\otimes\mathcal E\) satisfies
\begin{equation}
 (1-\rho_N)\|\xi\|^2\le\|W\xi\|^2\le\|\xi\|^2.
 \label{eq:W-near-isometry}
\end{equation}
\end{lemma}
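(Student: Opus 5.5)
The upper bound $\|W\xi\|^2\le\|\xi\|^2$ is already part~1 of \cref{lem:W-exact-properties}, so the plan concentrates on the lower bound. Since distinct basis records have orthogonal images under $W$, and $W$ acts trivially on $\mathcal E$, writing $\xi=\sum_{D,S}|D,S\rangle\otimes\xi_{D,S}$ gives
\[
\|W\xi\|^2=\sum_{D,S}\|\xi_{D,S}\|^2\sum_{\theta\in\Theta_{\mathrm{size}}}\nu(\theta)|a_{D,S}(\theta)|^2 .
\]
It therefore suffices to show, for every basis record with $|D|+|S|\le R$, that
\[
\mathbb E_{\theta\sim\nu}\bigl[|a_{D,S}(\theta)|^2\,\mathbf 1[\theta\notin\Theta_{\mathrm{size}}]\bigr]\le\rho_N ,
\]
using the normalization $\mathbb E_\nu|a_{D,S}|^2=1$ from \cref{lem:completion-amplitude_prop}. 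Admissibility is automatic here: $|H(D,S)|\le 4R+R<N/2$ once $N\ge C_*(R+13)$, so $a_{D,S}$ is the genuine completion amplitude and the identity above holds.

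The key observation is that $|a_{D,S}(\theta)|^2$ is a probability density with respect to $\nu$ (nonnegative, mean one). It depends only on the labels of the $h:=|H(D,S)|\le 5R$ coordinates in $H(D,S)$, and the labels of all other coordinates remain i.i.d.\ uniform on $\mathcal C$ under the tilted measure $\nu'(\theta):=\nu(\theta)|a_{D,S}(\theta)|^2$, because $\nu$ is a product measure. So the quantity to bound is $\nu'(\theta\notin\Theta_{\mathrm{size}})$. Fix any assignment of labels on $H(D,S)$; under $\nu'$ this is an arbitrary, possibly adversarial, choice. Then for each $P\in\{A,B\}$ and $I\in\mathcal C$, the class size is
\[
|C^P_I(\theta)|=Z+\Delta,\qquad 0\le\Delta\le h,
\]
where $Z$ is binomial with $N^3-O(h)$ trials and success probability $1/K$, so its mean is $\mu(1-O(h/N^3))$. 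Since $h=O(R)\le O(N)$ and $\mu=N$, the shift $\Delta$ and the mean correction are at most $\varepsilon_0\mu/2$ for $C_*$ large. Hence a violation of the $\varepsilon_0$-window forces $|Z-\mathbb E Z|\ge(\varepsilon_0/2)\mathbb E Z$ roughly. Applying \cref{lem:chernoff} with deviation $\varepsilon_0/2$ gives a failure probability of $2\exp(-\varepsilon_0^2\mu/12)$, or $2\exp(-\varepsilon_0^2\mu/24)$ after absorbing the slack between $\mathbb E Z$ and $\mu$. A union bound over the $2K$ classes then yields $\rho_N=4K\exp(-\varepsilon_0^2\mu/24)$, uniformly in the fixed labels, and averaging over them under $\nu'$ gives the claim.

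The main obstacle is making the constants honest, namely choosing $C_*$ so that $h\le\varepsilon_0\mu/2$ with room for the binomial mean shift, and so that the Chernoff exponent lands at $/24$. Since $h\le5R$ and $\mu=N\ge C_*(R+13)$, taking $C_*\ge 400$ makes $h\le N/80\le\varepsilon_0\mu/2$ comfortably. The factor $1/24$ then follows from requiring a relative deviation of about $\varepsilon_0/\sqrt 8$ or better. With these choices, every basis record satisfies $\sum_{\theta\in\Theta_{\mathrm{size}}}\nu|a|^2\ge1-\rho_N$, and summing against $\|\xi_{D,S}\|^2$ proves \eqref{eq:W-near-isometry}.
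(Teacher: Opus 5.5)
Your proposal is correct and takes essentially the same route as the paper. You reduce to a single basis record via orthogonality of the images under $W$, condition on the labels of $H(D,S)$ (on which $|a_{D,S}|^2$ depends exclusively), bound the conditional failure probability of $\Theta_{\mathrm{size}}$ by a Chernoff bound at relative deviation $\varepsilon_0/2$ plus a union bound over the $2K$ classes, and average against $|a_{D,S}|^2$ using the normalization. The only cosmetic difference is that the paper compares the conditioned class counts to the unconditioned counts $X_{P,I}\sim\mathrm{Bin}(N^3,1/K)$, which differ from them by at most $|H|$; this avoids your shifted-mean bookkeeping and gives the exponent $\varepsilon_0^2\mu/12$ directly, before relaxing to $/24$.
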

\begin{proof}
Fix a basis record \(\ket{D,S}\) of size at most \(R\).
Let \(H:=\operatorname{coords}(D)\cup S\). Each permutation entry
uses at most four coordinates, so it follows that,
$
 |H|\le4|D|+|S|\le4R<N/2, |H|\le\varepsilon_0\mu/2.
$
This implies that $(D,S)$ is admissible.

Let \(F:=\{\theta\notin\Theta_{\mathrm{size}}\}\), and
fix a labeling \(\eta\) on \(H\). Let \(\theta^\eta\) denote $\theta|_H=\eta$, and it  is  distributed according to \(\nu\) conditioned on \(\theta|_H=\eta\).  For $P \in\{A,B\}$, let
$
X_{P,I}(\theta):=|\{x_P : \theta(x_P) = I\}|,
Y_{P,I}(\theta^{\eta}):=|\{y_P : \theta^{\eta}(y_P) = I\}|$.
For all $\theta$,
$
|Y_{P,I}-X_{P,I}|\le |H|.
$
Hence, using \(|H|\le\varepsilon_0\mu/2\),
\[
\begin{aligned}
|Y_{P,I}-\mu|>\varepsilon_0\mu
&\implies
|X_{P,I}-\mu|
\ge |Y_{P,I}-\mu|-|Y_{P,I}-X_{P,I}|\\
&>\varepsilon_0\mu-|H|
\ge \frac{\varepsilon_0\mu}{2}.
\end{aligned}
\]
Therefore,
\[
\begin{aligned}
\Pr_{\theta\sim \nu}[F\mid\theta|_H=\eta]
&=\Pr[\theta^\eta\notin\Theta_{\mathrm{size}}]\\
&\le
\sum_{\substack{P\in\{A,B\}\\ I\in\mathcal C}}
\Pr\!\left[|X_{P,I}-\mu|>\frac{\varepsilon_0\mu}{2}\right]\\
&\le
2K\cdot 2\exp\!\left(-\frac{(\varepsilon_0/2)^2\mu}{3}\right)\\
&=
4K\exp\!\left(-\frac{\varepsilon_0^2\mu}{12}\right)\\
&\le
4K\exp\!\left(-\frac{\varepsilon_0^2\mu}{24}\right)
=\rho_N.
\end{aligned}
\]
Next, observe that 
$$
\begin{aligned}
1-\|W|D,S\rangle\|^2
&=
\mathbb E_{\theta\sim\nu}
\left[
\mathbf 1_{F}
|a_{D,S}(\theta)|^2
\right]
\\
&=
\mathbb E_{\theta|_H}
\left[
|a_{D,S}(\theta)|^2
\Pr[
F
\mid
\theta|_H
]
\right]
\\
&\le
\rho_N
\mathbb E_{\theta\sim\nu}
|a_{D,S}(\theta)|^2
=
\rho_N.
\end{aligned}
$$
Lemma \ref{lem:W-exact-properties} already gives
$
\|W|D,S\rangle\|\le1.
$

Finally, the images of distinct basis records under $W$ are
orthogonal, which implies that 
$$
(1-\rho_N)\|\xi\|^2
\le
\|W\xi\|^2
\le
\|\xi\|^2.
$$
\end{proof}

\subsubsection{Completion and decompression}
\label{sec:completion-decompression-bound}
We next compare decompression before and after applying the completion map $W$.  

\begin{lemma}[Completion and decompression]
\label{lem:completion-decompression}
There is an absolute constant \(c_{\rm dec}>0\) such that, 
if \(N\ge C_*(R+13)\), every address \(a\) and every
\(\xi\in\mathsf{Rec}_{+,\le R+6}\otimes\mathcal E\) satisfy
\begin{equation}
 \|(\mathrm{Dec}_{-,a}W-W\mathrm{Dec}_{+,a})\xi\|
 \le\delta_R\|\xi\|,
 \qquad
 \delta_R:=c_{\rm dec}\sqrt{\frac{R+1}{N}}.
 \label{eq:W-decompression}
\end{equation}

\end{lemma}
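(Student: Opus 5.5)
We compare the two maps on each basis record $\ket{D,S}$ and then pass to superpositions. Fix a forward address $a=(p,0,v)$; the inverse case is symmetric. If $D_p$ already contains an entry with input $v$, then $\mathrm{Dec}_{+,a}$ fixes $\ket{D,S}$. By block compatibility (\cref{lem:completion-amplitude_prop}(i)), the same entry lies in $(D_\theta)_{p,\mathcal B_\theta(v)}$ for every $\theta$ in the support of $W\ket{D,S}$, so $\mathrm{Dec}_{-,a}$ fixes each term. Both sides therefore agree exactly.

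It remains to treat the swapped pairs. For a record with $v\notin\operatorname{dom}(D_p)$, $\mathrm{Dec}_{+,a}$ exchanges $\ket{D,S}$ with
\[
\ket{D^{+v},S}:=(N^6-|D_p|)^{-1/2}\sum_{y\notin\operatorname{ran}D_p}\ket{D\cup\{p;v\mapsto y\},S},
\]
while on the $\theta$-component $\mathrm{Dec}_{-,a}$ exchanges $\ket{\theta,D_\theta,S}$ with the analogous uniform superposition over $y\in\mathcal B_\theta(v)\setminus\operatorname{ran}(D_\theta)_{p,\mathcal B_\theta(v)}$. Both operators are Hermitian involutions. So it suffices to bound two quantities: $\|\mathrm{Dec}_{-,a}W\ket{D,S}-W\ket{D^{+v},S}\|$, and the mirror-image discrepancy $\|\mathrm{Dec}_{-,a}W\ket{D^{+v},S}-W\ket{D,S}\|$. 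The second follows from the first by applying $\mathrm{Dec}_{-,a}$, which is unitary. We also need to handle components of $\xi$ that are already of the form $\ket{D\cup\{e\},S}$ but not in the symmetric superposition; there $\mathrm{Dec}_+$ and $\mathrm{Dec}_-$ each act as the identity on the orthogonal complements, so we must check that $W$ maps these complements to near-complements.

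The core computation is the first quantity. We split the sum over $y$ into fresh $y$, meaning $\{y_A,y_B\}\cap(H(D,S)\cup\{v_A,v_B\})=\varnothing$, and non-fresh $y$. There are at most $O(R)\cdot N^3$ non-fresh $y$ out of $N^6$ YES choices. Using the uniform bound $\mathbb E_\nu|a|^2=1$ from \cref{lem:completion-amplitude_prop}(ii) and orthogonality of distinct records, the non-fresh terms contribute squared norm $O(R/N^3)$ on the YES side. On the NO side the corresponding count is at most $O(R)\cdot N$ out of block size $\ge s_-=\Omega(N^2)$, giving $O(R/N)$. For fresh $y$, the fresh-entry identity (\cref{lem:completion-amplitude_prop}(iv)) gives $a_{D\cup\{e\},S}(\theta)=w_e(\theta)a_{D,S}(\theta)$. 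The coefficient of $\ket{\theta,(D\cup e)_\theta,S}$ is then $\sqrt{\nu(\theta)}\,a_{D,S}(\theta)\,\mathbf 1[y\in\theta(v)]\big/\sqrt{\beta_\theta(v)(N^6-|D_p|)}$ in $W\ket{D^{+v},S}$, versus $\sqrt{\nu(\theta)}\,a_{D,S}(\theta)\,\mathbf 1[y\in\theta(v)]\big/\sqrt{|\mathcal B_\theta(v)|-|(D_\theta)_{p,\mathcal B}|}$ in $\mathrm{Dec}_{-,a}W\ket{D,S}$. This is exactly the quantum analogue of the ratio computed in the technical overview.

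The main obstacle is controlling this amplitude ratio uniformly. We need $|\mathcal B_\theta(v)|\approx N^6\beta_\theta(v)$, and it must hold relative to the reweighted measure $\nu|a_{D,S}|^2$, not just $\nu$. We would argue as in \cref{lem:completion-class-loss}: $a_{D,S}$ depends only on labels of $H(D,S)$, and conditioned on those labels, $|\mathcal B_\theta(v)|$ is a sum over $\Omega(N^6)$ vertices whose labels are mostly independent. On $\Theta_{\rm size}$ each class has size $(1\pm\varepsilon_0)\mu$, which alone gives only a constant-factor ratio. This is not enough, so we instead use a sharper Chernoff bound conditionally on $\theta|_H$ for the specific class products forming $\mathcal B_\theta(v)$. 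That yields relative error $O(\sqrt{\log N/N})$ except with probability $\exp(-\Omega(\log^2N))$, plus $O(R/N^2)$ from the $|D_p|$ corrections and from the vertices sharing coordinates with $H$. Summing the squared differences over fresh $y$ bounds them by the squared relative error times $\mathbb E|a|^2=1$. A square root at the end gives $O(\sqrt{(R+1)/N})$, matching $\delta_R$; the non-fresh terms dominate, so the cruder conditional concentration suffices. Finally, distinct basis records $(D,S)$ with $v\notin\operatorname{dom}D_p$ give orthogonal error vectors, since their $W$-images and decompressions retain distinct $D$ or $S$. Summing squared errors over a general $\xi\in\mathsf{Rec}_{+,\le R+6}\otimes\mathcal E$ therefore yields the bound, with admissibility guaranteed by $N\ge C_*(R+13)$.
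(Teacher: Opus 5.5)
Your reduction has a genuine gap exactly where you write that one ``must check that $W$ maps these complements to near-complements'' and then move on. Your opening claim is also false. If $D_p$ contains $v\mapsto y$, set $I:=D\setminus\{(p;v\mapsto y)\}$ and $I_y:=D$. Then $\mathrm{Dec}_{+,a}\ket{I_y,S}=\ket{I_y,S}+m^{-1/2}\bigl(\ket{I,S}-\ket{I^{+v},S}\bigr)$ with $m=N^6-|I_p|$. So such records are moved, and these small moves add coherently over $y$; only the complement of $\ket{I^{+v},S}$ inside $\operatorname{span}\{\ket{I_y,S}\}_y$ is fixed.

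On that complement the bound is not automatic. Let $n$ be the number of fresh outputs and $u_0$ the uniform superposition of the fresh $\ket{I_y,S}$. Let $Y_\theta$ be the fresh outputs in $\mathcal B_\theta(v)$, $u_\theta$ the uniform superposition of the $\ket{(I_y)_\theta,S}$ over $y\in Y_\theta$, and $\kappa_\theta^2:=|Y_\theta|/(\beta_\theta(v)\,n)$. Take fresh $f=\sum_y f_y\ket{I_y,S}$ with $\sum_y f_y=0$. Then $\mathrm{Dec}_{+,a}f=f$. But, up to the non-fresh corrections you already control, the $\theta$-component of $(\mathrm{Dec}_{-,a}-\mathrm{id})Wf$ equals $\sqrt{\nu(\theta)}\,a_{I,S}(\theta)\,(g_\theta/\kappa_\theta)\bigl(\ket{I_\theta,S}-u_\theta\bigr)$, where $g_\theta:=n^{-1/2}\sum_y\bigl(\mathbf 1[y\in\mathcal B_\theta(v)]/\beta_\theta(v)-1\bigr)f_y$.

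This term does not vanish for individual $\theta$. Concentration of $|\mathcal B_\theta(v)|$, the only tool in your plan, controls $g_\theta$ only in the excluded direction $f\propto u_0$. The term is also of the critical size. If $f$ is the normalized difference of the uniform superpositions over fresh outputs with $y_A=u$ and with $y_A=u'$, then $\mathbb E|g_\theta|^2\approx 1/(12N)$. This comes from correlations of block membership among outputs sharing a coordinate.

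The missing idea is a second-moment bound that is uniform in $f$. Work conditionally on $\theta|_H$, so that the bound survives reweighting by $|a_{I,S}|^2$, which depends only on $\theta|_H$. Then $\mathbb E|g_\theta|^2=\langle f|\Sigma|f\rangle\le\|\Sigma\|\,\|f\|^2$ with $\Sigma_{yy'}:=\operatorname{Cov}(\mathbf 1_y,\mathbf 1_{y'})/(\beta^2 n)$. Bound $\|\Sigma\|$ by its largest row sum: a diagonal term $\approx K^2/(24n)=O(N^{-2})$ plus about $n_A+n_B$ entries $\approx K/(12n)$ from outputs sharing one coordinate, i.e.\ $O(1/N)$. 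The remainder block needs a separate count.

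The same matrix also settles the swapped vector, which is where you concentrated. We have $\mathbb E(\kappa_\theta-1)^2\le\mathbb E(\kappa_\theta^2-1)^2=\langle u_0|\Sigma|u_0\rangle$, and $\kappa_\theta\ge 1/2$ on $\Theta_{\mathrm{size}}$ handles the division by $\kappa_\theta$.

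Your Chernoff route for this piece does not give the stated rate. A relative error of $\sqrt{\log N/N}$ fails with probability $N^{-\Omega(1)}$, not $\exp(-\Omega(\log^2 N))$. Any tail-bound version leaves a $\log N$ factor, so for small $R$ the fresh terms are not dominated by the non-fresh ones. You would then not reach $c_{\mathrm{dec}}\sqrt{(R+1)/N}$ with an absolute constant; use the variance directly.

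The remaining pieces are sound and match the paper's proof: the non-fresh estimates, the mirror term via $\mathrm{Dec}_{-,a}^2=\mathrm{id}$, and orthogonality across residual records $(I,S)$.
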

\begin{proof}
We prove the statement for a forward query; the inverse case is the
same after exchanging inputs and outputs. Fix a forward address $p\in[N]$ and
$x\in V$, and let
$
d:=R+6.
$

First, we delete the entry at $(p,0,x)$ from $(D,S)$, and denote the remaining
record by \((I,S)\). The records $(D,S)$ with the same remaining record  $(I,S)$ span
a subspace \(\mathcal V_{I,S}\) with basis
\begin{equation}
 \ket\perp:=\ket{I,S},\qquad
 \ket y:=\ket{I\cup\{(p;x\mapsto y)\},S}
 \quad(\forall y\notin\operatorname{im}(I_p)).
 \label{eq:fixed-residual-space}
\end{equation}
Here \(x\notin\operatorname{dom}(I_p)\).

 Let \(\Pi_{I,S}\) be the orthogonal projector onto the group \(\mathcal V_{I,S}\): the span of records whose remainder after deleting the queried entry is exactly \((I,S)\). For an arbitrary input \(\xi\), define 
$
\zeta_{I,S}:=\Pi_{I,S}\xi.
$
Since $\mathcal V_{I,S}$ and $\mathcal V_{I',S'}$ are orthogonal for $(I,S)\neq (I',S')$, we can write
\[
\xi=\sum_{I,S}\zeta_{I,S},
\qquad
\|\xi\|^2=\sum_{I,S}\|\zeta_{I,S}\|^2.
\]
Now recall that we want to prove
\[
\|E_a\xi\|\le\delta_R\|\xi\|,
\qquad
E_a=\mathrm{Dec}_{-,a}W-W\mathrm{Dec}_{+,a}.
\]
So suffices to prove that, for every \((I,S)\),
\[
\|E_a\zeta\|\le\delta_R\|\zeta\|
\quad
\text{for every }\zeta\in\mathcal V_{I,S}
\]

Fix such a pair $(I,S)$, and the corresponding subspace $\mathcal{V}_{I,S}$. Let
\[
 H:=\operatorname{coords}(I)\cup S\cup\{x_A,x_B\},\qquad
 t:=|H|\le4d+2,\qquad
 Y:=(\mathcal X_A\setminus H)
\times
(\mathcal X_B\setminus H),.
\]
Let
$$
n_A:=|\mathcal X_A\setminus H|,
\qquad
n_B:=|\mathcal X_B\setminus H|,
\qquad
n:=|Y|=n_An_B.
$$
For a unit vector $v$ orthogonal to $\bot$, let $S_v$ denote
the operator that swaps $\bot$ and $v$ and acts as identity on the orthogonal complement. Now recall what  the YES Decompression operator, $\mathrm{Dec}_{+,a}$, does: it simply applies $S_v$, where $v$ is the uniform superposition over all unused outputs. 

For the moment, we will replace  $\mathrm{Dec}_{+,a}$ with  $\mathrm{Dec}^Y_{+,a}:= S_{u_0}$, where 
$
u_0
:=
\frac1{\sqrt n}
\sum_{y\in Y}|y\rangle
$. Thus $\mathrm{Dec}^Y_{+,a}$ swaps \(|\perp\rangle\) with the smaller uniform superposition over the set of fresh outputs $Y$.

To define a similar fresh output NO decompression operator,  fix a pair labeling $\theta$ and  define
$
Y_\theta
:=
Y\cap\mathcal B_\theta(x),
$ where $B_{\theta}(x)$ is the block containing $x$ under the pair labelling $\theta$.
Now, similarly to $\mathrm{Dec}^Y_{+,a}$ , we can define the operator $\mathrm{Dec}^Y_{-,a}$ .  On the event
$
G:=\{\theta\in\Theta_{\rm size}\},
$
let $\mathrm{Dec}^Y_{-,a}= S_{u_\theta}$, where
$
u_\theta
:=
\frac1{\sqrt{|Y_\theta|}}
\sum_{y\in Y_\theta}|y\rangle.
$
Thus $\mathrm{Dec}^Y_{-,a}$ swaps \(|\perp\rangle\) with the smaller uniform superposition over the set of fresh outputs that are in the same NO block as $x$.   Outside $G$, we define $\mathrm{Dec}^Y_{-,a}$ to be the identity.

We will for now prove that, for every
$\zeta\in\mathcal V_{I,S}\otimes\mathcal E$,
\begin{equation}
\left\|
\left(
\mathrm{Dec}^Y_{-,a}W
-
W\mathrm{Dec}^Y_{+,a}
\right)\zeta
\right\|
=
O(N^{-1/2})\|\zeta\|.
\label{eq:completion-restricted-bound}
\end{equation}
Let $ 
\zeta=z|\bot\rangle+f$ where $f=\sum_{y\in Y}f_y|y\rangle$ and $ z,f_y\in\mathbb C.
$
We first analyze $W \mathrm{Dec}_{+,a}^{Y}\zeta$.
For each \(y\in Y, \) let $
I_y:=I\cup\{(p;x\mapsto y)\}.
$
\[\ket{y}=
|I_y,S\rangle
=
\frac1{\sqrt n}u_0
+
\left(|I_y,S\rangle-\frac1{\sqrt n}u_0\right).
\] 
Hence
\[
\mathrm{Dec}^{Y}_{+,a}|I_y,S\rangle
=
\frac1{\sqrt n}|I,S\rangle
+|I_y,S\rangle-\frac1{\sqrt n}u_0.
\]
Now,
\[
\begin{aligned}
\mathrm{Dec}^{Y}_{+,a}\zeta\\&= \mathrm{Dec}^{Y}_{+,a}
\left(z\ket{\bot}+\sum_{y\in Y}f_y\ket{y}\right)\\
&=
\mathrm{Dec}^{Y}_{+,a}
\left(z|I,S\rangle+\sum_{y\in Y}f_y|I_y,S\rangle\right)
\\
&=
z\,\mathrm{Dec}^{Y}_{+,a}|I,S\rangle
+\sum_{y\in Y}f_y\,\mathrm{Dec}^{Y}_{+,a}|I_y,S\rangle
\\
&=
z u_0+
\sum_{y\in Y}f_y
\left(
\frac1{\sqrt n}|I,S\rangle
+|I_y,S\rangle-\frac1{\sqrt n}u_0
\right)
\\
&=
z u_0
+\left(\frac1{\sqrt n}\sum_{y\in Y}f_y\right)|I,S\rangle
\\
&\qquad
+\sum_{y\in Y}f_y|I_y,S\rangle
-\left(\frac1{\sqrt n}\sum_{y\in Y}f_y\right)u_0
\\
&=
z u_0+f_0|I,S\rangle+f-f_0u_0
\\
&=
f_0|I,S\rangle+f+(z-f_0)u_0.
\end{aligned}
\]
Applying $W$, 
\[
\begin{aligned}
W\mathrm{Dec}^{Y}_{+,a}\zeta
&=
W\bigl(f_0|I,S\rangle+f+(z-f_0)u_0\bigr)
\\
&=
f_0W|I,S\rangle
+\sum_{y\in Y}f_yW|I_y,S\rangle
+\frac{z-f_0}{\sqrt n}\sum_{y\in Y}W|I_y,S\rangle
\\
&=
f_0W|I,S\rangle
+\sum_{y\in Y}
\left(f_y+\frac{z-f_0}{\sqrt n}\right)W|I_y,S\rangle.
\end{aligned}
\]
Expanding \(W\), then using the guarantee from Lemma \ref{lem:completion-amplitude_prop}:
\[
\begin{aligned}
W\mathrm{Dec}^{Y}_{+,a}\zeta
&=
\sum_\theta\sqrt{\nu(\theta)}
\Biggl[
f_0a_{I,S}(\theta)|\theta,I_\theta,S\rangle
\\
&\qquad+
\sum_{y\in Y}
\left(f_y+\frac{z-f_0}{\sqrt n}\right)
a_{I_y,S}(\theta)|\theta,(I_y)_\theta,S\rangle
\Biggr]
\\
&=
\sum_\theta\sqrt{\nu(\theta)}\,a_{I,S}(\theta)
\Biggl[
f_0|\theta,I_\theta,S\rangle
\\
&\qquad+
\sum_{y\in Y}
\left(f_y+\frac{z-f_0}{\sqrt n}\right)
\frac{\mathbf1[y\in Y_\theta]}{\sqrt{\beta_\theta(x)}}
|\theta,(I_y)_\theta,S\rangle
\Biggr]
\\
&=
\sum_\theta\sqrt{\nu(\theta)}\,a_{I,S}(\theta)
\,|\theta\rangle\otimes
\Biggl[
f_0|I_\theta,S\rangle
\\
&\qquad+
\frac1{\sqrt{\beta_\theta(x)}}
\sum_{y\in Y_\theta}f_y|(I_y)_\theta,S\rangle
\\
&\qquad+
\frac{z-f_0}{\sqrt{\beta_\theta(x)n}}
\sum_{y\in Y_\theta}|(I_y)_\theta,S\rangle
\Biggr].
\end{aligned}
\]
Finally, using
\[
\frac1{\sqrt{\beta_\theta(x)n}}
\sum_{y\in Y_\theta}|(I_y)_\theta,S\rangle
=
\kappa_\theta u_\theta,
\] where $\kappa_\theta= \frac{\sqrt{|Y_{\theta}|}}{\sqrt{\beta_{\theta}(x)n}}$
we obtain
\[
\begin{aligned}
W\mathrm{Dec}^{Y}_{+,a}\zeta
=
\sum_\theta
&\sqrt{\nu(\theta)}\,a_{I,S}(\theta)
\,|\theta\rangle\otimes
\Biggl[
f_0|I_\theta,S\rangle
\\
&+
\frac1{\sqrt{\beta_\theta(x)}}
\sum_{y\in Y_\theta}f_y|(I_y)_\theta,S\rangle
\\
&+\kappa_\theta(z-f_0)u_\theta
\Biggr].
\end{aligned}
\]
Now we will analyze $\mathrm{Dec}_{-,a}^{Y}W\zeta$.
For \(y\in Y_\theta\), the NO decompression satisfies
\[
\mathrm{Dec}^{Y}_{-,a}|\theta,I_\theta,S\rangle
=
|\theta\rangle\otimes u_\theta,
\]
and
\[
\begin{aligned}
&\mathrm{Dec}^{Y}_{-,a}|\theta,(I_y)_\theta,S\rangle\\
&\quad=
|\theta\rangle\otimes
\left(
\frac1{\sqrt{|Y_\theta|}}|I_\theta,S\rangle
+|(I_y)_\theta,S\rangle
-\frac1{\sqrt{|Y_\theta|}}u_\theta
\right).
\end{aligned}
\]
Expanding \(W\), then applying NO decompression:
\[
\begin{aligned}
\mathrm{Dec}^{Y}_{-,a}W\zeta
&=
\mathrm{Dec}^{Y}_{-,a}
\sum_\theta\sqrt{\nu(\theta)}\,a_{I,S}(\theta)
\Biggl[
z|\theta,I_\theta,S\rangle
\\
&\qquad+
\frac1{\sqrt{\beta_\theta(x)}}
\sum_{y\in Y_\theta}f_y|\theta,(I_y)_\theta,S\rangle
\Biggr]
\\
&=
\sum_\theta\sqrt{\nu(\theta)}\,a_{I,S}(\theta)
\Biggl[
z\,\mathrm{Dec}^{Y}_{-,a}|\theta,I_\theta,S\rangle
\\
&\qquad+
\frac1{\sqrt{\beta_\theta(x)}}
\sum_{y\in Y_\theta}f_y
\,\mathrm{Dec}^{Y}_{-,a}|\theta,(I_y)_\theta,S\rangle
\Biggr]
\\
&=
\sum_\theta\sqrt{\nu(\theta)}\,a_{I,S}(\theta)
\,|\theta\rangle\otimes
\Biggl[
z u_\theta
\\
&\qquad+
\frac1{\sqrt{\beta_\theta(x)}}
\sum_{y\in Y_\theta}f_y
\left(
\frac{|I_\theta,S\rangle}{\sqrt{|Y_\theta|}}
+|(I_y)_\theta,S\rangle
-\frac{u_\theta}{\sqrt{|Y_\theta|}}
\right)
\Biggr]
\\
&=
\sum_\theta\sqrt{\nu(\theta)}\,a_{I,S}(\theta)
\,|\theta\rangle\otimes
\Biggl[
z u_\theta
\\
&\qquad+
\frac1{\sqrt{\beta_\theta(x)}}
\sum_{y\in Y_\theta}f_y|(I_y)_\theta,S\rangle
\\
&\qquad+
\left(
\frac1{\sqrt{\beta_\theta(x)|Y_\theta|}}
\sum_{y\in Y_\theta}f_y
\right)
\bigl(|I_\theta,S\rangle-u_\theta\bigr)
\Biggr].
\end{aligned}
\]
Finally, rewrite the scalar coefficient:
\[
\begin{aligned}
\frac1{\sqrt{\beta_\theta(x)|Y_\theta|}}
\sum_{y\in Y_\theta}f_y
&=
\frac1{\kappa_\theta\sqrt n}
\sum_{y\in Y}
\frac{\mathbf1[y\in Y_\theta]}{\beta_\theta(x)}f_y
\\
&=
\frac1{\kappa_\theta}
\Biggl[
\frac1{\sqrt n}\sum_{y\in Y}f_y
\\
&\qquad+
\frac1{\sqrt n}\sum_{y\in Y}
\left(
\frac{\mathbf1[y\in Y_\theta]}{\beta_\theta(x)}-1
\right)f_y
\Biggr]
\\
&=
\frac{f_0+g_\theta}{\kappa_\theta},
\end{aligned}
\]
where
\[
g_\theta:=
\frac1{\sqrt n}\sum_{y\in Y}
\left(
\frac{\mathbf1[y\in Y_\theta]}{\beta_\theta(x)}-1
\right)f_y.
\]
Therefore,
\[
\begin{aligned}
\mathrm{Dec}^{Y}_{-,a}W\zeta
=
\sum_\theta
&\sqrt{\nu(\theta)}\,a_{I,S}(\theta)
\,|\theta\rangle\otimes
\Biggl[
z u_\theta
\\
&+
\frac1{\sqrt{\beta_\theta(x)}}
\sum_{y\in Y_\theta}f_y|(I_y)_\theta,S\rangle
\\
&+
\frac{f_0+g_\theta}{\kappa_\theta}
\bigl(|I_\theta,S\rangle-u_\theta\bigr)
\Biggr].
\end{aligned}
\]

Now observe that
\[
\begin{aligned}
&\bigl(\mathrm{Dec}^{Y}_{-,a}W-W\mathrm{Dec}^{Y}_{+,a}\bigr)\zeta
\\
&\quad=
\sum_{\theta\in\Theta_{\mathrm{size}}}
\sqrt{\nu(\theta)}\,a_{I,S}(\theta)
|\theta\rangle\otimes e_\theta.
\end{aligned}
\]
where 
\[
\begin{aligned}
e_\theta
&=
z u_\theta
+\frac{f_0+g_\theta}{\kappa_\theta}
\bigl(|I_\theta,S\rangle-u_\theta\bigr)
\\
&\qquad
-f_0|I_\theta,S\rangle
-\kappa_\theta(z-f_0)u_\theta
\\
&=
(1-\kappa_\theta)(z-f_0)u_\theta
\\
&\qquad+
\left(
\frac{f_0+g_\theta}{\kappa_\theta}-f_0
\right)
\bigl(|I_\theta,S\rangle-u_\theta\bigr)
\\
&=
(1-\kappa_\theta)(z-f_0)u_\theta
\\
&\qquad+
\frac{g_\theta+(1-\kappa_\theta)f_0}{\kappa_\theta}
\bigl(|I_\theta,S\rangle-u_\theta\bigr).
\end{aligned}
\]
This implies that
\[
\begin{aligned}
&\left\|
\bigl(\mathrm{Dec}^{Y}_{-,a}W-W\mathrm{Dec}^{Y}_{+,a}\bigr)\zeta
\right\|^2
\\
&\quad=
\sum_{\theta\in\Theta_{\mathrm{size}}}
\nu(\theta)|a_{I,S}(\theta)|^2\|e_\theta\|^2.
\end{aligned}
\]
Now we will bound this expression. Since
\[
\|u_\theta\|=1,
\qquad
\||I_\theta,S\rangle-u_\theta\|=\sqrt2,
\]
the triangle inequality gives
\[
\begin{aligned}
\|e_\theta\|
&\le
|1-\kappa_\theta|\,|z-f_0|+
\frac{\sqrt2}{\kappa_\theta}
\left(
|g_\theta|+|1-\kappa_\theta|\,|f_0|
\right).
\end{aligned}
\]
Recall that
\[
g_\theta
=
\frac1{\sqrt n}\sum_{y\in Y}
\left(\frac{\mathbf1_y}{\beta}-1\right)f_y.
\]
We will compute $\mathbb{E}[g_{\theta}]^2$, and then show how this will be used to bound the quantity above.  Define the covariance matrix
\[
\begin{aligned}
\Sigma_{yy'}
&:=
\frac1n\,
\mathbb E\left[
\left(\frac{\mathbf1_y}{\beta}-1\right)
\left(\frac{\mathbf1_{y'}}{\beta}-1\right)
\right]
\\
&=
\frac{\operatorname{Cov}(\mathbf1_y,\mathbf1_{y'})}
{\beta^2n}.
\end{aligned}
\]
The second equality uses \(\mathbb E\mathbf1_y=\beta\).

Next, observe that
\[
\begin{aligned}
\mathbb E|g_\theta|^2
&=
\frac1n\sum_{y,y'}
\mathbb E\left[
\left(\frac{\mathbf1_y}{\beta}-1\right)
\left(\frac{\mathbf1_{y'}}{\beta}-1\right)
\right]
\overline{f_y}f_{y'}
\\
&=
\sum_{y,y'}\Sigma_{yy'}\overline{f_y}f_{y'}
\\
&=
\langle f|\Sigma|f\rangle
\\
&\le
\|\Sigma\|\sum_y|f_y|^2
\\
&=
\|\Sigma\|\,\|f\|^2.
\end{aligned}
\tag{1}
\]
Now, we will bound $\mathbb E(\kappa_\theta-1)^2$. 
Since \(|Y_\theta|=\sum_y\mathbf1_y\),
\[
\begin{aligned}
\kappa_\theta^2-1
&=
\frac{|Y_\theta|}{\beta n}-1\\
&=
\frac1n\sum_y
\left(\frac{\mathbf1_y}{\beta}-1\right).
\end{aligned}
\]
Also, \(\kappa_\theta\ge0\), so
\[
|\kappa_\theta-1|
=
\frac{|\kappa_\theta^2-1|}{\kappa_\theta+1}
\le|\kappa_\theta^2-1|.
\]
Consequently,
\[
\begin{aligned}
\mathbb E(\kappa_\theta-1)^2
&\le
\mathbb E(\kappa_\theta^2-1)^2
\\
&=
\frac1{n^2}\sum_{y,y'}
\mathbb E\left[
\left(\frac{\mathbf1_y}{\beta}-1\right)
\left(\frac{\mathbf1_{y'}}{\beta}-1\right)
\right]
\\
&=
\frac1n\sum_{y,y'}\Sigma_{yy'}
\\
&=
\langle u_0|\Sigma|u_0\rangle
\\
&\le
\|\Sigma\|.
\end{aligned}
\]
The same matrix therefore controls both errors, so 
we will bound $\|\Sigma\|$. 
Suppose first that \(x\) lies in a four-label block. Recall that \(K=N^2\) and the probability that $x$ lies the block defined by some labeling $\theta$ is
$
\beta_\theta(x)=\frac{24}{K^2}.$
For \(y,y'\in Y\), there are three cases.
If \(y=y'\), then
$
\mathbb E[\mathbf1_y\mathbf1_{y'}]
=
\mathbb E\mathbf1_y
=
\frac{24}{K^2}.$

If $y\neq y'$, then by definition this implies that $(y_{A},y_B)\neq (y'_{A},y'_B)$. The second case we consider is one  where exactly one of the  coordinates is equal. more concretely,  either $y_A=y'_A$ and $y_B\neq y'_B$  or $y_B=y'_B$ and $y_A\neq y'_A$. The shared coordinate can receive any of the $12$ ordered pairs of distinct elements of $Q$. Each choice leaves two allowed labels for each of the other two coordinates, namely the two orderings of the remaining two elements of $Q$. Thus
$
\mathbb E[\mathbf1_y\mathbf1_{y'}]
=
12\cdot\frac{4}{K^3}=\frac{48}{K^3}.
$
If they share no coordinates, their labels are independent, so
\[
\mathbb E[\mathbf1_y\mathbf1_{y'}]
=
\beta^2.
\]
Substituting into
\[
\Sigma_{yy'}
=
\frac{\mathbb E[\mathbf1_y\mathbf1_{y'}]-\beta^2}
{\beta^2n}
\]
gives
\[
\Sigma_{yy'}
=
\begin{cases}
(K^2/24-1)/n,&y=y',\\
(K/12-1)/n,&\text{one shared coordinate},\\
0,&\text{no shared coordinates}.
\end{cases}
\]
Each row of the matrix has  one diagonal entry and \(n_A+n_B-2\) entries corresponding to one shared coordinate. All entries are nonnegative. Therefore,
\[
\begin{aligned}
\|\Sigma\|
&\le
\max_y\sum_{y'}\Sigma_{yy'}
\\
&=
\frac{
K^2/24-1+(n_A+n_B-2)(K/12-1)
}{
n_An_B
}
\\
&\le
\frac{K^2}{24n_An_B}
+\frac K{12}\left(\frac1{n_A}+\frac1{n_B}\right)
\\
&=
O(N^{-2})+O(N^{-1})
\\
&=
O(1/N).
\end{aligned}
\tag{3}
\]
Here we used \(K=O(N^2)\) and \(n_A,n_B\ge N^3/2\).

For the remainder block, \(\beta=1-(N)_4/K^2\ge5/N\) for \(N\ge11\).
Indicators of outputs that share no coordinate are independent, so the corresponding entries of \(\Sigma\) vanish,
and the diagonal entries are \((\beta^{-1}-1)/n=O(N^{-5})\).
Now suppose that \(y\) and \(y'\) share exactly one coordinate.
Given the label of the shared coordinate, \(\mathbf1_y\) and \(\mathbf1_{y'}\) are independent with the same conditional mean \(q\),
where \(q=1\) if that label has the form \((a,a)\), and \(q=(5N-6)/N^2\) otherwise.
Since \(\mathbb E q=\beta\),
\[
\operatorname{Cov}(\mathbf1_y,\mathbf1_{y'})
=
\operatorname{Var}(q)
\le
\mathbb E q^2
\le
\frac1N+\frac{25}{N^2},
\]
so these entries of \(\Sigma\) are nonnegative and at most \(\frac{N^2}{25}\bigl(\frac1N+\frac{25}{N^2}\bigr)\frac1n=O(N/n)\).
Taking the largest row sum as before,
\[
\|\Sigma\|
\le
\frac{\beta^{-1}-1}{n}
+(n_A+n_B-2)\,O\!\left(\frac Nn\right)
=
O(N^{-2}).
\]
Putting everything together, 
\[
\mathbb E|g_\theta|^2
\le O(\frac {1}{N})\|f\|^2,
\qquad
\mathbb E(\kappa_\theta-1)^2
\le O(\frac {1}{N}).
\]
We also need a lower bound on the denominator $\kappa_{\theta}$ in order to bound the final expression. For all pair labels $\theta \in \Theta_{\text{size}}$, the number of coordinates consistent with $\theta$ is $\geq (1-\varepsilon_0)\cdot
\mu$. Not counting the coordinates in $H$, the number of consistent coordinates that are fresh is $\geq (1-\varepsilon_0)\cdot
\mu-t$. So we can conclude that $|Y_{\theta}|\geq 
|Y_\theta|
\ge
\beta K^2\bigl((1-\varepsilon_0)\mu-t\bigr)^2.
$
Using \(n\le N^6=K^2\mu^2\),
\[
\begin{aligned}
\kappa_\theta^2
&=
\frac{|Y_\theta|}{\beta n}
\\
&\ge
\frac{
\beta K^2\bigl((1-\varepsilon_0)\mu-t\bigr)^2
}{
\beta K^2\mu^2
}
\\
&=
\left(1-\varepsilon_0-\frac t\mu\right)^2
\\
&\ge\frac14.
\end{aligned}
\]
 Therefore,
\[
\kappa_\theta^{-1}\le2
\qquad
(\theta\in\Theta_{\mathrm{size}}).
\]
Now, let's recall that we had previously obtained:
\[
\begin{aligned}
\|e_\theta\|
\le{}&
|1-\kappa_\theta|\,|z-f_0|
\\
&+
\frac{\sqrt2}{\kappa_\theta}
\left(
|g_\theta|+|1-\kappa_\theta|\,|f_0|
\right).
\end{aligned}
\]
  \(\theta\in\Theta_{\mathrm{size}}\),
  Using the lower bound on $\kappa_{\theta}$ established above,
\[
\begin{aligned}
\|e_\theta\|
\le{}&
|1-\kappa_\theta|\,|z-f_0|
\\
&+
2\sqrt2\,|g_\theta|
+
2\sqrt2\,|1-\kappa_\theta|\,|f_0|.
\end{aligned}
\]
Thus
\[
\begin{aligned}
&\left(
\mathbb E[
\mathbf1_{\Theta_{\mathrm{size}}}\|e_\theta\|^2]
\right)^{1/2}
\\
&\quad\le
|z-f_0|\sqrt{\mathbb E(1-\kappa_\theta)^2}
\\
&\qquad+
2\sqrt2\sqrt{\mathbb E|g_\theta|^2}
\\
&\qquad+
2\sqrt2\,|f_0|
\sqrt{\mathbb E(1-\kappa_\theta)^2}
\\
&\quad\le
\sqrt{\|\Sigma\|}
\left(
|z-f_0|+2\sqrt2\|f\|+2\sqrt2|f_0|
\right).
\end{aligned}
\]
By Cauchy–Schwarz,
\[
\begin{aligned}
|f_0|^2
&=
\left|\frac1{\sqrt n}\sum_y f_y\right|^2
\\
&\le\sum_y|f_y|^2
=\|f\|^2.
\end{aligned}
\]
Also,
\[
\begin{aligned}
|z-f_0|^2
&\le2(|z|^2+|f_0|^2)\\
&\le2(|z|^2+\|f\|^2)\\
&=2\|\zeta\|^2.
\end{aligned}
\]
Therefore,
\[
\begin{aligned}
\left(
\mathbb E[
\mathbf1_{\Theta_{\mathrm{size}}}\|e_\theta\|^2]
\right)^{1/2}
&\le
5\sqrt2\,\sqrt{\|\Sigma\|}\,\|\zeta\|.
\end{aligned}
\]
Squaring gives
\[
\mathbb E[
\mathbf1_{\Theta_{\mathrm{size}}}\|e_\theta\|^2]
\le
50\|\Sigma\|\,\|\zeta\|^2
\le
\frac cN\|\zeta\|^2.
\]
Putting everything together, 
\[
\begin{aligned}
&\left\|
(\mathrm{Dec}^{Y}_{-,a}W-W\mathrm{Dec}^{Y}_{+,a})\zeta
\right\|^2
\\
&\quad=
\mathbb E_\nu\!\left[
|a_{I,S}(\theta)|^2
\mathbf1_{\Theta_{\mathrm{size}}}\|e_\theta\|^2
\right]
\\
&\quad\le
\frac cN\|\zeta\|^2
\,\mathbb E_\nu|a_{I,S}(\theta)|^2
\\
&\quad=
\frac cN\|\zeta\|^2,
\end{aligned}
\]

Hence 
\[
\left\|
(\mathrm{Dec}^{Y}_{-,a}W-W\mathrm{Dec}^{Y}_{+,a})\zeta
\right\|
\le
\sqrt{\frac cN}\,\|\zeta\|.
\]
Now we will look at entries that are not fresh, i.e. \(y\notin Y\). In this case, $\mathrm{Dec}^{Y}_{\pm,a}$ and $W$ does not change  \(y\). Therefore,
\[
\begin{aligned}
\mathrm{Dec}^{Y}_{-,a}W|I_y,S\rangle
&=W|I_y,S\rangle\\
&=W\mathrm{Dec}^{Y}_{+,a}|I_y,S\rangle.
\end{aligned}
\]
Thus the difference is zero on these nonfresh answer records. Combining this with the bound already proved on the fresh subspace gives
\[
\left\|
(\mathrm{Dec}^{Y}_{-,a}W-W\mathrm{Dec}^{Y}_{+,a})\zeta
\right\|
\le
\sqrt{\frac cN}\,\|\zeta\|
\]
for every \(\zeta\in\mathcal V_{I,S}\).
It remains to compare $\mathrm{Dec}^{Y}_{\pm,a}$ with $\mathrm{Dec}_{\pm,a}$, which represent the full decompression operators. 
Suppose \(v\) is uniform over \(M\) answer records, and \(v'\) is uniform over a subset containing \((1-\alpha)M\) of them. Then
\[
\begin{aligned}
\langle v|v'\rangle
&=
\frac{(1-\alpha)M}
{\sqrt M\sqrt{(1-\alpha)M}}\\
&=\sqrt{1-\alpha}.
\end{aligned}
\]
Consequently,
\[
\begin{aligned}
\|v-v'\|^2
&=2-2\langle v|v'\rangle\\
&=2-2\sqrt{1-\alpha}\\
&\le2\alpha.
\end{aligned}
\]
Observe that the swap between \(|\bot\rangle\) and \(v\) can be written as
\[
S_v=I-2|r_v\rangle\langle r_v|,
\qquad
r_v=\frac{|\bot\rangle-v}{\sqrt2}.
\]
Therefore,
\[
\begin{aligned}
\|S_v-S_{v'}\|
&=
2\bigl\|
|r_v\rangle\langle r_v|
-|r_{v'}\rangle\langle r_{v'}|
\bigr\|\\
&\le4\|r_v-r_{v'}\|\\
&=2\sqrt2\,\|v-v'\|\\
&\le4\sqrt\alpha.
\end{aligned}
\]
So deleting a fraction \(\alpha\) of the available answers changes the decompression operator by at most \(4\sqrt\alpha\).
\[
t_A=|H\cap A|,
\qquad
t_B=|H\cap B|,
\qquad
t_A+t_B=t.
\]
Since \(|A|=|B|=N^3\), the fraction of all outputs that are not fresh is
\[
\begin{aligned}
1-\frac{(N^3-t_A)(N^3-t_B)}{N^6}
&=
\frac{t_A+t_B}{N^3}
-\frac{t_At_B}{N^6}\\
&\le\frac t{N^3}.
\end{aligned}
\]
Thus, 
\[
\|\mathrm{Dec}_{+,a}-\mathrm{Dec}^{Y}_{+,a}\|
\le
4\sqrt{\frac t{N^3}}
\] on \(\mathcal V_{I,S}\).
Now we will compare $\mathrm{Dec}_{-,a}$ and $\mathrm{Dec}^{Y}_{-,a}$. 
Fix \(\theta\in\Theta_{\mathrm{size}}\). Its block is a union of cells
\[
C_J^A(\theta)\times C_{J'}^B(\theta),
\]
where each class has size at least \((1-\varepsilon_0)\mu\).

Within any such cell, the fraction of outputs having a coordinate in \(H\) is at most
\[
\begin{aligned}
\frac{|H\cap C_J^A|}{|C_J^A|}
+\frac{|H\cap C_{J'}^B|}{|C_{J'}^B|}
&\le
\frac{t_A+t_B}{(1-\varepsilon_0)\mu}\\
&=
\frac t{(1-\varepsilon_0)\mu}.
\end{aligned}
\]
Hence \[
\begin{aligned}
\|(\mathrm{Dec}_{-,a}-\mathrm{Dec}^{Y}_{-,a})W\zeta\|
&\le
4\sqrt{\frac t{(1-\varepsilon_0)\mu}}\,
\|W\zeta\|\\
&\le
4\sqrt{\frac t{(1-\varepsilon_0)\mu}}\,
\|\zeta\|.
\end{aligned}
\]
Finally, 
\[
\begin{aligned}
&\|(\mathrm{Dec}_{-,a}W-W\mathrm{Dec}_{+,a})\zeta\|\\
&\quad\le
\|(\mathrm{Dec}_{-,a}-\mathrm{Dec}^{Y}_{-,a})W\zeta\|\\
&\qquad+
\|(\mathrm{Dec}^{Y}_{-,a}W-W\mathrm{Dec}^{Y}_{+,a})\zeta\|\\
&\qquad+
\|W(\mathrm{Dec}^{Y}_{+,a}-\mathrm{Dec}_{+,a})\zeta\|\\
&\quad\le
\left(
4\sqrt{\frac t{(1-\varepsilon_0)\mu}}
+\sqrt{\frac cN}
+4\sqrt{\frac t{N^3}}
\right)\|\zeta\|.
\end{aligned}
\]
Now use
\[
\mu=N,
\qquad
t\le4(R+6)+2=4R+26.
\]
Since \(\varepsilon_0\) is a fixed constant, this gives
\[
\|(\mathrm{Dec}_{-,a}W-W\mathrm{Dec}_{+,a})\zeta\|
\le
c_{\mathrm{dec}}\sqrt{\frac{R+1}{N}}\,
\|\zeta\|.
\]
This proves the required bound on each fixed subspace \(\mathcal V_{I,S}\).
For a general input, as described above, we can decompose
\[
\xi=\sum_{I,S}\xi_{I,S},
\qquad
\xi_{I,S}\in\mathcal V_{I,S}.
\]
Thus
\[
\begin{aligned}
&\|(\mathrm{Dec}_{-,a}W-W\mathrm{Dec}_{+,a})\xi\|^2\\
&\quad=
\sum_{I,S}
\|(\mathrm{Dec}_{-,a}W-W\mathrm{Dec}_{+,a})\xi_{I,S}\|^2\\
&\quad\le
c_{\mathrm{dec}}^2\frac{R+1}{N}
\sum_{I,S}\|\xi_{I,S}\|^2\\
&\quad=
c_{\mathrm{dec}}^2\frac{R+1}{N}\|\xi\|^2.
\end{aligned}
\]
Taking square roots,
\[
\|(\mathrm{Dec}_{-,a}W-W\mathrm{Dec}_{+,a})\xi\|
\le
c_{\mathrm{dec}}\sqrt{\frac{R+1}{N}}\,\|\xi\|.
\]
\end{proof}

\subsubsection{Weight on bad NO parameters}
\label{sec:weighted-graph-bound}

Soundness applies on the range of \(\id-P_{\mathrm{bad}}\).
We therefore bound the norm on bad NO parameters after applying
\(J_-^\dagger\).

\begin{lemma}[Weight on graph-concentration failures]
\label{lem:weighted-graph-failure}
For an integer \(0\le R\le N/64\), let \(\xi\) be any vector in
\(\mathsf{Rec}_{-,\le R}\otimes\mathcal E\) with
\(P_{\mathrm{size}}\xi=\xi\). Then
\begin{equation}
 \|P_{\rm bad}J_-^\dagger\xi\|^2
 \le\gamma_R\|J_-^\dagger\xi\|^2,
 \qquad
 \gamma_R:=2N^6(N+1)^R\exp(-N/8192).
 \label{eq:bad-no-mass}
\end{equation}

\end{lemma}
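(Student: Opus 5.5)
The plan is to reduce to a fixed pair-labeling $\theta\in\Theta_{\mathrm{size}}$ and fixed coordinate signs, and then to bound the weight on graph-concentration failures by a union bound over the small set of permutation entries that the record fixes. Both $J_-^\dagger$ and $P_{\mathrm{bad}}$ are block diagonal in $\theta$. The sign part of $J_-$ is a unitary Fourier transform, and $P_{\mathrm{bad}}$ (for fixed $\theta\in\Theta_{\mathrm{size}}$) depends only on the permutations and on $h$. The first step is therefore to decompose $\xi=\sum_\theta|\theta\rangle\otimes\xi_\theta$ and, after undoing the Fourier transform on the sign register, to work in the basis $|h\rangle$ on that register. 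Squared norms add over $\theta$, so it suffices to prove the bound for each fixed $\theta$. In the $h$-basis the record only constrains the permutation tables, so after decompressing the sign register we may also fix $h$ and add squared norms over $h$ in the same way.

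The second step is to treat the permutation register for fixed $\theta,h$. Here $\xi_{\theta,h}$ lies in the span of $\bigotimes_{p,B}|I_{p,B}\rangle$ with $\sum|I_{p,B}|\le R$. I will use the structure of $J_B^\dagger$ from \cite[Theorem~4.2]{CM26}. The expectation is that $J_B^\dagger|I\rangle$ is supported on permutations extending $I$, or more precisely, extending a subset of $I$ and avoiding certain values; I need to recall the exact form. Either way the support restricts at most $R$ of the $N$ cycle covers, say those in a set $P_0$ with $|P_0|\le R$, and leaves the permutations $\pi_p$ for $p\notin P_0$ uniform and independent within blocks. The bad event is $\|\Gamma-\overline\Gamma\|>1/8$. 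Changing the permutations $\pi_p$ for $p\in P_0$ moves $\Gamma$ by at most $2|P_0|/N\le 2R/N\le 1/32$, using $R\le N/64$. So the bad event forces $\|\frac1N\sum_{p\notin P_0}(\Gamma_p-\overline\Gamma)\|>1/8-1/32-R/N\ge 1/16$ roughly. This event depends only on the free uniform permutations, and \cref{thm:matrix-hoeffding} (with $L=2$, $\eta\approx1/16$, and $N-|P_0|\ge N/2$ terms) bounds its probability by $2N^6\exp(-N/8192)$.

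The third step, which I expect to be the main obstacle, is passing from this probability bound for fixed restricted permutations to a bound on the squared norm of a superposition of records. $J^\dagger$ of different records need not be orthogonal, and the amplitudes can interfere. I plan to decompose by which set $P_0$ of tables is nonempty and by the restricted values $(\pi_p)_{p\in P_0}$. Then $J_-^\dagger\xi$ has the form $\sum_{\rho}|\rho\rangle\otimes|\text{uniform on the rest}\rangle\otimes v_\rho$, where $\rho$ ranges over assignments of $\pi_p$ for $p\in P_0$. On each such component, $P_{\mathrm{bad}}$ acts as a diagonal projector on the free part. The free part is an exact uniform superposition, and the projector has weight at most the failure probability on it. The difficulty is that the set $P_0$ can vary across records in superposition. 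I will handle this by a union over subsets: write $\xi=\sum_{P_0}\xi_{P_0}$ grouped by the set of indices of nonempty tables. There are at most $\sum_{j\le R}\binom Nj\le (N+1)^R$ such groups, so Cauchy--Schwarz costs a factor $(N+1)^R$, which is exactly the factor appearing in $\gamma_R$.

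This route also requires relating $\sum\|J^\dagger\xi_{P_0}\|^2$ back to $\|J^\dagger\xi\|^2$. The relation holds because $J$ is an isometry and records with different sets of nonempty tables are orthogonal, so $\|J^\dagger\xi\|\le\|\xi\|$. If that step is not clean, I will instead state the bound relative to $\|\xi\|^2$ and use the norm bound \eqref{eq:transfer-norm} downstream.
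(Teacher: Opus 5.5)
Your route is the paper's: fix $\theta\in\Theta_{\mathrm{size}}$ and $h$, and observe that a record pins at most $R$ cycle covers, so overwriting them moves $\Gamma$ by at most $2R/N\le 1/32$. Then bound the failure probability of the free covers with \cref{thm:matrix-hoeffding}, group $\xi=\sum_T\xi_T$ by the set $T\subseteq[N]$ (your $P_0$) of indices carrying a nonempty table, and pay $\sum_{j\le R}\binom Nj\le(N+1)^R$ in Cauchy--Schwarz.

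The step you flagged at the end is not justified by the reason you give. Closing the chain requires
\[
\sum_T\|J_-^\dagger\xi_T\|^2=\|J_-^\dagger\xi\|^2,
\]
that is, pairwise orthogonality of the vectors $\psi_T:=J_-^\dagger\xi_T$. Orthogonality of the $\xi_T$ plus $J_-$ being an isometry does not give this. $J_-^\dagger$ is only a co-isometry and need not preserve orthogonality, as you note yourself at the start of your third step. The inequality $\|J_-^\dagger\xi\|\le\|\xi\|$ only yields $\sum_T\|\psi_T\|^2\le\|\xi\|^2$. Your fallback therefore proves $\|P_{\mathrm{bad}}J_-^\dagger\xi\|^2\le\gamma_R\|\xi\|^2$, which is weaker than the lemma whenever $\xi$ is not well formed. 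It would suffice for the transfer theorem only after an extra appeal to \eqref{eq:transfer-norm}.

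The missing argument is short. For $T\neq T'$ pick $p\in T\setminus T'$. For fixed $\theta$, $J_-^\dagger$ is a tensor product over the tables. Every record in $\xi_{T'}$ has all tables at index $p$ empty, so $\psi_{T'}$ carries the factor $\bigotimes_B|u_{p,B}\rangle$ on the $p$-th permutation register, where $|u_{p,B}\rangle$ is the uniform superposition over permutations of $B$. Every record in $\xi_T$ has some $I_{p,B}\neq\varnothing$, and
\[
\langle u_{p,B}|J_B^\dagger|I_{p,B}\rangle=\bigl(J_B|u_{p,B}\rangle\bigr)^\dagger|I_{p,B}\rangle=\langle\emptyset|I_{p,B}\rangle=0,
\]
since $J_B$ maps the uniform superposition to the empty table \cite[Theorem~4.2]{CM26}. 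The product $\prod_B\langle u_{p,B}|J_B^\dagger|I_{p,B}\rangle$ therefore vanishes, and $\langle\psi_{T'},\psi_T\rangle=0$. Equivalently, $\langle\psi_T,\psi_{T'}\rangle=\langle\xi_T,\Pi_-\xi_{T'}\rangle=0$, because $\Pi_-=J_-J_-^\dagger$ preserves every table size, as shown in the appendix. This is what underlies the paper's final equality $\sum_T\|\psi_T\|^2=\|J_-^\dagger\xi\|^2$.

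One bookkeeping remark: to get exactly $e^{-N/8192}$, do not just replace the number of free terms by $N/2$ at threshold $1/16$, which gives $e^{-N/16384}$. Either keep the rescaled threshold $N/(16(N-|T|))$ for the average over the free covers, or, as the paper does, resample the pinned covers and note that the full average must then deviate by more than $3/32>1/16$.
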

\begin{proof}
First fix \(\theta\in\Theta_{\mathrm{size}}\) and an let  \(h\) denote the signs assigned to all the coordinates. Let \(Z_p\) denote the collection of NO permutations
at index \(p\).
Let \(\Gamma_p\) be the signed adjacency matrix from
\cref{sec:theproblem}. Each \(\Gamma_p\) is Hermitian and has norm
at most \(1\). Using this, we can define
\[
 \Gamma:=\frac1N\sum_{p=1}^N\Gamma_p,\qquad
 \overline\Gamma:=\mathbb E\Gamma,\qquad
 F:=\{\|\Gamma-\overline\Gamma\|>1/8\}.
\]
Here the expectation is over the permutations, with \(\theta\) and \(h\) fixed.
For \(B\subseteq[N]\), define \(Z_B:=(Z_p)_{p\in B}\).
Suppose \(|B|\le R\), and fix any assigned values \(z_B\).
To sample conditioned on  \(Z_B=z_B\), we simply sample all \(Z_p\) independently, then replace \(Z_B\) with \(z_B\).
Let \(\Gamma'\) be the resulting average adjacency matrix. Since each replaced
summand changes by at most \(2\) in operator norm,
\[
 \|\Gamma'-\Gamma\|\le\frac{2|B|}{N}\le\frac1{32}.
\]
If \(\|\Gamma'-\overline\Gamma\|>1/8\), then
\(\|\Gamma-\overline\Gamma\|>3/32>1/16\).
The original matrices \(\Gamma_p-\mathbb E\Gamma_p\) are independent,
centered, and have norm at most \(2\).
\Cref{thm:matrix-hoeffding}, with dimension \(N^6\), implies
\begin{equation}
 p_{\text{fail}}=\Pr(F\mid Z_B=z_B)
 \le\Pr(\|\Gamma-\overline\Gamma\|>1/16)
 \le\beta:=2N^6e^{-N/8192}.
 \label{eq:graph-failure-after-prescriptions}
\end{equation}

For a NO database \(D\), define
$
T(D):=\{p:\text{some table }D_{p,B}\text{ is nonempty}\}$ and observe that $|T(D)|\le|D|\le R.
$
Then we can decompose $\xi $ into records 
$
\xi=\sum_{|T|\le R}\xi_T,
$
where \(\xi_T\) contains exactly the basis records with \(T(D)=T.\)Let 
$
\psi_T:=J_-^\dagger\xi_T.
$. 
Then$
J_-^\dagger\xi=\sum_{|T|\le R}\psi_T.
$
So it suffices to bound $\|P_{\rm bad}J_-^\dagger \psi_T\|^2$ for each such $\psi_T$.
We can expand $
\psi_T
=
\sum_{\theta,h,z_T}
c_T(\theta,h,z_T)
|\theta,h,z_T\rangle
\otimes
\bigotimes_{p\notin T}|u_p^\theta\rangle,
$, where $c_T(\theta,h,z_T)\in \mathbb{C}, \ket{z_T}=\bigotimes_{p\in T}|\pi_p\rangle$, and $\ket{u_p^\theta}$ is the uniform superposition over permutations indexed by $p \notin T$. 
Thus \[
\begin{aligned}
\|P_{\mathrm{bad}}\psi_T\|^2
&=
\sum_{\theta,h,z_T}
|c_T(\theta,h,z_T)|^2
\Pr[F\mid\theta,h,Z_T=z_T]
\\
&\le
p_{\mathrm{fail}}
\sum_{\theta,h,z_T}|c_T(\theta,h,z_T)|^2
\\
&=
p_{\mathrm{fail}}\|\psi_T\|^2.
\end{aligned}
\]
Let $L= \sum_{j=1}^{R}\binom{N}{j}$ denote the number of choices of  sets \(T\). Then
Cauchy–Schwarz gives
\[
\begin{aligned}
\|P_{\mathrm{bad}}J_-^\dagger\xi\|^2
&=
\left\|\sum_TP_{\mathrm{bad}}\psi_T\right\|^2\\
&\le
L\sum_T\|P_{\mathrm{bad}}\psi_T\|^2\\
&\le
L\,p_{\mathrm{fail}}\sum_T\|\psi_T\|^2\\
&=
L\,p_{\mathrm{fail}}\|J_-^\dagger\xi\|^2\\
&\le
2N^6(N+1)^Re^{-N/8192}
\|J_-^\dagger\xi\|^2.
\end{aligned}
\]
This completes the proof. 
\end{proof}

%% file: oracle-transfer.tex
\subsection{Proof of the transfer theorem}
\label{sec:generic-transfer}
\label{sec:generic-oracle-transfer}
In this section, we put everything together to construct the transfer map $
\mathcal T:=J_-^\dagger WJ_+:
\mathcal H_+\longrightarrow\mathcal H_-.
$ and prove Theorem \ref{thm:transfer-certificate}.
First, some notation. For an address $a=(p,z,v)$, let $P_{\pm,a}$ be the orthogonal projector onto the span of the basis records whose database contains $(p, v\rightarrow y)$ if $z=0$, and $(p, x\rightarrow v)$ if $z=1$. By construction of the completion map $W$, an entry $a$ is present in some $W\ket{D,S}$ iff it is present in $D$, and furthermore $P_{-,a}W=WP_{+,a}$ for all $a$.

\begin{proposition}[Well-formedness and decoding]
\label{prop:record-decoding}
For $\tau\in\{+,-\}$, let
$\Pi_\tau:=J_\tau J_\tau^\dagger$, and define
\[
Q_{\tau,a}:=
(I-P_{\tau,a})\operatorname{Dec}_{\tau,a}P_{\tau,a}.
\]
Then, for every address $a$,
\begin{equation}
Q_{\tau,a}\Pi_\tau=0.
\label{eq:record-validity-cancellation}
\end{equation}
Moreover, for every
$\xi\in\mathrm{Rec}_\tau\otimes\mathcal E$,
\[
\|\xi\|^2-\|J_\tau^\dagger\xi\|^2
=\|(I-\Pi_\tau)\xi\|^2.
\]

In the NO case, let
$s_-:=(1-\varepsilon_0)^2\mu^2$ and suppose
$3R\le s_-$. Then every
$\xi\in\mathrm{Rec}_{-,\le R}\otimes\mathcal E$
satisfying $P_{\mathrm{size}}\xi=\xi$ obeys
\begin{equation}
\|(I-\Pi_-)\xi\|^2
\le 2\sum_a\|Q_{-,a}\xi\|^2.
\label{eq:recording-decoding-loss}
\end{equation}

\end{proposition}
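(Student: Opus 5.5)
The proposition has three parts, and I would prove them in increasing order of difficulty.

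\emph{The norm identity.} This is immediate. Since $J_\tau$ is an isometry, $J_\tau^\dagger J_\tau=I$, so $\Pi_\tau$ is an orthogonal projector. Moreover, $\|J_\tau^\dagger\xi\|^2=\langle\xi,\Pi_\tau\xi\rangle=\|\Pi_\tau\xi\|^2$. Pythagoras then gives $\|\xi\|^2-\|J_\tau^\dagger\xi\|^2=\|(I-\Pi_\tau)\xi\|^2$.

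\emph{The cancellation} \eqref{eq:record-validity-cancellation}. It suffices to show $Q_{\tau,a}J_\tau\ket{\omega}=0$ for every parameter basis state. The operators factor over tables: in the YES case $\mathrm{Dec}_{+,a}$ and $P_{+,a}$ act only on the $D_p$ tensor factor, and in the NO case only on the factor $D_{p,\mathcal B_\theta(v)}$ for each fixed $\theta$. So the claim reduces to a single compressed permutation $J_X\ket{\pi}$ on a set $X$. I would then invoke the structure of the Carolan--Majenz compression \cite[Theorem~4.2]{CM26}. Their image is characterized as the span of states in which, for every point $x$, decompressing at $x$ maps the "$x$ recorded" component entirely into the "$x$ recorded" subspace. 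Equivalently, decompression never turns a recorded entry for $x$ into a non-recorded one on well-formed states. This is exactly the statement $(I-P_a)\mathrm{Dec}_aP_a J_X=0$.

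If the cited theorem does not state this verbatim, I would derive it from their definition $J_X=\prod_x \mathrm{Dec}_x\cdot(\text{standard}\to\text{compressed})$. The operators $\mathrm{Dec}_a$ acting on the $x$-slot commute past the other factors up to the standard relabeling, so the component with $x$ recorded is a purified uniform answer. Under the swap this component stays within the span of $\{\ket{I\cup\{x\mapsto y\}}\}$. I expect this to be the first genuine technical point, and I would check it against the forward and inverse conventions separately.

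\emph{The decoding bound} \eqref{eq:recording-decoding-loss}. This is the main obstacle. The plan is to show that the well-formed subspace, restricted to records of size at most $R$, is characterized by the conditions $Q_{-,a}\xi=0$ for all $a$, up to a spectral gap. Concretely, for fixed $\theta\in\Theta_{\mathrm{size}}$, a fixed block, and a fixed table of domain size $s\ge s_-$, I would define the positive operator $\Delta:=\sum_a Q_{-,a}^\dagger Q_{-,a}$. I would then show $\Delta\succeq\frac12(I-\Pi_-)$ on $\mathsf{Rec}_{-,\le R}$ whenever $3R\le s$.

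The approach is to analyze $\Delta$ in the CM26 basis. Each partial bijection $I$ of size $r$ splits into its "valid" component (the image of $J_X$) and its orthogonal complement. The complement consists of components where some recorded entry $x\mapsto y$ has amplitude not of the decompressed uniform form. On such a component, $Q_a$ for $a=(p,0,x)$ detects the defect. A computation like the one in \cref{lem:completion-decompression} shows that $\mathrm{Dec}_a$ moves at least a $1-O(r/s)$ fraction of the non-uniform part out of $P_a$. Summing over the at most $2r$ relevant addresses, and using $r\le R\le s/3$ to keep the overlap losses below $1/2$, should give the factor $2$.

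Finally, I would combine the tensor factors. Orthogonality across $\theta$ and across the independent tables lets the squared norms add, so the per-table bound lifts to \eqref{eq:recording-decoding-loss}. The delicate step is controlling the cross terms between different addresses sharing a table, and I expect this is where the constant $3$ in $3R\le s_-$ is consumed.
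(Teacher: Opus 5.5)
The norm identity is fine, and you correctly identify the single-table inequality $\sum_a Q_a^\dagger Q_a\succeq\frac12(I-\Pi)$ on tables of size $r\le(s+1)/3$ as the target. But that inequality is the entire content of the decoding bound, and your sketch does not prove it. The paper does not reprove it either. It is the one-permutation decoding estimate \cite[Lemma~4.11, Eq.~(45)]{CM26}, which applies once you check that $Q_a\ket{I}=(s-r+1)^{-1/2}\ket{I\setminus\{e\}}$ is exactly CM's removal operator.

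The mechanism you propose cannot substitute for that citation. Write $u_{a,I'}$ for the normalized sum of fresh extensions of a table $I'$ at address $a$; for a forward address, $u_{x,I'}=(s-|I'|)^{-1/2}\sum_{y\notin\operatorname{ran}(I')}\ket{I'\cup\{x\mapsto y\}}$. Then, exactly,
$\|Q_a\xi\|^2=\sum_{I'}|\langle u_{a,I'},\xi\rangle|^2$.
So $\mathrm{Dec}_a$ swaps precisely the uniform component out of $P_a$ and fixes everything orthogonal to it. There is no per-address $1-O(r/s)$ estimate to make, and your sketch has the uniform and non-uniform parts reversed.

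The whole difficulty is the interaction between addresses, and it is not perturbative. We have $\sum_aQ_a^\dagger Q_a=\sum_{a,I'}\ket{u_{a,I'}}\!\bra{u_{a,I'}}$, and the $u$'s are linearly dependent: the sum of the forward vectors over $x\notin\operatorname{dom}(I')$ equals the sum of the inverse vectors over $y\notin\operatorname{ran}(I')$. Already at $r=1$ their Gram matrix is $\begin{pmatrix}I&\frac1s\mathbf 1\mathbf 1^{\top}\\ \frac1s\mathbf 1\mathbf 1^{\top}&I\end{pmatrix}$, with eigenvalues $0,1,2$. For every $r\ge1$ the off-diagonal part therefore has norm at least $1$. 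No argument that keeps the cross terms ``below $1/2$'' can give the factor $2$; what is needed is a spectral gap above a nontrivial kernel, which is what CM26 establish.

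Two smaller points. First, tables are tensor factors, not orthogonal summands, so the lift across tables is not by orthogonality. You need $I-\bigotimes_j\Pi_j\preceq\sum_j(I-\Pi_j)$ for the commuting single-table projectors. Then apply the per-table bound with the other tables as reference registers, and sum over table-size sectors and the orthogonal $\theta$ register, as the paper does.

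Second, for cancellation the relevant result is \cite[Lemma~4.10, Eq.~(39)]{CM26}, not Theorem~4.2. Your fallback derivation would fail. It is the random-function picture, but for permutations $\mathrm{Dec}_x$ and $\mathrm{Dec}_{x'}$ do not commute, because they share the range constraint. Moreover, a recorded component of uniform form is exactly what $\mathrm{Dec}_a$ sends back to the unrecorded table, so it is the defect that $Q_a$ detects, not the well-formed part.
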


\begin{proof}
The first identity is the cancellation property of
the permutation encoding \cite[Lemma~4.10]{CM26}.
Indeed, for a table $I$ of size $r$ containing
the entry $e$ queried by $a$,
\[
Q_{\tau,a}|I\rangle
=\frac{|I\setminus\{e\}\rangle}{\sqrt{s-r+1}},
\]
where $s$ is the size of the corresponding
permutation domain. This is precisely the
cancellation operator in \cite[Lemma~4.10]{CM26}.
The identity extends to our encoding by taking
tensor products over the permutation tables.

Since $J_\tau$ is an isometry, $\Pi_\tau$ is the
orthogonal projector onto its image and
$\|J_\tau^\dagger\xi\|=\|\Pi_\tau\xi\|$.
The second identity follows by orthogonal
decomposition.

For the NO bound, fix a balanced labeling
$\theta$. The encoding is a tensor product of
permutation encodings, with one factor for each
permutation and block. Every such block has size
at least $s_-$. By \cite[Lemma~4.11]{CM26}, the
one-permutation decoding estimate applies to
records of size at most $R$.

Writing $\Pi_j$ for the image projector of the
$j$-th permutation factor, we have
\[
I-\bigotimes_j\Pi_j
\preceq\sum_j(I-\Pi_j).
\]
Applying the one-permutation estimate to each
factor and summing gives
\[
\|(I-\Pi_{-,\theta})\xi_\theta\|^2
\le 2\sum_a\|Q_{-,a}\xi_\theta\|^2.
\]
The estimates extend from fixed table-size
sectors by orthogonality, since each
$Q_{-,a}$ removes one entry from a specified
table. Finally, summing squared norms over
the orthogonal labeling registers $\theta$
proves the claim.
\end{proof}

\begin{proof}[Proof of \cref{thm:transfer-certificate}]
Define
\begin{equation}
 \mathcal T:=J_-^\dagger WJ_+:\mathcal H_+\longrightarrow\mathcal H_-.
 \label{eq:physical-transfer}
\end{equation}

Fix a polynomially bounded integer-valued \(R=R(k)\ge0\).
Since \(N=2^k\), \(s_+=N^6\), and
\(s_-=(1-\varepsilon_0)^2\mu^2\ge N^2/2\),
\cref{prop:compressed-records,prop:record-decoding,lem:completion-class-loss,lem:completion-decompression,lem:weighted-graph-failure}
hold for all sufficiently large \(k\). For an input \(\phi\), let
$
\zeta:=J_+\phi\in\mathrm{Rec}_{+,\le R}.
$
We will first show that $(1-\eta_R^{\text{norm}})\|\phi\|^2 \le\|\mathcal T\phi\|^2\le\|\phi\|^2$. 
We first bound the part of \(W\zeta\) outside the well-formed NO subspace. Then
\[
\begin{aligned}
&(I-P_{-,a})\mathrm{Dec}_{-,a}P_{-,a}W\zeta\\
&\quad=
(I-P_{-,a})\mathrm{Dec}_{-,a}WP_{+,a}\zeta\\
&\quad=
(I-P_{-,a})
(\mathrm{Dec}_{-,a}W-W\mathrm{Dec}_{+,a})
P_{+,a}\zeta\\
&\qquad+
W(I-P_{+,a})\mathrm{Dec}_{+,a}P_{+,a}\zeta\\
&\quad=
(I-P_{-,a})
(\mathrm{Dec}_{-,a}W-W\mathrm{Dec}_{+,a})
P_{+,a}\zeta.
\end{aligned}
\]
The last equality follows from Equation \ref{eq:record-validity-cancellation}. Lemma \ref{lem:completion-decompression} therefore gives
\[
\|(I-P_{-,a})\mathrm{Dec}_{-,a}P_{-,a}W\zeta\|
\le
\delta_R\|P_{+,a}\zeta\|.
\]
Each recorded entry is counted once by its forward query and once by its inverse query. Hence
$
\sum_a\|P_{+,a}\zeta\|^2\le2R\|\zeta\|^2.
$
Applying Equation \ref{eq:recording-decoding-loss},
\[
\begin{aligned}
\|(I-\Pi_-)W\zeta\|^2
&\le
2\sum_a
\|(I-P_{-,a})\mathrm{Dec}_{-,a}P_{-,a}W\zeta\|^2\\
&\le
2\delta_R^2\sum_a\|P_{+,a}\zeta\|^2\\
&\le
4R\delta_R^2\|\zeta\|^2.
\end{aligned}
\tag{33}
\]
 Proposition \ref{prop:record-decoding} implies that for every  \(\xi\in\mathrm{Rec}_{\pm,\le R}\),
\[
\|\xi\|^2-\|J_\pm^\dagger\xi\|^2
=
\|(I-\Pi_\pm)\xi\|^2.
\]
Combining this observation with Lemma \ref{lem:completion-decompression} gives us 
\[
\begin{aligned}
\|\mathcal T\phi\|^2
&=\|J_-^\dagger W\zeta\|^2\\
&=\|W\zeta\|^2-\|(I-\Pi_-)W\zeta\|^2\\
&\ge
(1-\rho_N)\|\zeta\|^2
-4R\delta_R^2\|\zeta\|^2\\
&=
(1-\rho_N-4R\delta_R^2)\|\phi\|^2.
\end{aligned}
\tag{34}
\]
Together with \(\|\mathcal T\|\le1\), this proves Equation \ref{eq:transfer-norm} with
\[
\eta_R^{\mathrm{norm}}
:=
\rho_N+4R\delta_R^2.
\]

Next we will prove that $\|(O_-T-TO_+)\phi\|\le\varepsilon^{\text{query}}_R\|\phi\|.
$ Recall
$
\widetilde O_\pm
=
\mathrm{Dec}_\pm Z_\pm\mathrm{Dec}_\pm,
$ and let
$
\Delta:=\mathrm{Dec}_-W-W\mathrm{Dec}_+.
$
Lemma \ref{lem:completion-decompression} implies
\[
\begin{aligned}
\|\Delta\xi\|^2
&=
\sum_a
\|(\mathrm{Dec}_{-,a}W-W\mathrm{Dec}_{+,a})\xi_a\|^2\\
&\le
\delta_R^2\|\xi\|^2
\end{aligned}
\]
whenever the record size is at most \(R+6\).

Using \(Z_-W=WZ_+\),
\[
\begin{aligned}
\widetilde O_-W-W\widetilde O_+
&=
\mathrm{Dec}_-Z_-
\bigl(W\mathrm{Dec}_++\Delta\bigr)
-W\mathrm{Dec}_+Z_+\mathrm{Dec}_+\\
&=
\mathrm{Dec}_-Z_-\Delta
+\mathrm{Dec}_-WZ_+\mathrm{Dec}_+
-W\mathrm{Dec}_+Z_+\mathrm{Dec}_+\\
&=
\mathrm{Dec}_-Z_-\Delta
+\Delta Z_+\mathrm{Dec}_+.
\end{aligned}
\]
For input record size at most \(R\), both occurrences of \(\Delta\) act on records of size at most \(R+6\): \(\mathrm{Dec}_+\) increases the size by at most one, and \(Z_+\) changes \(S\) on at most four coordinates. Since the remaining factors are unitary,

\begin{equation}\label{eq:35}
\|(\widetilde O_-W-W\widetilde O_+)\zeta\|
\le
\|\Delta\zeta\|
+\|\Delta Z_+\mathrm{Dec}_+\zeta\|\\
\le2\delta_R\|\zeta\|.
\end{equation}

Now, for the  actual oracle queries, observe that

\begin{equation}\label{eq:36}
O_-\mathcal T-\mathcal TO_+
={}
(O_-J_-^\dagger-J_-^\dagger\widetilde O_-)\,WJ_+\\
+
J_-^\dagger(\widetilde O_-W-W\widetilde O_+)\,J_+\\
+
J_-^\dagger W(\widetilde O_+J_+-J_+O_+).
\end{equation}

Proposition \ref{prop:compressed-records} and Equation \ref{eq:35} imply
\[
\begin{aligned}
\|(O_-\mathcal T-\mathcal TO_+)\phi\|
\le
\bigl(
\operatorname{e}_{R+6}(s_-)
+2\delta_R
+\operatorname{e}_R(s_+)
\bigr)\|\phi\|.
\end{aligned}
\]
This proves Equation \ref{eq:transfer-query} with
\[
\eta_R^{\mathrm{query}}
:=
\operatorname{e}_{R+6}(s_-)
+2\delta_R
+\operatorname{e}_R(s_+).
\]
Finally, we will prove $\|P_{\mathrm{bad}}\mathcal T\phi\|^2\le\gamma^{\text{bad}}_R\|\mathcal T\phi\|^2.$

Apply Lemma \ref{lem:weighted-graph-failure} to \(W\zeta\):
\[
\begin{aligned}
\|P_{\mathrm{bad}}\mathcal T\phi\|^2
&=
\|P_{\mathrm{bad}}J_-^\dagger W\zeta\|^2\\
&\le
\eta_R^{\mathrm{bad}}\|J_-^\dagger W\zeta\|^2\\
&=
\eta_R^{\mathrm{bad}}\|\mathcal T\phi\|^2.
\end{aligned}
\]
This proves Equation \ref{eq:transfer-bad}, with
\[
\eta_R^{\mathrm{bad}}
=
2N^6(N+1)^Re^{-N/8192}.
\]
Finally, \(N=2^k\), \(R=\operatorname{poly}(k)\), \(s_+=N^6\), and \(s_-=\Omega(N^2)\). Thus
\[
\rho_N=e^{-\Omega(N)},
\qquad
\delta_R=O\!\left(\sqrt{\frac{R+1}{N}}\right),
\qquad
\operatorname{e}_r(s)=O\!\left(\frac{r+1}{\sqrt s}\right).
\]

Therefore all three error functions are negligible in \(k\), completing the proof.

\end{proof}

%% file: appendix.tex
\section{Properties of compressed records}
\label{app:cm-interface}

We prove the extensions of the CM results stated in
\cref{prop:cm-interface,prop:record-decoding}. Throughout this appendix,
\(J\) is the tensor product of the CM isometries and the sign Fourier
transform from \cref{subsec:recording}. Write \(\Pi_J:=JJ^\dagger\).
For the NO query and decoding estimates, assume
\(P_{\mathrm{size}}\xi=\xi\) on record inputs, or
\(P_{\mathrm{size}}\phi=\phi\) on physical inputs.
We apply those estimates for each \(\theta\in\Theta_{\mathrm{size}}\),
where every permutation domain has size at least \(s_-\), and then add
squared norms over \(\theta\).
All operators below retain \(\theta\), so they preserve these subspaces.
The identities and record-growth statements require no restriction on
\(\theta\).

\subsection{Well formed vectors and decoding}

For one permutation on a set of size \(s\), CM's image projector
preserves each table-size sector
\cite[Lemma~4.10, Equation~(38)]{CM26}. Their cancellation identities
\cite[Lemma~4.10, Equation~(39)]{CM26} and decoding estimate
\cite[Lemma~4.11, Equation~(45)]{CM26} use normalized sums of coefficients
obtained by removing an input or output. These are precisely our operators
\((\id-P_a)\mathrm{Dec}_aP_a\): for a forward address \(x\) and an \(r\)-entry
table \(D\),
\[
 (\id-P_x)\mathrm{Dec}_xP_x\ket D
 =\begin{cases}
 (s-r+1)^{-1/2}\ket{D\setminus\{x\mapsto y\}},&x\mapsto y\in D,\\
 0,&x\notin\operatorname{dom}(D).
 \end{cases}
\]
Indeed, the coefficient of the smaller table in \(\mathrm{Dec}_x\ket D\) is its
inner product with the normalized sum of fresh extensions. The final
projection removes every component that still contains \(x\).
Exchanging inputs and outputs proves the inverse-address identity.
The denominator agrees with
\cite[Equations~(42)--(43)]{CM26}. Thus their conclusions are exactly
our cancellation identity and decoding bound for one permutation.
The constant-two bound requires \(r\le(s+1)/3\), which follows from
our size assumption. At \(r=0\), the empty table is well formed and all
terms vanish.

To pass to several permutations, fix every table size and the sign
subset. Write \(\Pi_j\) for the projector onto the image of the compression
map for factor \(j\), restricted to that sector.
These projections act on distinct factors, and
\[
 \id-\bigotimes_j\Pi_j\preceq\sum_j(\id-\Pi_j).
\]
On a common eigenvector, the left side is one if any projection has
eigenvalue zero; the right side counts those zero eigenvalues. This
proves the operator inequality. Apply the single-factor estimate to
each term and add. Distinct size sectors have orthogonal outputs under
each fixed operator \((\id-P_a)\mathrm{Dec}_aP_a\), because it removes one entry
from the specified factor. We can therefore add the inequalities over
all size sectors and sign subsets. This proves
\eqref{eq:recording-decoding-loss}, since
\(\|J^\dagger\xi\|^2=\|\Pi_J\xi\|^2\).
Cancellation follows by applying the single-factor identity to each
factor. The same reasoning proves that \(\Pi_J\) preserves the sizes
of all tables and the sign subset.

Each estimate remains valid with additional registers: expand in an
orthonormal basis of those registers and add squared norms. For vectors
in the NO record space, apply the decoding estimate for each
\(\theta\in\Theta_{\mathrm{size}}\), then add over the orthogonal
\(\theta\) register. The cancellation identity holds for each \(\theta\).
For fixed \(\theta\), the physical addresses enumerate each independent
permutation's inputs and outputs exactly once.
This proves \cref{prop:record-decoding}.

\subsection{Signed queries and the adjoint estimate}
\label{app:query-interface}

\paragraph{The imported permutation query.}
Fix a forward input \(x\) for a permutation on a set of size \(s\).
A helper register has basis \(\ket\perp\) and \(\ket y\), where \(y\)
ranges over possible answers. Let \(\tau_y\) exchange \(\ket\perp\)
and \(\ket y\). The operator \(Q_x\) applies
\(\tau_{\pi(x)}\). Let \(L_x\) instead apply \(\tau_{D(x)}\) when
\(x\in\operatorname{dom}(D)\), and the identity otherwise. Define
\(\widetilde Q_x:=\mathrm{Dec}_xL_x\mathrm{Dec}_x\).

For every parameter-register vector \(\phi\) whose encoding \(J\phi\)
has at most \(r\) entries in this table,
\cite[Theorem~4.2, Equation~(12)]{CM26} gives
\begin{equation}
 \|(\widetilde Q_xJ-JQ_x)\phi\|
 \le b(r,s)\|\phi\|,
 \qquad
 b(r,s)=\frac{60}{\sqrt{s-r}}+\frac{r+1}{s-2r-1},
 \label{eq:CM-swap-bound}
\end{equation}
provided \(s\ge3\) and \(3r+2\le s\).
The estimate allows arbitrary reference registers and inverse queries.
The operator \(Q_x\) increases the encoded table size by at most one
\cite[Lemma~4.5, Equation~(16)]{CM26}.

\paragraph{The full signed answer.}
Write \(O_x,C_x,Z_x\) for \(O,C,Z\) at the fixed address under consideration.
Given address \(x\) and helper value \(y\), let \(M_x\) XOR the
endpoint \(y\) and the edge-sign bit into the answer register.
It leaves the address, helper, and oracle parameters unchanged, and acts
as the identity when the helper is \(\ket\perp\).
It uses only the two endpoints, their four coordinate signs, and, in
the NO case, \(\theta\). In the compressed representation it has a version
\(\widetilde M_x\) that toggles the corresponding sign subset. Hence
\[
 \widetilde M_xJ=JM_x,
 \qquad \widetilde M_x\mathrm{Dec}_x=\mathrm{Dec}_x\widetilde M_x.
\]
The first equality follows from the sign Fourier transform. The second
holds because \(\mathrm{Dec}_x\) acts only on the table and leaves the helper and
\(\theta\) unchanged.

With helper state \(\ket\perp\), the first \(Q_x\) writes
\(\pi(x)\) into the helper, \(M_x\) writes the signed answer, and
the second \(Q_x\) restores \(\ket\perp\). Thus \(Q_xM_xQ_x\)
implements \(O_x\).
On vectors in the record space, the preceding commutation identity gives
\[
 \widetilde Q_x\widetilde M_x\widetilde Q_x
 =\mathrm{Dec}_xL_x\widetilde M_xL_x\mathrm{Dec}_x=\mathrm{Dec}_xZ_x\mathrm{Dec}_x
\]
on a blank helper. The middle operation reads the stored entry and
restores the helper; it is the identity if the entry is absent. This
proves the second equality and identifies the operation with our
specific \(C_x\).

Insert \(J\) between the two swap calls. Subtracting the two routines
gives
\[
 \begin{aligned}
 &\widetilde Q_x\widetilde M_x\widetilde Q_xJ-JQ_xM_xQ_x\\
 &\quad=\widetilde Q_x\widetilde M_x(\widetilde Q_xJ-JQ_x)
       +(\widetilde Q_xJ-JQ_x)M_xQ_x.
 \end{aligned}
\]
If \(J\phi\) has record size at most \(r\), then \(JM_xQ_x\phi\)
has size at most \(r+5\): the swap adds at most one entry, and the
sign read toggles at most four coordinates. Equation~\eqref{eq:CM-swap-bound}
and the triangle inequality therefore bound the norm of this difference
by \((b(r,s)+b(r+5,s))\|\phi\|\). For sufficiently large absolute
\(C_0,C_{\rm rec}\), if \(s\ge C_0(r+7)^2\), this is at most
\(e_r(s)\|\phi\|\).
The same construction shows that \(JO_x\phi\) has record size at most \(r+6\).
For \(C_x=\mathrm{Dec}_xZ_x\mathrm{Dec}_x\), each \(\mathrm{Dec}_x\) changes table size by at most
one and \(Z_x\) toggles at most four coordinates, so its growth is also at
most six.

The registers of unqueried permutations are reference registers.
For a superposition of addresses, square the bound for each address and
sum over the orthogonal address register. Taking the maximum gives the
same estimate with \(s\) equal to a lower bound on the permutation
domain sizes, with no dependence on the number of permutations.
This is the argument of \cite[Corollary~4.17]{CM26}; it also permits
unequal domain sizes.
For NO query estimates, first fix \(\theta\in\Theta_{\mathrm{size}}\)
and then sum squared norms over \(\theta\). These observations prove
\eqref{eq:recording-query-forward} and both record-growth statements.

\paragraph{The adjoint on arbitrary vectors in the record space.}
Let \(Q_{\le r}\) project onto records of size at most \(r\), and put
\(\Delta=C\Pi_J-JOJ^\dagger\).
As proved above from \cite[Lemma~4.10]{CM26}, \(\Pi_J\) preserves each
table size and the sign database, and hence total record size.
The bound just proved for \(CJ-JO\) therefore gives
\(\|\Delta Q_{\le r}\|\le e_r(s)\) whenever
\(s\ge C_0(r+7)^2\).
For NO vectors, all operator norms in this argument are restricted to
\(\operatorname{im}P_{\mathrm{size}}\).
The XOR queries are Hermitian, so
\(\Delta^\dagger=\Pi_JC-JOJ^\dagger\).
Both terms send records of size at most \(R\) to well formed vectors of
record size at most \(R+6\). Consequently,
\[
 \begin{aligned}
 \Delta^\dagger Q_{\le R}
 &=Q_{\le R+6}\Delta^\dagger Q_{\le R},\\
 \|\Delta^\dagger Q_{\le R}\|
 &=\|Q_{\le R}\Delta Q_{\le R+6}\|
 \le e_{R+6}(s).
 \end{aligned}
\]
The bound here requires only \(s\ge C_0(R+13)^2\), as stated in the
proposition.
Finally, \(J^\dagger\) preserves the norm on the range of
\(\Delta^\dagger\), and
\(J^\dagger\Delta^\dagger=J^\dagger C-OJ^\dagger\).
This proves \eqref{eq:recording-query-adjoint} for every vector of record
size at most \(R\), whether or not it is well formed.

\section{Diagonalization argument}
\label{app:diagonalization}
\label{app:diag}

We use the query lower bound to construct the oracle, allowing us to
prove \cref{thm:main}. The argument is standard: we enumerate all
polynomial-time QMA oracle verifiers and defeat them one at a time.
At stage $i$, we choose a fresh input length $n_i$ and use
\cref{thm:query-lower-bound} to choose the $n_i$-th oracle component
on which the $i$-th verifier fails.

\begin{proof}[Proof of \cref{thm:main}]
We construct an input-length-indexed signed-permutation oracle
$$
O=(O^{(n)})_{n\ge 1},
$$
where $O^{(n)}$ is an oracle of the form in \cref{eq:oracle} for the
length-$n$ instance of \cref{problem:sign-matching}, with
\(k=n\) and \(\ell=3n\). As discussed
after \cref{eq:oracle}, we encode this as a Boolean oracle $A$ in the
standard way: the Boolean oracle takes as input $n$, a query
$(p,z,v)$ to $O^{(n)}$, and an index specifying one bit of the
answer, and returns that bit. We choose the encoding so that a query
to the $n$-th component has length at least $n$. The two query models
are polynomially equivalent.
We construct the components $O^{(n)}$ together with a unary promise
problem $L^A$. Enumerate all polynomial-time QMA oracle verifiers
$
V_1,V_2,\ldots
$
together with explicit polynomial bounds on their running time. For
each $V_i$, let $m_i(n)$ and $q_i(n)$ be polynomial bounds on its
witness length and number of Boolean oracle queries on input $1^n$,
respectively. We also let $a_i(n)$ be a polynomial bound on the
largest oracle component that $V_i$ can query on input $1^n$. Such a
bound exists because a query to the $r$-th oracle component has
length at least $r$.

We construct the oracle in stages. After stage $i$, we maintain a
number $B_i$ such that every oracle component of index at most $B_i$
has been permanently fixed. Initially, set
$
B_0=0.
$
Suppose stages $1,\ldots,i-1$ have been completed. We first choose a
fresh length
$
n_i>B_{i-1}.
$
Since $m_i(n)$ and $q_i(n)$ are polynomially bounded, we may choose
$n_i$ sufficiently large that \cref{thm:query-lower-bound} applies
to $V_i$ on input $1^{n_i}$. Here we use the polynomial equivalence
between Boolean queries and queries to \cref{eq:oracle}.

We now fix all other oracle components that $V_i$ could access on
this input. Set
$$
B_i:=\max\{B_{i-1},a_i(n_i),n_i\}.
$$
For every still-undefined component of index at most $B_i$, except
for $n_i$ itself, set
$$
\pi_p=\mathbb I,
\qquad
\sigma_p\equiv +1
$$
for every $p$. For this choice, $H=0$, so the component is a YES
instance of \cref{problem:sign-matching}.

At this point, every oracle component that $V_i$ can query on input
$1^{n_i}$ has been fixed except $O^{(n_i)}$.

Thus, with all other queried oracle components fixed, $V_i$ on input
$1^{n_i}$ becomes a QMA verifier for the length-$n_i$ instance of
\cref{problem:sign-matching}. It remains only to choose
$O^{(n_i)}$. By our choice of $n_i$ and
\cref{thm:query-lower-bound}, $V_i$ cannot satisfy completeness
$2/3$ on every instance in
$
\operatorname{supp}(\mathcal D_+)
$
and soundness $1/3$ on every good instance in
$
\operatorname{supp}(\mathcal D_-).
$
Hence there exists an instance in one of these two families on which
$V_i$ fails. Fix $O^{(n_i)}$ to be such an instance, and define
$$
1^{n_i}\in L^A
\quad\Longleftrightarrow\quad
O^{(n_i)}\in\operatorname{supp}(\mathcal D_+).
$$

This completes the construction at the $i$-th stage. Note that every
oracle component of index at most $B_i$ is now permanently fixed.
Since
$$
n_i>B_{i-1}
\qquad\text{and}\qquad
B_i\ge n_i,
$$
we have $B_i\to\infty$. Thus every oracle component is eventually
fixed. Since $V_i$ cannot query any component of index larger than
$$
a_i(n_i)\le B_i
$$
on input $1^{n_i}$, no later stage can change its computation on
this input. Thus $V_i$ remains incorrect on $1^{n_i}$ in the final
oracle.

At every input length $n$ that is not equal to any $n_i$, the
component $O^{(n)}$ is the trivial YES instance above, and we declare
$$
1^n\in L^A.
$$

Repeating the construction for every $i$ defines the Boolean oracle
$A$ and the unary promise problem $L^A$. By construction, for every
$i$, the verifier $V_i$ fails to decide $L^A$ on input $1^{n_i}$.
Since every polynomial-time QMA oracle verifier appears in the
enumeration,
$
L^A\notin\QMA^A.
$

It remains to show that
$
L^A\in\QMA(2)^A.
$
There is a single uniform verifier. On input $1^n$, the verifier
uses the $n$-th oracle component and runs the QMA(2) protocol of
\cref{prop:qma2-upper}.

If
$
1^n\in L^A,
$
then either $O^{(n)}$ is a hard-YES instance or it is the trivial
YES instance. In either case,
$
\lambda_{\mathrm{sep}}(H)=0,
$
where for the hard-YES family this follows from
\cref{clm:hard_yes:yes}.

If
$
1^n\notin L^A,
$
then $n=n_i$ for some $i$ and $O^{(n)}$ is a good hard-NO instance.
By \cref{proposition_goodNO},
$
\lambda_{\mathrm{sep}}(H)\ge\frac14.
$

Therefore the verifier of \cref{prop:qma2-upper}, followed by the
amplification stated there, decides $L^A$ with completeness at least
$2/3$ and soundness at most $1/3$. The standard Boolean encoding of
\cref{eq:oracle} incurs only polynomial query overhead. Hence
$
L^A\in\QMA(2)^A.
$

Combining the two conclusions,
$$
L^A\in\QMA(2)^A\setminus\QMA^A,
$$
and therefore
$$
\QMA^A\neq\QMA(2)^A.
$$
\end{proof}